\pdfoutput=1
\documentclass[1p,number,a4paper]{elsarticle}
\setlength{\footskip}{40pt}

\usepackage{amsfonts}
\usepackage{amsmath}
\usepackage{amssymb}
\usepackage{amsthm}
\usepackage[english]{babel}
\usepackage{complexity}
\usepackage{enumerate}
\usepackage{mathrsfs}
\usepackage{graphics}
\usepackage[pagewise]{lineno}
\usepackage{mathtools}
\usepackage{microtype}
\usepackage[defaultlines=4,all]{nowidow}
\usepackage[many]{tcolorbox}
\usepackage{todonotes}
\usepackage{xcolor}
\usepackage{xspace}

\usepackage[hidelinks]{hyperref}
\usepackage[nameinlink]{cleveref}

\usetikzlibrary{arrows, automata, fit, shapes, arrows.meta}

\tcolorboxenvironment{theorem}{
    colframe=cyan, sharp corners=west, 
    colback=cyan!10,
	rightrule=0pt,
	toprule=0pt,
	bottomrule=0pt,
	top=0pt,
	right=0pt,
	bottom=0pt,
 }
\tcolorboxenvironment{corollary}{
    colframe=blue, sharp corners=west, 
    colback=blue!10,
	rightrule=0pt,
	toprule=0pt,
	bottomrule=0pt,
	top=0pt,
	right=0pt,
	bottom=0pt,
 }
\tcolorboxenvironment{lemma}{
    colframe=red, sharp corners=west,
	colback=red!10,
	rightrule=0pt,
	toprule=0pt,
	bottomrule=0pt,
	top=0pt,
	right=0pt,
	bottom=0pt,
 }

\renewcommand{\epsilon}{\varepsilon}
\renewcommand{\phi}{\varphi}

\newcommand{\ie}{i.\,e.\xspace}
\newcommand{\eg}{e.\,g.\xspace}

\newcommand{\Amc}{\ensuremath{\mathcal{A}}\xspace}
\newcommand{\Bmc}{\ensuremath{\mathcal{B}}\xspace}
\newcommand{\Emc}{\ensuremath{\mathcal{E}}\xspace}
\newcommand{\Imc}{\ensuremath{\mathcal{I}}\xspace}
\newcommand{\Kmc}{\ensuremath{\mathcal{K}}\xspace}
\newcommand{\Lmc}{\ensuremath{\mathcal{L}}\xspace}
\newcommand{\Mmc}{\ensuremath{\mathcal{M}}\xspace}
\newcommand{\Pmc}{\ensuremath{\mathcal{P}}\xspace}

\newcommand{\Nbb}{\ensuremath{\mathbb{N}}\xspace}
\newcommand{\Zbb}{\ensuremath{\mathbb{Z}}\xspace}

\newcommand{\ebf}{\ensuremath{\mathbf{e}}\xspace}
\newcommand{\ubf}{\ensuremath{\mathbf{u}}\xspace}
\newcommand{\vbf}{\ensuremath{\mathbf{v}}\xspace}

\newcommand{\0}{\ensuremath{\mathbf{0}}\xspace}

\newcommand{\LPWA}{\ensuremath{\Lmc_\mathsf{PA}}\xspace}
\newcommand{\LPA}{\LPWA}
\newcommand{\LPPBA}{\ensuremath{\Lmc_\mathsf{Prefix}}\xspace}

\newcommand{\LReset}{\ensuremath{\Lmc_\mathsf{Reset}}\xspace}

\newcommand{\LPrefix}{\LPPBA}
\newcommand{\LPAomega}{\ensuremath{\Lmc_\mathsf{PA}^\omega}\xspace}

\theoremstyle{remark}
\newtheorem{theorem}{Theorem}[section]
\newtheorem{corollary}{Corollary}[section]
\newtheorem{definition}{Definition}[section]
\newtheorem{lemma}{Lemma}[section]

\newtheorem{remark}{Remark}[section]
\newtheorem{observation}{Observation}[section]

\newtheorem{example}{Example}
\crefname{corollary}{Corollary}{Corollaries}
\crefname{lemma}{Lemma}{Lemmas}
\crefname{section}{Section}{Sections}

\journal{arXiv}
\begin{document}
\begin{frontmatter}

\title{Parikh Automata on Infinite Words}
\author{Mario Grobler}
\ead{grobler@uni-bremen.de}
\address{University of Bremen, Bremen, Germany}
\author{Leif Sabellek} 
\ead{sabellek@uni-bremen.de}
\address{University of Bremen, Bremen, Germany}
\author{Sebastian Siebertz}\address{University of Bremen, Bremen, Germany}\ead{siebertz@uni-bremen.de}
\begin{keyword}
  Automata theory, Parikh automata, infinite words, epsilon-elimination
\end{keyword}

\begin{abstract}
Parikh automata on finite words were first introduced by Klaedtke and Rueß [Automata, Languages and Programming, 2003]. In this paper, we introduce several variants of Parikh automata on infinite words and study their expressiveness. We show that one of our new models is equivalent to synchronous blind counter machines introduced by Fernau and Stiebe [Fundamenta Informaticae, 2008]. All our models admit $\epsilon$-elimination, which to the best of our knowledge is an open question for blind counter automata. We then study the classical decision problems of the new automata models. 
\end{abstract}
\end{frontmatter}

\tableofcontents

\section{Introduction}

\noindent
Parikh automata on finite words (PA), originally introduced by Klaedtke and Rueß in~\cite{klaedtkeruess}, are finite automata enriched with counters. A PA is a non-deterministic finite automaton that is additionally equipped with a semi-linear set~$C$. Furthermore, every transition is equipped with a $d$-tuple of non-negative integers and every time a transition is used, the counters are incremented by the values in the tuple accordingly. An input word is accepted if the PA ends in a final state and additionally, the resulting $d$-tuple in the counters lies in~$C$. 
The class of languages recognized by PA contains all regular languages, but also many more, even languages that are not context-free, \eg, the language $\{a^nb^nc^n \mid n \in \mathbb{N}\}$. On the other hand, the language of palindromes is context-free, but cannot be recognized by a PA.

On finite words, Parikh automata have been investigated extensively. As shown in~\citep{klaedtkeruess}, the class of languages recognized by PA is captured precisely by weak existential monadic second-order logic of one successor extended with linear cardinality constraints. Cadilhac, Finkel and McKenzie introduced the variant of affine Parikh automata \cite{DBLP:journals/ita/CadilhacFM12} and used Parikh automata to characterize bounded languages \cite{DBLP:journals/ijfcs/CadilhacFM12}. Two-way PA were studied in~\cite{DBLP:conf/fossacs/DartoisFT19} and PA with a visibly pushdown stack were studied in~\cite{FiliotGM19}. 

In this paper, we initiate the research on Parikh automata on infinite words. This research direction was proposed by Klaedtke and Rueß in the conclusion of their work~\cite{klaedtkeruess}. Automata on infinite words play an important role in logics and formal verification, see e.g.\ the textbook~\cite{thomas2002automata}. 
For example, a non-terminating system that should distribute resources among three consumers equally could be modelled using the language $\{a^nb^nc^n \mid n \in \mathbb{N}\}^\omega$, which is not $\omega$-regular and not even $\omega$-context-free. By Büchi's theorem every regular $\omega$-language can be characterized as the finite union of $U_iV_i^\omega$, where $U_i,V_i$ are regular languages. 
A natural generalization is the class $\LPAomega$ of languages that can be characterized as the union of $U_iV_i^\omega$, where $U_i,V_i$ are Parikh-recognizable languages. This class was also mentioned by Fernau and Stiebe under the name $\Kmc_*$ in~\cite{blindcounter}. 

We suggest three different definitions for extending Büchi automata \cite{buechi} with a Parikh condition, which we call \emph{Parikh-Büchi-automata} (PBA).
All models are syntactically equal to PA, and they only differ in their semantics. 
We consider \emph{PBA with prefix-acceptance condition} (PPBA), \emph{PBA with strong reset-acceptance condition} (SPBA), and \emph{PBA with weak reset-acceptance condition} (WPBA). Furthermore, we consider their variants with $\epsilon$-transitions and variants where every accepting state is equipped with its own semi-linear~set. 

A PPBA accepts its input if there is a run that satisfies the acceptance condition for infinitely many prefixes of the word, \ie, it happens infinitely often that the automaton is in an accepting state and the current sum of the vectors lies in the semi-linear set of the automaton.
This model was proposed by Klaedtke and Rueß~\cite{klaedtkeruess}. We write $\LPrefix$ for the class of PPBA-recognizable languages. 
We prove that PPBA are equivalent to synchronous blind counter machines introduced by Fernau and Stiebe~\cite{blindcounter}. 
The main difficulty in this result is to eliminate $\epsilon$-transitions in PPBAs, which turns out to be surprisingly challenging. 
By this equivalence, PPBA-recognizable languages are closed under union, but not under intersection or complement. 
Furthermore, if $L$ is a Parikh language, then~$L^\omega$ is not necessarily PPBA-recognizable, \eg, the language $\{a^nb^n \mid n \in \mathbb{N}\}^\omega$ is not PPBA-recognizable. 

This last weakness motivates the definition of PBA-acceptance with the ability to reset the values of all counters, which leads to the definitions of SPBA and WPBA. 
An SPBA is defined like a PPBA, but the current sum of the vectors \emph{must} lie in the semi-linear set whenever an accepting state is visited. 
After the visit of an accepting state all counters are reset to zero. Similarly, a WPBA is not forced to, but \emph{may} reset in an accepting state. Both automata accept if they reset infinitely often. We prove that both models have the same expressiveness and we
write $\LReset$ for the class of SPBA/WPBA-recongizable $\omega$-languages. 
One of our main results is that PBA with a reset condition are strictly more expressive than PPBA by showing that $\LPAomega \subsetneq \LReset$. Together with the known results for blind counter automata we get 
\[\LPrefix \subsetneq \LPAomega \subsetneq \LReset.\]

If $L$ is a Parikh language, then $L^\omega$ is SPBA/WPBA-recognizable, hence, they do not have the weakness of PPBA mentioned above.

Finally, we study the classical decision problems for PPBA and SPBA/WPBA. We show that for all models, emptiness is $\coNP$-complete and universality (and hence, inclusion and equivalence) are undecidable. 

\medskip
\noindent
\textbf{Organization. }
After giving the necessary background in \Cref{sec:prelims}, we give our definitions of Parikh automata on infinite words and make some simple observations in \Cref{sec:definition}. In \Cref{sec:epsilon-elimination} we show how to elimination $\epsilon$-transitions. In \Cref{sec:blind-counter} we prove the equivalence of PPBA and blind counter automata and discuss some implications. In \Cref{sec:characterization}, we prove $\LPAomega \subsetneq \LReset$. In \Cref{sec:decision}, we study the decision problems for PPBA and SPBA/WPBA. We conclude in \Cref{sec:conclusion}.




\section{Preliminaries}
\label{sec:prelims}

\subsection{Finite and infinite words}
We write $\Nbb$ for the set of non-negative integers including $0$, and $\Zbb$ for the set of all integers. Let $\Sigma$ be an alphabet, \ie, a finite non-empty set and let $\Sigma^*$ be the set of all finite words over $\Sigma$. 
For a word $w \in \Sigma^*$, we denote by $|w|$ the length of $w$, and by $|w|_a$ the number of occurrences of the letter $a \in \Sigma$ in $w$. 
We write $\varepsilon$ for the empty word of length~$0$ and $\Sigma^+ = \Sigma^* \setminus \{\varepsilon\}$ for the set of all non-empty finite words over $\Sigma$.

An \emph{infinite word} over an alphabet $\Sigma$ is a function $\alpha : \Nbb \setminus \{0\} \rightarrow \Sigma$. We often write~$\alpha_i$ instead of~$\alpha(i)$. 
Thus, we can understand an infinite word as an infinite sequence of symbols $\alpha = \alpha_1\alpha_2\alpha_3\ldots$ For $m \leq n$, we abbreviate the finite infix $\alpha_m \ldots \alpha_n$ by $\alpha[m,n]$. 
We denote by $\Sigma^\omega$ the set of all infinite words over $\Sigma$. 
We call a subset $L \subseteq \Sigma^\omega$ an \emph{$\omega$-language}. 
Moreover, for $L \subseteq \Sigma^*$, we define $L^\omega = \{w_1w_2\dots \mid w_i \in L \setminus \{\varepsilon\}\} \subseteq \Sigma^\omega$.

\subsection{Regular and $\omega$-regular languages}

A \emph{Nondeterministic Finite Automaton} (NFA) is a tuple $\Amc = (Q, \Sigma, q_0, \Delta, F)$, where~$Q$ is the finite set of states, $\Sigma$ is the input alphabet, $q_0 \in Q$ is the initial state, $\Delta \subseteq Q \times \Sigma \times Q$ is the set of transitions and $F \subseteq Q$ is the set of accepting states. 
A \emph{run} of $\Amc$ on a word $w = w_1 \cdots w_n\in \Sigma^*$ is a (possibly empty) sequence of transitions $r = r_1 \dots r_n$ with $r_i = (p_{i-1}, w_i, p_i)\in \Delta$ such that $p_0=q_0$. 
We say $r$ is \emph{accepting} if $p_n \in F$. The empty run on~$\epsilon$ is accepting if $q_0 \in F$. We define the \emph{language recognized by \Amc} as $L(\Amc) = \{w \in \Sigma^* \mid \text{there is an accepting run of $\Amc$ on $w$}\}$. If a language $L$ is recognized by some NFA $\Amc$, we call $L$ \emph{regular}.

A \emph{Büchi Automaton} (BA) is an NFA $\Amc = (Q, \Sigma, q_0, \Delta, F)$ that takes infinite words as input. 
A \emph{run} of $\Amc$ on an infinite word $\alpha_1\alpha_2\alpha_3\dots$ is an infinite sequence of transitions $r = r_1 r_2 r_3 \dots$ with $r_i = (p_{i-1}, \alpha_i, p_i) \in \Delta$ such that $p_0=q_0$. 
We say $r$ is \emph{accepting} if there are infinitely many~$i$ such that $p_i \in F$. 
We define the \emph{$\omega$-language recognized by~$\Amc$} as $L_\omega(\Amc) = \{\alpha \in \Sigma^\omega \mid \text{there is an accepting run of $\Amc$ on $\alpha$}\}$. 
If an $\omega$-language $L$ is recognized by some BA $\Amc$, we call $L$ \emph{$\omega$-regular}. Büchi's theorem establishes an important connection between regular and $\omega$-regular languages:
\begin{theorem}[Büchi]
\label{thm:buechi}
A language $L \subseteq \Sigma^\omega$ is $\omega$-regular if and only if there are regular languages $U_1, V_1, \dots, U_n, V_n \subseteq \Sigma^*$ for some $n \geq 1$ such that $L = U_1V_1^\omega \cup \dots \cup U_nV_n^\omega$.
\end{theorem}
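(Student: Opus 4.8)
The statement is Büchi's theorem, and the plan is to prove the two implications separately. The harder direction is to extract a decomposition $U_1V_1^\omega \cup \dots \cup U_nV_n^\omega$ from a given Büchi automaton; the direction starting from such a decomposition is routine closure work.

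For the routine direction, suppose $L = U_1V_1^\omega \cup \dots \cup U_nV_n^\omega$ with all $U_i, V_i$ regular. Since the class of $\omega$-regular languages is closed under finite union---take the disjoint union of the component Büchi automata and add a fresh initial state that mimics all the old initial transitions---it suffices to show that $UV^\omega$ is $\omega$-regular whenever $U, V \subseteq \Sigma^*$ are regular. First I would build a BA for $V^\omega$: starting from an NFA $\Bmc$ for $V$ with initial state $s$, I redirect every transition that would enter an accepting state of $\Bmc$ so that it instead re-enters $s$, and declare $s$ the single accepting state. A run then returns to $s$ infinitely often exactly when the input factors into infinitely many nonempty words of $V$, which matches our definition of $V^\omega$; note that $V^\omega$ forbids $\varepsilon$-factors, so we may replace $V$ by $V \setminus \{\varepsilon\}$ and need no special care for empty loops. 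Prepending $U$ is then the standard concatenation of a regular language with an $\omega$-regular one: glue an NFA for $U$ in front of the BA for $V^\omega$, passing control to $s$ at the accepting states of the NFA. Closing under union finishes this direction.

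For the other direction, let $\Amc = (Q, \Sigma, q_0, \Delta, F)$ be a BA with $L_\omega(\Amc) = L$. For states $p, q \in Q$, let $W_{p,q}$ be the set of finite words admitting a run of $\Amc$ from $p$ to $q$; each $W_{p,q}$ is regular, being recognized by the NFA obtained from $\Amc$ by taking $p$ as initial state and $q$ as the unique accepting state. Let $W'_{q,q} = W_{q,q} \setminus \{\varepsilon\}$ be its nonempty fragment, which is also regular. I claim that
\[ L = \bigcup_{q \in F} W_{q_0,q}\,(W'_{q,q})^\omega. \]
The inclusion from right to left is immediate: any word of the displayed form is read along a run that starts in $q_0$, reaches $q$, and returns to $q$ infinitely often, hence is accepting.

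The main obstacle is the reverse inclusion, namely decomposing an arbitrary accepting run. Given an accepting run $r$ on some $\alpha \in L$, the run visits $F$ infinitely often; since $F$ is finite, by the pigeonhole principle some single state $q \in F$ occurs at infinitely many positions $i_0 < i_1 < i_2 < \dots$ along $r$. Cutting $\alpha$ at these positions writes $\alpha = \alpha[1,i_0]\,\alpha[i_0+1,i_1]\,\alpha[i_1+1,i_2]\cdots$, where the prefix lies in $W_{q_0,q}$ and each subsequent nonempty factor lies in $W'_{q,q}$, because the corresponding infix of $r$ runs from $q$ back to $q$ using at least one transition. Hence $\alpha \in W_{q_0,q}(W'_{q,q})^\omega$, which proves the claim and exhibits $L$ in the desired form with the union ranging over the accepting states, so that $n = |F|$ suffices.
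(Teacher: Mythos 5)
The paper states this theorem as a classical result without proof, so there is no in-paper argument to compare against; I will judge your proposal on its own. Your treatment of the hard direction is the standard one and is correct: the languages $W_{p,q}$ are regular, the pigeonhole argument extracts a single accepting state $q$ recurring at positions $i_0<i_1<\dots$, and cutting the run there yields $\alpha\in W_{q_0,q}(W'_{q,q})^\omega$, with the converse inclusion obtained by concatenating the finite runs. The gap is in the direction you call routine, namely your construction of a Büchi automaton for $V^\omega$. You \emph{redirect} every transition of an NFA $\Bmc$ for $V$ that enters an accepting state so that it enters the initial state $s$ instead. This destroys every run of $\Bmc$ that passes \emph{through} an accepting state while still in the middle of reading a word of $V$, and such runs can be indispensable. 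Concretely, let $V=\{a,ab\}$ be recognized by the NFA with transitions $s\xrightarrow{a}f_1\xrightarrow{b}f_2$ and accepting states $f_1,f_2$. Redirecting replaces these by $(s,a,s)$ and $(f_1,b,s)$, which makes $f_1$ unreachable; the resulting automaton accepts only $a^\omega$, yet $(ab)^\omega\in V^\omega$. So your claim that a run returns to $s$ infinitely often ``exactly when'' the input factors into nonempty words of $V$ fails in the right-to-left direction.

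The repair is standard: first ensure $s$ has no incoming transitions (introduce a fresh copy of the initial state if needed), then \emph{add} a transition $(p,a,s)$ for every transition $(p,a,f)$ with $f$ accepting while keeping all original transitions, and declare $s$ the unique accepting state. The automaton then nondeterministically guesses whether the current letter ends a factor; every visit to $s$ after time $0$ occurs via an added transition and hence certifies a completed word of $V$ (nonempty, since each transition consumes a letter), and conversely any factorization of the input into words of $V\setminus\{\varepsilon\}$ is traced by a run hitting $s$ infinitely often. The same caveat applies to your concatenation step: control should be passed from the NFA for $U$ to $s$ by \emph{adding} the corresponding transitions, not by rerouting the existing ones. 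With these repairs the rest of your argument, including closure under finite union, goes through.
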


%

\subsection{Semi-linear sets}
A \emph{linear set} of dimension $d$ for $d \geq 1$ is a set of the form $\{b_0 + b_1z_1 + \dots + b_\ell z_\ell \mid z_1, \dots, z_\ell \in \Nbb\} \subseteq \Nbb^d$ for $b_0,\ldots, b_\ell\in \Nbb^d$.
A \emph{semi-linear set} is the finite union of linear sets.
For vectors $\ubf = (u_1, \dots, u_c)\in \Nbb^c, \vbf = (v_1, \dots, v_d) \in \Nbb^d$, we denote by $\ubf \cdot \vbf = (u_1, \dots, u_c, v_1, \dots, v_d) \in \Nbb^{c+d}$ the \emph{concatenation of $\ubf$ and $\vbf$}. 
We extend this definition to sets of vectors. Let $U \subseteq \Nbb^c$ and $V \subseteq \Nbb^d$. Then $U \cdot V = \{\ubf \cdot \vbf \mid \ubf \in U, \vbf \in V\} \subseteq \Nbb^{c+d}$. 
We denote by $\0^d$ (or simply $\0$ if $d$ is clear from the context) the all-zero vector, and by $\ebf^d_i$ (or simply $\ebf_i)$ the $d$-dimensional vector where the $i$th entry is~$1$ and all other entries are 0.

\subsection{Parikh-recognizable languages}

A \emph{Parikh Automaton} (PA) is a tuple $\Amc = (Q, \Sigma, q_0, \Delta, F, C)$ where $Q$, $\Sigma$, $q_0$, and~$F$ are defined as for NFA, $\Delta \subseteq Q \times \Sigma \times \Nbb^d \times Q$ is the set of \emph{labeled transitions}, and $C \subseteq \Nbb^d$ is a semi-linear set. 
We call $d$ the \emph{dimension} of $\Amc$ and refer to the entries of a vector $\vbf$ in a transition $(p, a, \vbf, q)$ as \emph{counters}.
Similar to NFA, a \emph{run} of $\Amc$ on a word $w = x_1 \dots x_n$ is a (possibly empty) sequence of labeled transitions $r = r_1 \dots r_n$ with $r_i = (p_{i-1}, x_i, \vbf_i, p_i) \in \Delta$ such that $p_0 = q_0$. We define the \emph{extended Parikh image} of a run $r$ as $\rho(r) = \sum_{i \leq n} \vbf_i$ (with the convention that the empty sum equals $\0$).
We say $r$ is accepting if $p_n \in F$ and $\rho(r) \in C$,
referring to the latter condition as the \emph{Parikh condition}. 
We define the \emph{language recognized by \Amc} as $L(\Amc) = \{w \in \Sigma^* \mid \text{there is an accepting run of $\Amc$ on $w$}\}$. 
If a language $L\subseteq \Sigma^*$ is recognized by some PA, then we call $L$ \emph{Parikh-recognizable}. We write $\LPA$ to denote the class of all Parikh-recognizable languages.

\section{Parikh automata on infinite words}
\label{sec:definition}

In this section we introduce our models of Parikh automata on infinite words and make some simple observations. 

\subsection{Definitions of Parikh-Büchi Automata}

We begin with the definition of Parikh automata on infinite words.

\begin{definition}
\label{def:pba}
A \emph{Parikh-Büchi Automaton} (PBA) is a PA $\Amc=(Q, \Sigma, q_0, \Delta, F, C)$. A run of $\Amc$ on an infinite word $\alpha_1 \alpha_2 \alpha_3 \dots$ is an infinite sequence of labeled transitions $r = r_1 r_2 r_3 \dots$ with $r_i = (p_{i-1}, \alpha_i, \vbf_i, p_i) \in \Delta$ such that $p_0 = q_0$. 
We consider three acceptance conditions.
\begin{enumerate}
\item \emph{PBA with prefix-acceptance condition} (PPBA): We say that $r$ \emph{satisfies the prefix-acceptance condition} (\emph{is accepting} for short) if there are infinitely many $i \geq 1$ such that $p_i \in F$ and $\rho(r_1 \dots r_i) \in C$. For these positions $i$ we say that there is an \emph{accepting hit in $r_i$}. We define the $\omega$-language recognized by a PPBA $\Amc$ as $P_\omega(\Amc) = \{\alpha \in \Sigma^\omega \mid  \Amc \text{ accepts } \alpha\}$.

\item \emph{PBA with weak reset-acceptance condition} (WPBA): We say that $r$ \emph{satisfies the weak reset-acceptance condition} (\emph{is accepting} for short) if there are infinitely many \emph{reset positions} $0 = k_0 < k_1 < k_2\dots$ such that $p_{k_i} \in F$ and $\rho(r_{k_{i-1} + 1} \dots r_{k_i}) \in C$ for all $i \geq 1$. We define the $\omega$-language recognized by a WPBA $\Amc$ as $W_\omega(\Amc) = \{\alpha \in \Sigma^\omega \mid \Amc \text{ accepts } \alpha\}$.

\item \emph{PBA with strong reset-acceptance condition} (SPBA): Let $k_0 = 0$ and denote by $k_1, k_2, \dots$ the positions of all accepting states in $r$, \ie, $p_{k_i} \in F$ for all $i \geq 1$. We say that $r$ \emph{satisfies the strong reset-acceptance condition} (\emph{is accepting} for short) if $k_1, k_2, \dots$ is an infinite sequence and $\rho(r_{k_{i-1} + 1} \dots r_{k_i}) \in C$ for all $i \geq 1$. We define the $\omega$-language recognized by an SPBA $\Amc$ as $S_\omega(\Amc) = \{\alpha \in \Sigma^\omega \mid \Amc \text{ accepts } \alpha\}$.
\end{enumerate}
\end{definition}

Intuitively, an SPBA $\Amc$ accepts all infinite words $\alpha$ in such a way, that whenever a run of $\Amc$ on $\alpha$ visits an accepting state, the Parikh condition \emph{must} be satisfied. After that, the counters are reset. 
Compared to that, a WPBA \emph{may} reset the counters when visiting an accepting state but does not have to do so.
Both accept if they reset their counters infinitely often. A PPBA does not reset its counters at all. It accepts if there exists a run that has infinitely many accepting (finite) prefixes that satisfy the Parikh condition. 
Since all automata are syntactically equal objects we write $P_\omega$, $W_\omega$ and $S_\omega$ for the accepted $\omega$-languages to make the different acceptance conditions explicit. For the same reason we will often speak about resetting states instead of accepting states in the context of SPBA/WPBA. 

\begin{example}
\label{ex:pba}
\begin{figure}
	\centering
	\begin{tikzpicture}[->,>=stealth',shorten >=1pt,auto,node distance=3cm, semithick]
	\tikzstyle{every state}=[minimum size=1cm]
	
	\node[initial, initial text={}, state] (q0) {$q_0$};
	\node[state, accepting] (q1) [right of=q0] {$q_1$};	
	
	\path
	(q0) edge [loop above] node {$a, \begin{pmatrix}1\\0\end{pmatrix}$} (q0)
	(q0) edge              node {$b, \begin{pmatrix}0\\1\end{pmatrix}$} (q1)
	(q1) edge [loop above] node {$b, \begin{pmatrix}0\\1\end{pmatrix}$} (q1)
	(q1) edge [bend left]  node {$a, \begin{pmatrix}1\\0\end{pmatrix}$} (q0)	
	;
	\end{tikzpicture}
	\caption{The PBA $\Amc$ with $C=\{(z,z) \mid z \in \Nbb\})$ from \Cref{ex:pba}.}
	\label{fig:pba}
\end{figure}

Let $\Amc$ with $C = \{(z,z) \mid z \in \Nbb\}$ be the PBA depicted in Figure \ref{fig:pba}. Then $W_\omega(\Amc) = \{w \in \{a,b\}^* \cdot \{b\} \mid |w|_a = |w|_b\}^\omega$. 
Note that this language is not $\omega$-regular. While the first counter tracks the number of already read $a$s, the second counter tracks the number of already read $b$s. 
The semi-linear set $C$ essentially states that both values need to be the same (eventually). 
Since a WPBA may or may not reset its counters when visiting an accepting state, $\Amc$ can "move freely" between both of its states. Note that $P_\omega(\Amc)=W_\omega(\Amc)$.

Considering the strong reset-acceptance condition, we have $S_\omega(\Amc) = \{ab\}^\omega$, only a single infinite word is accepted by this automaton. Assume $\Amc$ tries to accept an infinite word $\alpha$ that has an infix $a^n$ where $n > 1$. 
That is, $\Amc$ loops $n$ times in~$q_0$, hence the first counter has a value of $n$ and the second counter a value of 0. 
The only chance to increase the second counter is by moving into the accepting state $q_1$. 
Since both counters do not have the same values by then, $\alpha$ will be rejected. 
If $\Amc$ tries to accept an infinite word that starts with a $b$, or has an infix $b^n$ where $n > 1$, the input will be rejected for similar reasons.
\end{example}

We now introduce PBAs that allow different semi-linear sets on all accepting states, which we call Multi-PBA. 

\begin{definition}
A \emph{Multi-PBA} is a tuple $\Amc = (Q, \Sigma, q_0, \Delta, F, C)$ where $C : F \rightarrow \Pmc(\Nbb^d)$ is a function that assigns a semi-linear set to each $q \in F$.
We define Multi-PPBA (MPBA), MWPBA and MSPBA as SPBA, WPBA and SPBA, respectively, where the condition $\rho(r_i,\ldots, r_j)\in C$ in the respective definitions is replaced by $\rho(r_i,\ldots, r_j)\in C(p_j)$. 
\end{definition}

We finally introduce definitions of PBA that allow $\varepsilon$-transitions. 

\begin{definition}
An $\epsilon$-PBA is a tuple $\Amc = (Q, \Sigma, q_0, \Delta, \Emc, F, C)$ where $\Emc \subseteq Q \times \{\varepsilon\} \times \Nbb^d \times Q$ is the set of \emph{labeled $\varepsilon$-transitions}, and all other entries are defined as for PBA. 
A run of $\Amc$ on an infinite word $\alpha_1\alpha_2\alpha_3 \dots$ is an infinite sequence of transitions $r \in (\Emc^* \Delta)^\omega$, say $r = r_1r_2r_3 \dots$ with $r_i = (p_{i-1}, \gamma_i, \vbf_i, p_i)$ such that $p_0 = q_0$, and $\gamma_i = \varepsilon$ if $r_i \in \Emc$, and $\gamma_i = \alpha_j$ if $r_i \in \Delta$ is the $j$-th occurrence of a (non-$\varepsilon$) transition in $r$. 
The definitions of prefix-acceptance condition, weak reset-acceptance condition and strong reset-condition are the same as in Definition \ref{def:pba} and extend to Multi-PBA in the natural way.
We use the terms $\varepsilon$-PPBA, $\varepsilon$-WPBA, $\varepsilon$-SPBA, $\varepsilon$-MPPBA, $\varepsilon$-MWPBA, and $\varepsilon$-MSPBA, respectively. We call a transition of the form $(p, \epsilon, \vbf, p) \in \Emc$ an \emph{$\epsilon$-loop}.
\end{definition}


Note that we can treat every PBA as an $\epsilon$-PBA, that is, a PBA $\Amc = (Q,\Sigma, q_0, \Delta, F, C)$ is equivalent to the $\varepsilon$-PBA $\Amc' = (Q, \Sigma, q_0, \Delta, \varnothing, F, C)$ in the sense that $P_\omega(\Amc) = P_\omega(\Amc')$, $W_\omega(\Amc) = W_\omega(\Amc')$, and $S_\omega(\Amc) = S_\omega(\Amc')$.

\subsection{Simple Observations.}
We begin with a few simple observations. We first show that the accepting states $F$ and the linear sets of $C$ of an $\varepsilon$-PPBA $\Amc$ are in a sense independent, as formalized by the following lemma.
\begin{lemma}
\label{lemma:indep}
Let $\Amc = (Q, \Sigma, q_0, \Delta, \Emc, F, C)$ be an $\varepsilon$-PPBA, where $C = C_1 \cup \dots \cup C_\ell$ for linear sets~$C_j$, $1\leq j \leq \ell$. Then $P_\omega(\Amc) = \bigcup_{f \in F} \bigcup_{1\leq j \leq \ell} P_\omega(Q, \Sigma, q_0, \Delta, \Emc, \{f\}, C_j)$.
\end{lemma}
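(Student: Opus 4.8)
The plan is to establish the two inclusions separately, with the reverse inclusion $\supseteq$ being an immediate consequence of monotonicity and the forward inclusion $\subseteq$ resting on a pigeonhole argument. Throughout, I use that the automata on both sides of the claimed equality share exactly the same states and (labelled) transitions, so that any run of one is also a run of the other; they differ only in which accepting states and which semi-linear set are used to test the prefix-acceptance condition.

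For $\supseteq$, fix any $f \in F$ and any $1 \leq j \leq \ell$, and write $\Bmc = (Q, \Sigma, q_0, \Delta, \Emc, \{f\}, C_j)$. If $\alpha \in P_\omega(\Bmc)$, then some run $r = r_1 r_2 \dots$ of $\Bmc$ on $\alpha$ has infinitely many accepting hits, that is, infinitely many positions $i$ with $p_i = f$ and $\rho(r_1 \dots r_i) \in C_j$. Since $f \in F$ and $C_j \subseteq C$, each such position is also an accepting hit when $r$ is read as a run of $\Amc$, so $r$ is accepting in $\Amc$ and $\alpha \in P_\omega(\Amc)$. This direction is just the observation that enlarging $F$ to the full set and $C_j$ to the full union $C$ can only add accepting hits.

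For $\subseteq$, suppose $\alpha \in P_\omega(\Amc)$, witnessed by an accepting run $r = r_1 r_2 \dots$ with state sequence $p_0 p_1 p_2 \dots$. By definition there are infinitely many accepting-hit positions $i$, each satisfying $p_i \in F$ and $\rho(r_1 \dots r_i) \in C = C_1 \cup \dots \cup C_\ell$. To each such $i$ I assign the pair $(p_i, j_i)$, where $j_i$ is some index with $\rho(r_1 \dots r_i) \in C_{j_i}$, which exists since the prefix sum lies in the union. There are only $|F| \cdot \ell$ possible pairs but infinitely many accepting-hit positions, so by the pigeonhole principle some fixed pair $(f, j)$ is assigned to infinitely many positions. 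For this one pair, the run $r$ has infinitely many positions $i$ with $p_i = f$ and $\rho(r_1 \dots r_i) \in C_j$, which is exactly the prefix-acceptance condition of $(Q, \Sigma, q_0, \Delta, \Emc, \{f\}, C_j)$; hence $\alpha$ lies in the corresponding term of the right-hand union.

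The argument requires no new automaton construction, and there is no serious obstacle. The only point deserving care is that a \emph{single} accepting run of $\Amc$ must be reused throughout: the pigeonhole principle is applied to the infinitely many hits of that one fixed run, not across different runs, which is what lets the same run certify membership on the right-hand side. The presence of $\varepsilon$-transitions is harmless, since the positions $i$ simply range over all transitions of the run, including the $\varepsilon$-labelled ones, and the prefix-acceptance condition for $\varepsilon$-PPBA is defined verbatim as in the $\varepsilon$-free case.
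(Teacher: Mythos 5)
Your proposal is correct and follows the same route as the paper: the reverse inclusion by monotonicity, and the forward inclusion by applying the infinite pigeonhole principle to the pairs $(p_i, j_i)$ over the finitely many choices in $F \times \{1,\dots,\ell\}$, reusing the single fixed accepting run. Your version is slightly more explicit about why the same run works on both sides, but the argument is the paper's.
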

\begin{proof}
The right-to-left direction is obvious, thus we show the left-to-right direction.
Let $\alpha \in P_\omega(\Amc)$ with an accepting run $r = r_1 r_2 r_3 \dots$ where $r_i = (p_{i-1}, \gamma_i, \vbf_i, p_i)$, \ie, there are infinitely many $i \geq 1$ such that $p_i \in F$ and $\rho(r_1 \dots r_i) \in C$. 
By the infinite pigeonhole principle, there is a state $f \in F$ such that $p_{i'} = f$ and $j$ such that $\rho(r_1 \dots r_{i'}) \in C_j$ for infinitely many $i' \geq 1$. 
Thus, $\alpha\in P_\omega(Q, \Sigma, q_0, \Delta, \Emc, \{f\}, C_j)$. 
\end{proof}

Note that this lemma does in general not hold for $\varepsilon$-WPBA and $\varepsilon$-SPBA.


The following lemma illustrates the simple and yet important combinatorial method to use additional counters and an adapted semi-linear set to store information about runs. 

\begin{lemma}
\label{lemma:GPPBA}
For every $\varepsilon$-MPPBA $\Amc$ there exists an $\varepsilon$-PPBA $\Amc'$ with a single accepting state such that $P_\omega(\Amc) = P_\omega(\Amc')$. If $\Amc$ has no non-loop $\epsilon$-transitions, then $\Amc'$ has no non-loop $\epsilon$-transitions and if $\Amc$ is an MPPBA, then $\Amc'$ is a PPBA. 
\end{lemma}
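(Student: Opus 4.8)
The plan is to remove, in a single construction, the two features that separate an $\varepsilon$-MPPBA from an $\varepsilon$-PPBA with one accepting state: the per-state semi-linear sets, and the multiplicity of accepting states. The guiding observation is the pigeonhole argument already underlying \Cref{lemma:indep}: if $\Amc = (Q,\Sigma,q_0,\Delta,\Emc,F,C)$ accepts $\alpha$ via a run $r$, then since $r$ has infinitely many accepting hits and $F$ is finite, some fixed $f \in F$ is hit infinitely often, and at each such hit $\rho(r_1\dots r_i) \in C(f)$. Hence it suffices for $\Amc'$ to guess, once, which accepting state $f$ will witness acceptance, and then verify only the visits to that $f$ against $C(f)$.

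Concretely, I would take $|F|$ disjoint copies of the transition structure of $\Amc$, indexed by $F$, so the state set is essentially $Q \times F$ together with one fresh state $\hat q$ and an uncommitted initial state; the first transition out of the initial state commits to a copy. In the copy indexed by $f$, only the state $(f,f)$ is treated as \emph{active}: every transition entering $(f,f)$ is redirected to enter $\hat q$, and every transition leaving $(f,f)$ is made to leave $\hat q$, while all other states of that copy (including the $(f',f)$ with $f'\neq f$) are ordinary non-accepting states. The single accepting state of $\Amc'$ is $\hat q$. Transitions stay within their copy except when passing through $\hat q$, and in the copy for $f$ the only transitions leaving $\hat q$ are copies of transitions leaving $f$; hence a run that never switches copies continues correctly and projects to a genuine run of $\Amc$ passing through $f$ exactly at the visits to $\hat q$.

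To merge the sets and to control copy-switching I would add $|F|$ marker counters, incrementing the $f$-th marker on every transition entering $\hat q$ inside the copy for $f$, and set
\[
C' \;=\; \bigcup_{f \in F} \Big\{\, (\vbf,\mathbf{n}) : \vbf \in C(f),\ n_f \geq 1,\ n_{f'} = 0 \text{ for all } f' \neq f \,\Big\},
\]
which is semi-linear. An accepting hit at $\hat q$ thus forces the original counters into $C(f)$ for the unique copy $f$ whose marker is nonzero. The key point is that copy-membership is \emph{sticky and monotone}: the markers never decrease, so once a run leaves $\hat q$ into a different copy and registers a hit there, two markers are nonzero forever and no later hit can meet $C'$. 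Consequently every accepting run of $\Amc'$ has infinitely many $C'$-hits at $\hat q$, which forces it to remain in a single copy $f$; its projection is an accepting run of $\Amc$ with infinitely many valid hits at $f$. The converse is the direct simulation of a pigeonhole-witnessing run of $\Amc$ inside the copy for its witness $f$.

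The main obstacle — and the reason a naive collapse of all accepting states into one fails — lies in this soundness direction: merging accepting states while keeping all continuations would let a run enter $\hat q$ ``as $f$'' and exit ``as $f'$'', a spurious continuation that no semi-linear condition on \emph{cumulative} counters can forbid, since matching entries to exits in order is not semi-linear. The copy construction sidesteps this by keeping continuations correct within each copy and using the monotone markers only to \emph{detect}, not prevent, the (then fatal) event of leaving a copy. Finally I would verify the $\varepsilon$-bookkeeping: each new transition is a relabelling of a transition of $\Amc$ with a marker added, an $\varepsilon$-loop at an active accepting state becomes an $\varepsilon$-loop at $\hat q$, and a new non-loop $\varepsilon$-transition is created only from a non-loop $\varepsilon$-transition already present in $\Amc$; in particular, if $\Amc$ is an MPPBA then $\Amc'$ is a PPBA.
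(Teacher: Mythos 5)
Your construction is, at its core, the same as the paper's: nondeterministically guess the accepting state $f$ that witnesses acceptance (the pigeonhole argument of \Cref{lemma:indep}), adjoin one monotone marker counter per accepting state, and let $C'$ be the union over $f\in F$ of $C(f)$ concatenated with ``only the $f$-th marker is nonzero'', so that funnelling through the merged accepting state on behalf of two different accepting states makes every later hit impossible. The paper realises the guess with a single added state $q_f$ and marker-labelled shortcut copies of the transitions entering and leaving each $q_i\in F$, whereas you realise it with $|F|$ copies of $Q$; this is a cosmetic difference, and your equivalence argument in both directions is sound.

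There is, however, one concrete defect, located in the preservation clauses rather than in the equivalence. Your fresh ``uncommitted'' initial state violates the clause that $\Amc'$ has no non-loop $\varepsilon$-transitions whenever $\Amc$ has none: an $\varepsilon$-loop $(q_0,\varepsilon,\vbf,q_0)\in\Emc$ becomes, under ``the first transition out of the initial state commits to a copy'', a non-loop $\varepsilon$-transition from the new initial state to $(q_0,f)$, so your closing claim that every new non-loop $\varepsilon$-transition stems from one already present in $\Amc$ is false as stated. The paper avoids exactly this trap by not adding a new initial state and by excluding the case $q=q_i$ when redirecting transitions into $q_f$, so that loops map to loops. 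The defect matters because the clause is used later: the $\varepsilon$-loop-elimination step in \Cref{sec:epsilon-elimination} applies this lemma to an automaton all of whose $\varepsilon$-transitions are loops and needs that property to survive. The fix is routine --- either graft the committing transitions onto the original $q_0$ as the paper does, or copy the $\varepsilon$-loops of $q_0$ onto the new initial state as $\varepsilon$-loops with zero markers and let a run spin there before committing --- but as written this part of the statement is not established.
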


\begin{proof}

Let $\Amc = (Q, \Sigma, q_0, \Delta, \Emc, F, C)$ be an $\varepsilon$-MPPBA of dimension $d$ and let $F = \{q_1, \dots, q_k\}$. Since $\Amc$ is an $\epsilon$-MPPBA, $C\colon F\rightarrow \mathcal{P}(\Nbb^d)$ is a function assigning a semi-linear set to each $q\in F$. 
We construct an $\varepsilon$-PPBA $\Amc'$ of dimension $d+k$ with a single accepting state. 
We introduce a new accepting state $q_f$. 
A naive approach would be to connect all~$q_i$ via $\varepsilon$-transitions to $q_f$ and vice versa. However, this approach fails, as $q_f$ has no information from which of the accepting states a run enters and thus does not know which outgoing transitions are valid such that no invalid shortcuts are created. Hence, we use non-determinism to guess one accepting state (say $q_i$) that is visited infinitely often to satisfy the prefix-acceptance condition. Additionally, we introduce one new counter per accepting state to ensure that we only use transitions related to $q_i$.

Formally, let $\Amc' = (Q \cup \{q_f\}, \Sigma, \Delta', \Emc', \{q_f\}, C)$ where
\begin{align*}
    \Delta' =&\ \{(q,a,\vbf \cdot \0^k, q') \mid (q,a,\vbf, q') \in \Delta\} \\
            \cup&\ \{(q,a,\vbf \cdot \ebf_i^k, q_f) \mid (q,a,\vbf,q_i) \in \Delta, i \leq k\} \\
            \cup&\ \{(q_f,a,\vbf \cdot \ebf_i^k, q) \mid (q_i,a,\vbf,q) \in \Delta, i \leq k\} \\
            \cup&\ \{(q_f, a, \vbf \cdot \ebf_i^k, q_f) \mid (q_i, a, \vbf, q_i) \in \Delta, i \leq k\}
\end{align*}
and
\begin{align*}
    \Emc' =&\ \{(q,\varepsilon,\vbf \cdot \0^k, q') \mid (q,\varepsilon,\vbf, q') \in \Emc\} \\
            \cup&\ \{(q,\varepsilon,\vbf \cdot \ebf_i^k, q_f) \mid (q,\varepsilon,\vbf,q_i) \in \Emc, i \leq k, q \neq q_i\} \\
            \cup&\ \{(q_f,\varepsilon,\vbf \cdot \ebf_i^k, q) \mid (q_i,\varepsilon,\vbf,q) \in \Emc, i \leq k, q \neq q_i\} \\
            \cup&\ \{(q_f, \varepsilon, \vbf \cdot \ebf_i^k, q_f) \mid (q_i, \varepsilon, \vbf, q_i) \in \Emc, i \leq k\}.
\end{align*}

Define $C_i = C(q_i) \cdot \{z \ebf_i^k \mid z \in \Nbb\}$ and $C = \bigcup_{i \leq k} C_i$. We claim that $P_\omega(\Amc) = P_\omega(\Amc')$.

To show $P_\omega(\Amc) \subseteq P_\omega(\Amc')$, let $\alpha \in P_\omega(\Amc)$ with accepting run $r = r_1 r_2 r_3 \dots$ of $\Amc$, where $r_j = (p_{j-1}, \gamma_j, \vbf_j, p_j)$. 
By the infinite pigeonhole principle, there exists an accepting state $q_i$ satisfying the prefix-acceptance condition infinitely often, \ie, there are infinitely many $j$ such that $p_j = q_i$ and $\rho(r_1 \dots r_j) \in C(q_i)$. 
From $r$ we construct an accepting run $r' = r'_1 r'_2 r'_3 \dots$ of $\Amc'$ by replacing every transition after reading $\alpha_1$ of the form $(p_{j-1}, \gamma_j, \vbf_j, q_i)$ with $(p_{j-1}, \gamma_j, \vbf_j \cdot \ebf^k_i, q_f)$, every $(q_i, \gamma_j, \vbf_j, q_i)$ with $(q_f, \gamma_j, \vbf_j \cdot \ebf^k_i, q_f)$, and every $(q_i, \gamma_j, \vbf_j, p_j)$ with $(q_f, \gamma_j, \vbf_j \cdot \ebf^k_i, p_j)$. 
Finally, every other transition $(p_{j-1}, \gamma_j, \vbf_j, p_j)$ is replaced by $(p_{j-1}, \gamma_j, \vbf_j \cdot \0^k, p_j)$. 
By the choice of $\Delta'$ and $\Emc'$ the run $r'$ is indeed a run of $\Amc'$ on $\alpha$. 
Furthermore, $r'$ is accepting, as every accepting hit in $q_i$ in $r$ after reading~$\alpha_1$ (before reading the first symbol of $\alpha$ it might be the case that we could not shortcut into~$q_f$) translates into an accepting hit in $r'$: the first $d$ counter values in both runs are equal at every position, and by construction, there is only a single non-zero value in the $i$-th component of the appended $k$-dimensional vector in $r'$. 
Thus, we have $\rho(r'_1 \dots r'_j) \in C_i \subseteq C$ whenever we have $\rho(r_1 \dots r_j) \in C(q_i)$.

To show $P_\omega(\Amc) \supseteq P_\omega(\Amc')$, let $\alpha \in P_\omega(\Amc')$ with accepting run $r' =  r'_1 r'_2 r'_3 \dots$ of~$\Amc'$ where $r'_j = (p_{j-1}, \gamma_j, \vbf_j, p_j)$. 
By construction of $C$ there is a unique $C_i$ such that $\rho(r'_1 \dots r'_j) \in C_i$ for infinitely many $j$. 
Similar as above, we can replace every occurrence of $q_f$ in $r'$ by $q_i$ and forget the additional counters to obtain a valid run $r$ of $\Amc$ on $\alpha$, which is accepting as every accepting hit on $q_f$ in $r'$ then corresponds to an accepting hit on $q_i$ in $r$.

Finally, observe that $\Amc'$ has only non-loop $\varepsilon$-transitions if $\Amc$ has only non-loop \mbox{$\varepsilon$-transitions}, as in this case the first and fourth line of the definition of $\Emc'$ introduce only $\varepsilon$-loops and the sets defined in the second and third line are empty. If $\Amc$ was a MPPBA, we may consider it as an $\varepsilon$-MPPBA with $\Emc' = \varnothing$, hence, $\Amc'$ can be seen as a PPBA in this case.
\end{proof}

Observe that we can use a similar trick to convert ($\varepsilon$-)MSPBA and ($\varepsilon$-)MWPBA into equivalent ($\varepsilon$-)SPBA and ($\varepsilon$-)WPBA. However, in general a single accepting state is not sufficient anymore.

\begin{lemma}
\label{lem:MSPBAtoSPBA}
For every $\varepsilon$-MSPBA ($\varepsilon$-MWPBA)~$\Amc$ there exists an $\varepsilon$-SPBA ($\varepsilon$-WPBA)~$\Amc'$ such that $S_\omega(\Amc) = S_\omega(\Amc')$ ($W_\omega(\Amc) = W_\omega(\Amc')$). If $\Amc$ has no non-loop $\epsilon$-transitions, then~$\Amc'$ has no non-loop $\epsilon$-transitions and if $\Amc$ is an MSPBA (MWPBA), then $\Amc'$ is an SPBA (WPBA). 
\end{lemma}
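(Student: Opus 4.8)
The plan is to adapt the counter-encoding trick from \Cref{lemma:GPPBA} to the reset settings, with the crucial modification that, because a reset forgets all accumulated counter values, we cannot in general collapse everything into a single accepting state. The obstruction is the same as before---an honest merge state $q_f$ would not know from which original accepting state a run entered and could create invalid shortcuts---but now the $k$ extra counters that record ``which accepting state produced this hit'' get wiped out at every reset, so they cannot carry that bookkeeping across a reset boundary. Hence I would aim only for the weaker target stated in the lemma: an ordinary ($\varepsilon$-)SPBA (resp.\ WPBA) whose accepting states need not be a singleton.

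First I would take an $\varepsilon$-MSPBA $\Amc = (Q,\Sigma,q_0,\Delta,\Emc,F,C)$ of dimension $d$ with $F = \{q_1,\dots,q_k\}$ and build $\Amc'$ of dimension $d+k$ as follows. I keep the same state set $Q$ (no new merge state), set $\Delta' = \{(q,a,\vbf\cdot\0^k,q') \mid (q,a,\vbf,q')\in\Delta,\ q'\notin F\}\cup\{(q,a,\vbf\cdot\ebf_i^k,q_i) \mid (q,a,\vbf,q_i)\in\Delta,\ i\leq k\}$, so that a transition \emph{entering} the $i$-th accepting state additionally increments the $i$-th new counter by $1$, while all other transitions leave the new counters untouched; $\Emc'$ is defined analogously. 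The accepting states of $\Amc'$ are again $F$. The single uniform semi-linear set is $C' = \bigcup_{i\leq k} \bigl(C(q_i)\cdot\{\ebf_i^k\}\bigr)$, that is, the tuple lands in $C'$ exactly when, for some $i$, the first $d$ coordinates lie in $C(q_i)$ and the new-counter part is precisely $\ebf_i^k$. The point is that between two consecutive resets the run touches exactly one accepting state (at the reset itself), so the $i$-th new counter equals $1$ and all others equal $0$ precisely when that reset occurred at $q_i$; thus the uniform condition $\rho\in C'$ at a reset in $q_i$ is equivalent to the original condition $\rho\in C(q_i)$.

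The correctness argument then splits into the two standard inclusions. For $S_\omega(\Amc)\subseteq S_\omega(\Amc')$, take an accepting run of $\Amc$; its reset positions are exactly the visits to accepting states, and on each block the first $d$ counters are copied verbatim while the new-counter part records the identity of the block's terminal accepting state as a single unit vector, so every block-sum lands in $C'$ iff it landed in $C(q_i)$ in $\Amc$---and since $\Amc'$ has the \emph{same} accepting states, $\Amc'$ is forced to reset at exactly the same positions, which is what the strong condition demands. The reverse inclusion $S_\omega(\Amc')\subseteq S_\omega(\Amc)$ just forgets the last $k$ counters and reads off from the $\ebf_i^k$-pattern which $C(q_i)$ was used. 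The WPBA case is identical except that reset positions are chosen rather than forced, and the MWPBA-to-WPBA construction uses the same $\Delta',\Emc',C'$.

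\textbf{The main obstacle} I anticipate is not the construction itself but arguing cleanly that a block-sum lies in $C'$ if and only if \emph{exactly one} new counter is set to $1$ and it matches the terminal accepting state. In the strong condition every visit to an accepting state is a reset point, so each block contains exactly one accepting visit (at its right end) and the new-counter part is automatically a single $\ebf_i^k$; this is what makes the uniform $C'$ sound. For the weak condition, however, a block may pass \emph{through} accepting states without resetting, so the new-counter part of a block-sum could be an arbitrary nonnegative combination rather than a clean unit vector, and $C'$ as defined would reject legitimate blocks. The fix---and the step deserving the most care---is to only increment the $i$-th new counter on transitions that are \emph{used as a reset}; since in a WPBA the reset is a choice made by the run, I would instead guess the resetting transition nondeterministically (splitting each state's accepting copy so the increment fires exactly at the chosen reset), ensuring the new-counter part of every reset block is again a single $\ebf_i^k$. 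I expect finishing this weak-case bookkeeping, and verifying that the no-non-loop-$\varepsilon$ and non-$\varepsilon$ preservation claims survive it, to be where the real work lies.
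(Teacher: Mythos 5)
Your construction for the strong case is exactly the paper's: the same $k$ extra counters that record which resetting state a transition enters, the same uniform set $C'=\bigcup_{i\leq k}C(q_i)\cdot\{\ebf^k_i\}$, and the same observation that under the strong semantics every block ends at its unique accepting visit, so the appended part of each block sum is automatically a single unit vector naming the block's terminal state. Your instinct that the weak case is where the real work lies is also correct, and the difficulty is in fact sharper than the completeness worry you raise: merely making the increment optional (adding a zero-padded copy of every transition into an accepting state while keeping the accepting states unchanged, which is what the paper's own sketch proposes) is \emph{unsound}, not just incomplete. Nothing in the weak acceptance condition of $\Amc'$ ties the chosen reset positions to the marked transitions, so a block may carry its single mark $\ebf^k_j$ from an \emph{interior} visit to $q_j$ while ending at a different accepting state $q_\ell$; the block then passes the test against $C(q_j)\cdot\{\ebf^k_j\}$ even though the original automaton would have required its sum to lie in $C(q_\ell)$. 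Concretely, take $d=1$, $C(q_1)=\{5\}$, $C(q_2)=\{3\}$, and the transitions $(q_0,a,1,q_2)$, $(q_2,a,1,q_2)$, $(q_2,b,1,q_1)$, $(q_1,a,1,q_2)$: the word $(aab)^\omega$ is rejected by $\Amc$ (every block sum equals its length, and no first reset position can satisfy either constraint), yet the naive $\Amc'$ accepts it by marking the first $a$ of each period (entering $q_2$) and resetting at each $q_1$, since each block sums to $(3,0,1)\in C(q_2)\cdot\{\ebf^2_2\}$. Your proposed repair---splitting off copies of the accepting states so that the increment fires exactly at the chosen reset---is the right one, but it must be implemented by routing the unmarked entries into \emph{non-accepting} twins (as in \Cref{lem:WPBAtoSPBA}), so that the accepting originals are reachable only via marked transitions; then any block satisfying $C'$ necessarily has its unique mark on its final transition, and both inclusions go through. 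Spell that out explicitly: it is the one place where the construction needs an extra idea rather than bookkeeping, and the preservation claims about loops and $\varepsilon$-freeness survive it since the splitting only duplicates states and existing transitions.
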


\begin{proof}[Proof sketch.]
    Let $\Amc = (Q, \Sigma, q_0, \Delta, \Emc, F, C)$ be an ($\varepsilon$-)MSPBA of dimension $d$ and let $F = \{q_1, \dots, q_k\}$. Again, $C\colon F\rightarrow \mathcal{P}(\Nbb^d)$ is a function assigning a semi-linear set to each $q\in F$. 
We construct an equivalent $(\varepsilon$-)SPBA $\Amc'$ of dimension $d+k$ with a single semi-linear set $C'\subseteq \Nbb^{d+k}$. 

The idea is to use the additional counters to mark into which resetting state we enter, that is, every transition that does not lead to a resetting state is simply padded with~$0$, while every transition entering the resetting state $q_i$ has a $1$ exactly at the $d+i$th position. The semi-linear set $C'$ is defined as $\bigcup_{1\leq i\leq k}C(q_i)\cdot \{\ebf^d_i\}$. 

When $\Amc$ is an $(\epsilon)$-MWPBA, then additionally, for every transition entering a resetting state we add the transition padded with $0$. Then, the resulting automaton can non-deterministically decide if it wants to reset or not. 

Finally, observe that the resulting automaton $\Amc'$ has only non-loop $\varepsilon$-transitions if $\Amc$ has only non-loop \mbox{$\varepsilon$-transitions}. If $\Amc$ was a MPPBA, we may consider it as an $\varepsilon$-MPPBA with $\Emc' = \varnothing$, hence, $\Amc'$ can be seen as a PPBA in this case.
\end{proof}

\subsection{Equivalence of $\varepsilon$-WPBA and $\varepsilon$-SPBA}

We now prove that the reset models $\varepsilon$-SPBA and $\varepsilon$-WPBA define the same class of $\omega$-languages. Furthermore, we can efficiently convert an $\varepsilon$-SPBA into an $\varepsilon$-WPBA and vice versa.

\begin{lemma}
\label{lem:SPBAtoWPBA}
Every $\varepsilon$-SPBA $\Amc$ is equivalent to an $\varepsilon$-WPBA $\Amc'$ that has the same number of states and uses one additional counter. If $\Amc$ is an SPBA, then $\Amc'$ is a WPBA.
\end{lemma}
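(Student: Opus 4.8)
The plan is to let $\Amc'$ be syntactically the same automaton as $\Amc$, equipped with one extra counter that records, within each potential reset segment, how many accepting states have been entered, and then to read $\Amc'$ under the weak reset condition. Formally I would keep $Q$, $q_0$, and $F$ unchanged, raise the dimension from $d$ to $d+1$, and replace each transition $(p,\gamma,\vbf,q)$ of $\Amc$ by $(p,\gamma,\vbf\cdot\ebf_1^1,q)$ if $q\in F$ and by $(p,\gamma,\vbf\cdot\0^1,q)$ otherwise (this applies to $\Delta$ and, in the $\varepsilon$-case, to $\Emc$, so that no new $\varepsilon$-transitions are created). The new semi-linear set is $C' = C\cdot\{\ebf_1^1\}$, which is semi-linear because $\{\ebf_1^1\}$ is linear. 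By construction $\Amc'$ has the same states as $\Amc$, uses exactly one additional counter, and is a WPBA whenever $\Amc$ is an SPBA.

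The guiding observation is that the only difference between the strong and the weak condition is that a WPBA \emph{may} skip accepting states, whereas an SPBA must meet the Parikh condition at \emph{every} accepting state. The new counter closes this gap: it is incremented precisely when a transition enters an accepting state (self-loops at accepting states included), and $C'$ demands that its value be exactly $1$ on every reset segment $r_{k_{i-1}+1}\dots r_{k_i}$. Hence each reset segment may contain at most one accepting state, namely its right endpoint $k_i$, so the WPBA is forced to treat every accepting state as a reset point. For the inclusion $S_\omega(\Amc)\subseteq W_\omega(\Amc')$ I would take a strong-accepting run $r$ of $\Amc$ and declare its accepting positions $k_1<k_2<\dots$ (all of them) to be the reset points of the corresponding run of $\Amc'$; then each segment contains exactly one accepting state, so its last coordinate is $1$ and its first $d$ coordinates lie in $C$, \ie it lies in $C'$, and the run is weak-accepting.

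For the converse $W_\omega(\Amc')\subseteq S_\omega(\Amc)$ I would start from a weak-accepting run $r'$ of $\Amc'$ with reset positions $k_1<k_2<\dots$ and show that these coincide with the full set of accepting positions of $r'$: any accepting position $m$ lies in some segment $(k_{i-1},k_i]$ because the reset positions tend to infinity, and the constraint that the new counter equals $1$ on that segment forces $m=k_i$. Dropping the extra counter then yields a run $r$ of $\Amc$ whose accepting positions are exactly $k_1,k_2,\dots$ and each of whose segments lies in $C$, \ie $r$ is strong-accepting. I expect this backward direction to be the crux, since it is precisely the step where the counter constraint turns the permissive weak condition back into the strict strong condition; it also requires care that positions are counted through $\varepsilon$-transitions and through repeated self-loops at accepting states, and that the first segment $(0,k_1]$ behaves correctly with $k_0=0$.
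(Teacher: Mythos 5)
Your construction is exactly the one the paper uses: same state set, one extra counter incremented on every transition entering an accepting state, and $C' = C\cdot\{1\}$, with the same two-directional argument that the counter constraint forces every accepting visit to be a reset point. The proposal is correct and matches the paper's proof, with somewhat more detail in the backward direction than the paper provides.
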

\begin{proof}
Let $\Amc = (Q, \Sigma, q_0, \Delta, \Emc, F, C)$ be an $\varepsilon$-SPBA. We construct an equivalent $\varepsilon$-WPBA~$\Amc'$ that simulates $\Amc$, ensuring that no run visits an accepting state without resetting. 
To achieve that, we add an additional counter that tracks the number of visits of an accepting state (without resetting). 
Moreover, we modify~$C$ such that this new counter must be set to 1 when visiting an accepting state, thus disallowing to pass such a state without resetting.

We choose $\Amc' = (Q, \Sigma, q_0, \Delta', $\Emc'$ F, C')$ where
\begin{align*}
\Delta' =&\ \{(q, x, \vbf \cdot 0, q') \mid (q, x, \vbf, q') \in \Delta, q' \notin F\} \\
\cup&\ \{(q, x, \vbf \cdot 1, q_f) \mid (q, x, \vbf, q_f) \in \Delta, q_f \in F\},
\end{align*}

and, similarly,
\begin{align*}
\Emc' =&\ \{(q, \varepsilon, \vbf \cdot 0, q') \mid (q, \varepsilon, \vbf, q') \in \Emc, q' \notin F\} \\
\cup&\ \{(q, \varepsilon, \vbf \cdot 1, q_f) \mid (q, \varepsilon, \vbf, q_f) \in \Emc, q_f \in F\}.
\end{align*}

Finally, let $C' = C \cdot \{1\}$. We claim that $\Amc'$ is an $\varepsilon$-WPBA equivalent to $\Amc$.

Let $\alpha \in S_\omega(\Amc)$. Since $\Amc$ resets every time when visiting an accepting state, $\Amc'$ can simulate an accepting run using the same states and reset positions. 
In particular, the new counter will be 1 on every visit of an accepting state. 
Thus, the choice of $C'$ implies that $\alpha \in W_\omega(\Amc')$.

Now, let $\alpha \in W_\omega(\Amc')$. Due to the choice of $C'$, it is indispensable for $\Amc$ to reset the counters every time an accepting state is visited. 
Otherwise, the new counter tracking the number of visits of accepting states would be greater than 1, thus violating the weak reset-acceptance condition. 
Hence $\alpha \in S_\omega(\Amc)$.
\end{proof}

\begin{lemma}
\label{lem:WPBAtoSPBA}
Every $\varepsilon$-WPBA $\Amc$ is equivalent to an $\varepsilon$-SPBA $\Amc'$ with at most twice the number of states and the same number of counters. If $\Amc$ is an SPBA, then $\Amc'$ is a WPBA.
\end{lemma}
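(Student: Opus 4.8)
The plan is to turn the ``may reset'' freedom of an $\varepsilon$-WPBA into a non-deterministic choice between two copies of the state space. In an $\varepsilon$-SPBA every visit to an accepting state forces a Parikh check and a reset, so the whole difficulty is that $\Amc$ is allowed to walk through accepting states \emph{without} resetting. I would therefore keep one copy of $Q$ in which \emph{no} state is accepting --- the layer in which the run accumulates counters and may pass freely through (copies of) the resetting states of $\Amc$ --- together with a second, accepting copy $\hat q$ of each $q \in F$, entered exactly at the positions where $\Amc$ chooses to reset. Formally I set $Q' = Q \cup \{\hat q \mid q \in F\}$ and $F' = \{\hat q \mid q \in F\}$, with initial state $q_0$. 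This gives $|Q'| = |Q| + |F| \leq 2|Q|$ states, and since the counters are never touched the dimension is preserved and we may take $C' = C$.

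For the transitions, every $(q, x, \vbf, q') \in \Delta$ is copied into the non-accepting layer as $(q, x, \vbf, q')$, and whenever $q' \in F$ we additionally add the reset edge $(q, x, \vbf, \hat{q'})$. To continue after a reset, for every $(q, x, \vbf, q') \in \Delta$ with $q \in F$ we add $(\hat q, x, \vbf, q')$ and, if moreover $q' \in F$, also $(\hat q, x, \vbf, \hat{q'})$, so that two consecutive resets remain possible. The set $\Emc'$ is obtained from $\Emc$ by the same four families. A run of $\Amc'$ is then accepting precisely when it enters the hatted layer infinitely often and the counter sum along each segment between two consecutive such entries lies in $C$.

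To show $W_\omega(\Amc) \subseteq S_\omega(\Amc')$ I would take an accepting run of $\Amc$ with reset positions $0 = k_0 < k_1 < k_2 < \dots$ and route it through $\Amc'$ so that the transition ending at $k_i$ lands in the hatted copy $\hat p_{k_i}$ while every other transition stays in the non-accepting layer; then the accepting states of $\Amc'$ occur exactly at $k_1, k_2, \dots$, each segment has $\rho(r_{k_{i-1}+1} \dots r_{k_i}) \in C = C'$, and the strong reset condition holds. For the converse I would project a run of $\Amc'$ back by identifying $\hat q$ with $q$ and forgetting the layer; the positions of $F'$, which are exactly where the projected run resets, yield valid WPBA reset positions with the correct segment sums, while the original accepting states visited inside the non-accepting layer are simply not used as resets, which is permitted for a WPBA.

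The main obstacle, and the point the construction is designed around, is precisely that a strong reset condition cannot ignore an accepting state: making the entire ``home'' layer non-accepting, rather than reusing the accepting states of $\Amc$, is what lets the $\varepsilon$-SPBA slide through resetting states of $\Amc$ without a forced Parikh check, thereby faithfully modelling the choice ``reset / do not reset''. Finally, since the reset edges are built family-by-family from $\Delta$ and $\Emc$ separately, no new non-loop $\varepsilon$-transitions are introduced; in particular $\Emc' = \varnothing$ whenever $\Emc = \varnothing$, so if $\Amc$ is a WPBA then $\Amc'$ is an SPBA.
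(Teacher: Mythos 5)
Your proposal is correct and is essentially the paper's construction: both duplicate the accepting states so that an accepting run of the WPBA can be routed into an accepting copy exactly at its chosen reset positions and into a non-accepting copy everywhere else, with $C$ and the dimension unchanged and $|Q'| \leq 2|Q|$. The only (immaterial) difference is that the paper keeps the original states of $F$ accepting and adds non-accepting ``bypass'' copies $\hat q_f$, whereas you make the whole original layer non-accepting and put the accepting status on the hatted copies; the two are mirror images with identical correctness arguments.
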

\begin{proof}
Let $\Amc = (Q, \Sigma, q_0, \Delta, \Emc, F, C)$ be an \mbox{$\varepsilon$-WPBA}. We construct an equivalent $\varepsilon$-SPBA~$\Amc'$ that simulates $\Amc$ by having the option to ``avoid" accepting states arbitrarily long. For this purpose, we create a non-accepting copy of $F$. Consequently, $\Amc'$ can decide to continue or reset a partial run using non-determinism.

We choose $\Amc' = (Q \cup \{\hat{q}_f \mid q_f \in F\}, \Sigma, q_0,  \Delta', \Emc', F, C)$, where
\begin{align*}
\Delta' = \Delta\ \cup&\ \{(q, x, \vbf, \hat{q}_f) \mid (q, x, \vbf, q_f) \in \Delta, q_f \in F\} \\
\cup&\ \{(\hat{q}_f, x, \vbf, q) \mid (q_f, x, \vbf, q) \in \Delta, q_f \in F\} \\
\cup&\ \{(\hat{q}_f, x, \vbf, \hat{q}_f') \mid (q_f, x, \vbf, q_f') \in \Delta, q_f, q_f' \in F\},
\end{align*}
and similarly,
\begin{align*}
\Emc' = \Emc\ \cup&\ \{(q, \varepsilon, \vbf, \hat{q}_f) \mid (q, \varepsilon, \vbf, q_f) \in \Emc, q_f \in F\} \\
\cup&\ \{(\hat{q}_f, \varepsilon, \vbf, q) \mid (q_f, \varepsilon, \vbf, q) \in \Emc, q_f \in F\} \\
\cup&\ \{(\hat{q}_f, \varepsilon, \vbf, \hat{q}_f') \mid (q_f, \varepsilon, \vbf, q_f') \in \Emc, q_f, q_f' \in F\}.
\end{align*}
We claim that $\Amc'$ is an $\varepsilon$-SPBA equivalent to $\Amc$.

$\Rightarrow$ We first show $W_\omega(\Amc)\subseteq S_\omega(\Amc')$. Let $\alpha \in W_\omega(\Amc)$ and $r = r_1 r_2 r_3\dots$ be an accepting run of $\Amc$ on $\alpha$ with reset positions $k_0 < k_1 < k_2 \dots$. Now $\Amc'$ is able to simulate $r$ by choosing the state $\hat{q}_i$ for every state $q_i \in F$ where $i \neq k_j, j \geq 1$. Hence, $\Amc'$ visits an accepting state if and only if $\Amc$ resets its counters at the same position. Thus, $\alpha \in S_\omega(\Amc')$.

$\Leftarrow$ To see that $S_\omega(\Amc')\subseteq W_\omega(\Amc)$, let $\alpha \in S_\omega(\Amc')$ and $r' = r_1' r_2' r_3' \dots$, with $r_i' = (p_{i-1}', \gamma_i, \vbf_i, p_i')$, where $p_i' \in \{p_i, \hat{p}_i\}$ for all $i \geq 0$, be a run of $\Amc'$ on $\alpha$ satisfying the strong reset-acceptance condition. 
Then $r = r_1 r_2 r_3\dots$ with $r_i = (p_{i-1}, \gamma_i, \vbf_i, p_i)$ is a run of $\Amc$ on $\alpha$. 
Furthermore, $r$ satisfies the weak reset-acceptance condition: let $k_1, k_2, \dots$ denote all positions in $r'$ where $p'_{f_i}$ is accepting. In particular, $k_1, k_2, \dots$ are an infinite number of (possible) reset positions, thus satisfying the weak reset-acceptance condition. Therefore, $\alpha \in W_\omega(\Amc)$.
\end{proof}

As a result, we call an $\omega$-language $L$ \emph{Reset-recognizable} if there is an SPBA $\Amc$ such that $S_\omega(\Amc) = L$. We write $\LReset$ to denote the class of all Reset-recognizable languages. Similarly, we call $L$ \emph{Prefix-recognizable} if there is a PPBA $\Amc$, such that $P_\omega(\Amc) = L$ and denote the class of Prefix-recognizable languages by \LPPBA.

\section{Schnepsilon-elimination}\label{sec:epsilon-elimination}

\subsection{$\epsilon$-elimination for PPBA}

We now show that $\epsilon$-transitions in $\epsilon$-PPBA can be eliminated, that is, every $\epsilon$-PPBA~$\Amc$ is equivalent to a PPBA~$\Amc'$. 
We proceed in two steps. In the first step, we show that we can convert every $\varepsilon$-PPBA into an equivalent $\varepsilon$-PPBA where all occurring $\varepsilon$-transitions are $\varepsilon$-loops. 
In the second step we show how to remove all $\varepsilon$-loops, thus obtaining a PPBA without $\varepsilon$-transitions.

Let $\Sigma = \{\sigma_1, \dots, \sigma_k\}$. The \emph{Parikh image} of a (finite) word $w \in \Sigma^*$ is the vector $p(w) = (|w|_{\sigma_1}, \dots, |w|_{\sigma_k}) \in \Nbb^k$. The definition extends to languages $L \subseteq \Sigma^*$ in the natural way: $p(L) = \{p(w) \mid w \in L\}$.
An important ingredient of our proof is (the first statement of) Parikh's Theorem \cite{parikh1966context}, stating the following.
\begin{theorem}[Parikh]
For every regular language $L$ the set $p(L)$ is semi-linear.
\end{theorem}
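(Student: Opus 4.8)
The plan is to prove the statement by structural induction on a regular expression for $L$. By Kleene's theorem every regular language over $\Sigma=\{\sigma_1,\dots,\sigma_k\}$ is obtained from $\varnothing$ and the singletons $\{\sigma_j\}$ by finitely many applications of union, concatenation and Kleene star, so it suffices to verify that the Parikh image of each generator is semi-linear and that the three operations preserve semi-linearity of Parikh images. The base cases are immediate, since $p(\varnothing)=\varnothing$ and $p(\{\sigma_j\})=\{\ebf_j\}$ are finite and every finite set is semi-linear.

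The first ingredient is to record how $p$ interacts with the language operations. From $p(w_1w_2)=p(w_1)+p(w_2)$ one reads off directly that $p(L_1\cup L_2)=p(L_1)\cup p(L_2)$, that $p(L_1\cdot L_2)=p(L_1)\oplus p(L_2)$ with the Minkowski sum $A\oplus B=\{\ubf+\vbf\mid \ubf\in A,\ \vbf\in B\}$, and that $p(L^*)=p(L)^\circledast$, where $S^\circledast=\{\vbf_1+\dots+\vbf_n\mid n\geq 0,\ \vbf_i\in S\}$ denotes the submonoid of $(\Nbb^d,+)$ generated by $S$ (the empty sum accounting for $\varepsilon\in L^*$).

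The second, and technically central, ingredient is that semi-linear sets are closed under $\cup$, $\oplus$ and $(\,\cdot\,)^\circledast$. Closure under finite union holds by definition, and closure under $\oplus$ is easy: the Minkowski sum of two linear sets is again linear, with base the sum of the two bases and periods the union of the two period families, and since $\oplus$ distributes over $\cup$ this lifts to semi-linear sets. I expect the star operation to be the main obstacle, and I would treat it in two stages. For a single linear set $S$ with base $b_0$ and periods $b_1,\dots,b_\ell$, a short computation gives $S^\circledast=\{\0\}\cup S'$, where $S'$ is the linear set with base $b_0$ and periods $b_0,b_1,\dots,b_\ell$: summing $n\geq 1$ elements of $S$ yields $n\,b_0$ plus an arbitrary nonnegative combination of $b_1,\dots,b_\ell$, and writing $n=1+(n-1)$ absorbs the surplus copies of $b_0$ into a new period. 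For a general semi-linear $S=S_1\cup\dots\cup S_m$, commutativity of addition lets me sort the summands of an element of $S^\circledast$ by the block $S_j$ they come from; distributing $\oplus$ over the unions $S_j^\circledast=\bigcup_{n\geq 0}(S_j\oplus\dots\oplus S_j)$ then yields $S^\circledast=S_1^\circledast\oplus\dots\oplus S_m^\circledast$, which is semi-linear by the single-block case together with closure under $\oplus$.

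Assembling these pieces, the induction closes: if $L_1,L_2,L$ have semi-linear Parikh images, then so do $L_1\cup L_2$, $L_1\cdot L_2$ and $L^*$ by the two ingredients above. I would also note the natural alternative of working directly with an NFA and decomposing accepting walks into a simple skeleton plus cycles; there the subtle point is that an arbitrary multiset of cycles need not reassemble into a connected walk, which is exactly the kind of bookkeeping the structural-induction route avoids.
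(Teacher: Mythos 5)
Your proof is correct. Note that the paper does not prove this statement at all: it is quoted as a classical result and attributed to Parikh's 1966 paper (and the paper only needs this restriction of Parikh's theorem to regular languages, not the full version for context-free languages). So there is no in-paper argument to compare against; what you have written is a standard self-contained proof of the regular-language case via structural induction on regular expressions. The three ingredients all check out: the homomorphism identities $p(L_1\cup L_2)=p(L_1)\cup p(L_2)$, $p(L_1L_2)=p(L_1)\oplus p(L_2)$, $p(L^*)=p(L)^\circledast$; closure of semi-linear sets under $\cup$ and $\oplus$; and the star computation, where your identity $S^\circledast=\{\0\}\cup S'$ for a single linear set (absorbing the extra copies of the base $b_0$ as a new period) and the reduction $S^\circledast=S_1^\circledast\oplus\dots\oplus S_m^\circledast$ for a union (justified by commutativity of addition, which lets you group summands by block) are both right. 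Two cosmetic points: you should also admit $\{\varepsilon\}$ as a base case (or obtain it as $\varnothing^*$, whose Parikh image $\{\0\}$ is linear with no periods), and the phrase ``distributing $\oplus$ over the unions'' is slightly loose since $\bigcup_{n\ge 0}$ is an infinite union --- the clean justification is simply the regrouping argument you already invoke. Your closing remark about the NFA route is apt: the cycle-reassembly bookkeeping is exactly what makes that alternative harder, and it is also why the paper's later, genuinely quantitative variant (\Cref{lemma:parikh}, due to Klaedtke and Rue{\ss}) needs a separate argument rather than following formally from the version you proved.
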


Let $\Gamma\subseteq \Nbb^k$ be finite. For a word $w=w_1\ldots w_\ell\in \Gamma^*$ we write $\sum w$ for $\sum_{1\leq i\leq k}w_i\in \Nbb^k$. 
For a language $L\subseteq \Gamma^*$ we define $\sum L=\{\sum w \mid w\in L\}$. Let $\Sigma=\{\sigma_1, \ldots, \sigma_k\}$ be an alphabet. 
Let $\mathcal{X}_\Sigma=\{\ebf^k_i \mid 1\leq i\leq k\}\subseteq \Nbb^k$ be the set of $k$-dimensional unit vectors. 
Then Parikh's Theorem implies that for every regular language $L$ over $\mathcal{X}_\Sigma$ the set $\sum L$ is semi-linear. 
Klaedtke and Ruess showed that this statement is true for arbitrary finite alphabets $\Gamma \subseteq \Nbb^k$. 

\begin{lemma}[Klaedkte, Ruess, Lemma 5 of \cite{klaedtkeruess}, rephrased]
\label{lemma:parikh}
Let $\Gamma \subseteq \Nbb^k$ be a finite alphabet. Then for every regular language $L$ over $\Gamma$ the set $\sum L$ is semi-linear. 
\end{lemma}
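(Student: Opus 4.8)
The plan is to reduce the statement to the classical Parikh theorem (stated above for abstract finite alphabets) together with the fact that semi-linear sets are preserved under homomorphisms of additive monoids $\Nbb^m \to \Nbb^k$.

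First I would fix an enumeration $\Gamma = \{\gamma_1, \dots, \gamma_m\}$ of the (pairwise distinct) vectors in $\Gamma$ and treat $\Gamma$ as an abstract alphabet of $m$ letters. The crucial observation is that for a word $w \in \Gamma^*$ the quantity $\sum w$ depends only on how often each letter occurs: writing $p(w) = (|w|_{\gamma_1}, \dots, |w|_{\gamma_m}) \in \Nbb^m$ for the ordinary Parikh image of $w$ over the abstract alphabet $\Gamma$, we have
\[
\sum w \;=\; \sum_{j=1}^m |w|_{\gamma_j}\,\gamma_j \;=\; h\bigl(p(w)\bigr),
\]
where $h \colon \Nbb^m \to \Nbb^k$ is the map $h(n_1, \dots, n_m) = \sum_{j=1}^m n_j\,\gamma_j$. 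Since $h$ is additive, it is a homomorphism of additive monoids, and consequently $\sum L = h\bigl(p(L)\bigr)$.

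By Parikh's Theorem applied to the abstract alphabet $\Gamma$, the set $p(L) \subseteq \Nbb^m$ is semi-linear. It therefore remains to argue that the image of a semi-linear set under $h$ is again semi-linear. For a single linear set $\{b_0 + b_1 z_1 + \dots + b_\ell z_\ell \mid z_1, \dots, z_\ell \in \Nbb\}$, additivity of $h$ gives $h(b_0 + b_1 z_1 + \dots + b_\ell z_\ell) = h(b_0) + h(b_1) z_1 + \dots + h(b_\ell) z_\ell$, so the image is precisely the linear set with base point $h(b_0)$ and periods $h(b_1), \dots, h(b_\ell)$. Because $h$ commutes with finite unions, the image of a finite union of linear sets is again a finite union of linear sets, i.e.\ semi-linear. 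Applying this to $p(L)$ yields that $\sum L = h\bigl(p(L)\bigr)$ is semi-linear, as required.

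The argument is essentially routine; the only genuine content is the observation that $\sum$ factors through the ordinary Parikh image via the linear map $h$, which reduces the generalized statement over the vector alphabet $\Gamma$ to the classical Parikh theorem over an abstract alphabet. I do not expect any real obstacle beyond stating carefully the closure of semi-linear sets under monoid homomorphisms used in the last step.
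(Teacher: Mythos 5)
Your argument is correct. Note that the paper itself gives no proof of this lemma --- it is imported verbatim as Lemma~5 of Klaedtke and Rue\ss{} --- so there is no in-paper argument to compare against; your derivation (factoring $\sum$ through the ordinary Parikh image via the monoid homomorphism $h(n_1,\dots,n_m)=\sum_j n_j\gamma_j$ and using that additive maps send linear sets to linear sets and commute with unions) is exactly the standard way this generalization is obtained from Parikh's theorem, and every step checks out.
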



For an $\varepsilon$-PPBA $\Amc = (Q, \Sigma, q_0, \Delta, \Emc, F, C)$ we define $\Amc^\varepsilon$ to be the defined as $\Amc$ where all non-$\varepsilon$-transitions are removed, that is, $\Amc^\varepsilon = (Q, \Sigma, q_0, \varnothing, \Emc, F, C)$. 
Furthermore, for $p,q\in Q$ let~$\Amc_{p, q}$ be defined as $\Amc$ with initial state $p$ and a single accepting state $q$, \ie, $\Amc_{p,q} = (Q, \Sigma, p, \Delta, \Emc, \{q\}, C)$. Thus $\Amc^\varepsilon_{p, q} = (Q, \Sigma, p, \varnothing, \Emc, \{q\}, C)$. 
Let $\Bmc_{p,q}$ the NFA over the alphabet $\Gamma \subseteq \Nbb^k$ obtained from $\Amc_{p,q}^\varepsilon$ by replacing every labeled $\varepsilon$-transition with a transition where only the vector remains, that is, $\Bmc_{p,q} = (Q, \Gamma, \{p\}, \Delta', \{q\})$ where $\Delta' = \{(q_1, \vbf, q_2) \mid (q_1, \varepsilon, \vbf, q_2) \in \Emc\}$. 
Then, the following is immediate from \Cref{lemma:parikh}.
\begin{corollary}\label{crl:eps-semi-linear}
For every $\varepsilon$-PPBA $\Amc = (Q, \Sigma, q_0, \Delta, \Emc, F, C)$ and all $p,q\in Q$, the set $\sum L(\Bmc_{p, q})$ is semi-linear.
\end{corollary}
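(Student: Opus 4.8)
The plan is to observe that this corollary is a direct instantiation of \Cref{lemma:parikh}, so the work is essentially one of checking that the hypotheses of that lemma are met. First I would note that, by construction, $\Bmc_{p,q}$ is an ordinary NFA whose input alphabet is the finite set $\Gamma \subseteq \Nbb^k$ consisting of all vectors $\vbf$ that label some $\varepsilon$-transition of $\Amc$; this set is finite because $\Emc$ is finite. Consequently $L(\Bmc_{p,q})$ is a regular language over $\Gamma$, and the map $w \mapsto \sum w$ used to define $\sum L(\Bmc_{p,q})$ is exactly the componentwise sum of the letters (which are themselves vectors) of $w$, matching the operator $\sum$ appearing in the statement of \Cref{lemma:parikh}.

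Next I would simply apply \Cref{lemma:parikh} with the finite alphabet $\Gamma$ and the regular language $L = L(\Bmc_{p,q})$. The lemma then yields immediately that $\sum L(\Bmc_{p,q})$ is semi-linear, which is precisely the claim.

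The hard part here is not in the corollary at all: all of the difficulty has already been absorbed into \Cref{lemma:parikh} (and, through it, into Parikh's Theorem), which extends semi-linearity of the Parikh image from words over unit vectors to words over an arbitrary finite alphabet $\Gamma \subseteq \Nbb^k$. The only points left to verify are routine and definitional — namely that $\Gamma$ is finite and that replacing each labeled $\varepsilon$-transition $(q_1, \varepsilon, \vbf, q_2)$ by the $\Gamma$-transition $(q_1, \vbf, q_2)$ produces a genuine NFA over $\Gamma$ recognizing a regular language, so that the hypotheses of \Cref{lemma:parikh} are satisfied. Since both hold by construction, the corollary follows at once.
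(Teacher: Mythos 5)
Your proof is correct and matches the paper exactly: the paper also obtains this corollary as an immediate instance of \Cref{lemma:parikh}, applied to the NFA $\Bmc_{p,q}$ over the finite alphabet $\Gamma \subseteq \Nbb^k$ of vectors labelling $\varepsilon$-transitions. The routine checks you spell out (finiteness of $\Gamma$, regularity of $L(\Bmc_{p,q})$) are precisely what the paper leaves implicit.
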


We first show that we may assume that the initial state has no non-loop $\epsilon$-transitions.

\begin{lemma}\label{lem:initial-state-only-loops}
For every $\epsilon$-PPBA $\Amc = (Q, \Sigma, q_0, \Delta, \Emc, F, C)$ there is an equivalent $\epsilon$-PPBA $\Amc'= (Q\cup \{q_0'\}, \Sigma, q_0', \Delta', \Emc', F, C')$ such that all outgoing $\epsilon$-transitions of $q_0'$ are $\epsilon$-loops.
\end{lemma}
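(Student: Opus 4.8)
The plan is to introduce $q_0'$ as a fresh initial state whose only outgoing $\varepsilon$-transitions are loops, and to let those loops, together with new non-$\varepsilon$-transitions leaving $q_0'$, simulate the finite $\varepsilon$-prefix that a run of $\Amc$ performs before reading its first letter. Since $q_0'$ has no incoming transitions, every run of $\Amc'$ starts with a finite sequence of $\varepsilon$-loops at $q_0'$ followed by a single non-$\varepsilon$-transition that leaves $q_0'$ forever, which is exactly the shape I want to match against an initial block $\Emc^*\Delta$ of a run of $\Amc$. I keep all original transitions (so $\Delta\subseteq\Delta'$ and $\Emc\subseteq\Emc'$); note that the old state $q_0$ may retain its non-loop $\varepsilon$-transitions, which is harmless because the lemma only constrains the new initial state $q_0'$. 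The central object is, for each state $p$, the set $S_p=\sum L(\Bmc_{q_0,p})$ of counter values that an $\varepsilon$-path from $q_0$ to $p$ can accumulate; by \Cref{crl:eps-semi-linear} this set is semi-linear, say $S_p=\bigcup_j (b_{p,j}+P_{p,j})$ where $P_{p,j}$ is the monoid generated by finitely many period vectors.

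The observation that makes the simulation possible is that the prefix-acceptance condition only asks for \emph{infinitely many} accepting hits, so the finitely many hits that may occur during the initial $\varepsilon$-block are irrelevant. The only datum that must be reproduced faithfully is the counter offset with which the run arrives at the target of its first real transition, because this offset is added to every later prefix sum and hence decides membership in $C$ for all subsequent positions. Concretely, if a run of $\Amc$ follows an $\varepsilon$-path $q_0\to p$ of value $s\in S_p$ and then uses $(p,a,\vbf,p')\in\Delta$, it reaches $p'$ with offset $s+\vbf$; I therefore want $\varepsilon$-loops at $q_0'$ to reproduce $s$ and a new transition leaving $q_0'$ to play the role of $(p,a,\vbf,p')$, after which the run continues inside $Q$ and stays in sync with $\Amc$.

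The main obstacle is that $\varepsilon$-loops at a single state can only generate a monoid, i.e.\ a linear set based at $\0$, whereas $S_p$ is a union of linear sets with possibly nonzero bases $b_{p,j}$. I resolve this by splitting the work: the bases are produced once, by the leaving transition, while the periods are produced by the loops, and extra ``tag'' counters prevent periods of different linear pieces from being mixed. Fixing the decomposition above, I add one extra counter for each pair $(p,j)$ (so $\Amc'$ has dimension $d+m$ with $m=\sum_p(\text{number of linear pieces of }S_p)$): for every period generator $\pi$ of $P_{p,j}$ an $\varepsilon$-loop at $q_0'$ whose first $d$ entries are $\pi$ and which marks the tag $(p,j)$; and for every $(p,a,\vbf,p')\in\Delta$ and every $j$ a non-$\varepsilon$-transition $q_0'\to p'$ reading $a$ whose first $d$ entries are $b_{p,j}+\vbf$ and which also marks the tag $(p,j)$. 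Original transitions are padded with $\0$ on the extra counters. I then set $C'$ to be the union over all tags $(p,j)$ of the set of vectors whose first $d$ coordinates lie in $C$, whose $(p,j)$-coordinate is at least $1$, and whose remaining tag coordinates are $0$; this is semi-linear.

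Correctness follows because no transition re-enters $q_0'$, so the extra counters freeze after the first real transition: the ``exactly one active tag'' constraint built into $C'$ forces every loop-period used to belong to the same linear piece whose base was chosen by the leaving transition, so the reproduced offset equals $b_{p,j}+(\text{a period sum in }P_{p,j})+\vbf$, i.e.\ $s+\vbf$ for some $s\in S_p$, matching an actual run of $\Amc$; conversely every $s\in S_p$ lies in some $b_{p,j}+P_{p,j}$ and is realized by choosing the corresponding loops and leaving transition. After the initial block both runs carry identical values on the first $d$ counters and share the (frozen, consistent) extra counters, so their accepting hits coincide from that point on, giving $P_\omega(\Amc')=P_\omega(\Amc)$. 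The only genuinely delicate point is the reconciliation of the monoid generated by the loops with the arbitrary bases of $S_p$, which the tag counters together with the adapted $C'$ handle.
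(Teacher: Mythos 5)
Your proposal is correct and follows essentially the same route as the paper's proof: both apply \Cref{crl:eps-semi-linear} to the sets $\sum L(\Bmc_{q_0,p})$, split each linear piece into a base vector (absorbed into the single non-$\varepsilon$-transition leaving $q_0'$) and period vectors (realized as $\varepsilon$-loops on $q_0'$), and use one fresh tag counter per linear piece together with an adapted semi-linear set to prevent mixing periods of different pieces. The only differences are cosmetic, e.g.\ you handle runs that begin with a non-$\varepsilon$-transition implicitly via $\0\in\sum L(\Bmc_{q_0,q_0})$ where the paper adds explicit zero-padded copies of $q_0$'s outgoing transitions.
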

\begin{proof}
We introduce a fresh initial state $q_0'$. 
The idea of the proof is as follows. For every $q\in Q$ reachable from $q_0$ by a sequence of $\epsilon$-transitions followed by a single transition labeled with a symbol $a$, we introduce a shortcut labeled with $a$. To account for the missing $\epsilon$-transitions we introduce $\epsilon$-loops on $q_0'$, so that the run up to $q$ can be simulated by the $\epsilon$-transitions on $q_0'$ followed by the $a$-shortcut to $q$. The labels of the new $\epsilon$-transitions on $q_0'$ are given by the automata $\Bmc_{q_0,q}$. Furthermore, we will use additional counters to make sure that we do not mix the transitions for different states $q,q'$. We come to the formal details. 

For every state $q \in Q$ we consider the set $\sum L(\Bmc_{q_0, q})$, which is semi-linear by \Cref{crl:eps-semi-linear}. 
Note that $\sum L(\Bmc_{q_0, q})$ is empty if $q$ is not reachable from $q_0$ via $\varepsilon$-transitions. 
Otherwise $\sum L(\Bmc_{q_0, q})$ can be written as the finite union of linear sets, say $C^{(q)}_1 \cup \dots \cup C^{(q)}_{\ell_q}$ where $C^{(q)}_i = \{b_{i, 0}^{(q)} + b_{i, 1}^{(q)}z_1 + \dots + b^{(q)}_{i, k_{i,q}} z_{k_{i,q}} \mid z_1, \dots, z_{k_{i,q}} \in \Nbb\}$. 
We denote by $k_{i, q}$ the number of period vectors of $C^{(q)}_i$, by $\ell_q$ the number of linear sets of the semi-linear set $\sum L(\Bmc_{q_0, q})$, and define $M = \sum_{q \in Q} \ell_q$. 
We fix an arbitrary bijection $\pi$ between the $C^{(q)}_i$ and $\{1, \dots, M\}$. 

We now introduce the new shortcuts and $\epsilon$-loops on $q_0'$. Informally, whenever a run of $\Amc$ starts with a sequence of $\varepsilon$-transitions $r_1 \dots r_m r_{m+1}$, where $r_1 \dots r_m \in \Emc^*$ and $r_{m+1} = (p_m, \alpha_1, \vbf_m, p_{m+1}) \in \Delta$, the automaton $\Amc'$ guesses the last state $p_m$ of the sequence of $\varepsilon$-transitions and take a shortcut from $q'_0$ to $p_{m+1}$. 
As $\rho(r_1 \dots r_m) \in C^{(p_m)}_i$ for some $i \leq \ell_{p_m}$, the automaton $\Amc'$ also guesses the set $C^{(p_m)}_i$ and use the $\varepsilon$-loops on $q_0'$ labeled with the period vectors of the $C^{(q)}_i$ concatenated with a $0$-$1$-vector of dimension~$M$ with a single $1$-entry. 
These loops replace all sequences of $\varepsilon$-transitions leaving $q_0$ and the new counters ensure that we do not mix the period vectors of different linear sets.

Formally, we define $\Amc'$ as follows.
\begin{align*}
\Delta' =&\ \{(p, a, \vbf \cdot \0^M, q) \mid (p,a,\vbf,q) \in \Delta\} \\
     \cup&\ \{(q_0', a, \vbf \cdot \0^M, q) \mid (q_0,a,\vbf,q) \in \Delta\} \\
     \cup&\ \{(q_0', a, (\vbf + b^{(p)}_{i,0}) \cdot \ebf^M_{\pi(C^{(p)}_i)}, q) \mid (p, a, \vbf, q) \in \Delta, 1 \leq i \leq \ell_p\},
\end{align*}
\begin{align*}
\Emc'   =&\ \{(p, \varepsilon, \vbf \cdot \0^M, q) \mid (p, \varepsilon, \vbf, q) \in \Emc, p \neq q_0\} \\
     \cup&\ \{(q'_0, \varepsilon, b^{(p)}_{i,j} \cdot \ebf^M_{\pi(C^{(p)}_i)}, q'_0) \mid p \in Q, 1 \leq i \leq \ell_p, 1 \leq j \leq k_{i,q}\},
\end{align*}
and
\begin{align*}
    C' = C \cdot \{z \ebf_i^M \mid z \in \Nbb, 1 \leq i \leq M\}.
\end{align*}

We prove that $\Amc'$ is equivalent to $\Amc$.

\smallskip
$\Rightarrow$ 
To show that  $P_\omega(\Amc)\subseteq P_\omega(\Amc')$, let $\alpha \in P_\omega(\Amc)$ with accepting run $r = r_1 r_2 r_3 \dots$ where $r_i = (p_{i-1}, \gamma_i, \vbf_i, p_i)$. We distinguish two cases.

In the first case assume that $r_1 = (q_0, \alpha_1, \vbf, p_1) \in \Delta$. In this case we can simply replace $r_1$ by $r_1' = (q_0', \alpha_1, \vbf \cdot \0, p_1)$ and continue the run as in $\Amc$, padding all vectors with zeros. 
That is, for $i \geq 2$ let $r_i' = (p_{i-1}, \gamma_i, \vbf_i \cdot \0^M, p_i)$. Then $r' = r_1' r_2' r_3'$ is a run of $\Amc'$ on $\alpha$. We show that $r'$ is also accepting. As $r$ is accepting, there are infinitely many $i$ such that $\rho(r_1 \dots r_i) \in C$. 
As $r'$ is basically equal to $r$ (up to padded zeros), there are infinitely many $i$ such that $\rho(r_1' \dots r_i') \in C \cdot \{\0^M\} \subseteq C'$, hence $r'$ is accepting.

For the second case assume that $r_1 \in \Emc$, \ie, the first transition of $\Amc$ is an \mbox{$\varepsilon$-transition}.
Let $r_m$ be the last occurrence of the initial $\varepsilon$-sequence, that is, $r_1 \dots r_m \in \Emc^*$ and $r_{m+1} = (p_m, \alpha_1, \vbf_m, p_{m+1}) \in \Delta$. 
Observe that $\rho(r_1 \dots r_m) \in C^{(p_m)}_i$ for some $i \leq \ell_{p_m}$, hence $\rho(r_1 \dots r_m) = b_{i, 0}^{(p_m)} + b_{i, 1}^{(p_m)}z_1 + \dots + b^{(p_m)}_{i, k_{i,p_m}} z_{k_{i,p_m}}$ for some $z_1, \dots z_{k_{i,p_m}}$. 
As $q_0'$ is equipped with $\varepsilon$-loops labeled with the $b_{i, j}^{(p_m)}$ (concatenated with $\ebf^M_{\pi(C^{(p_m)}_i)}$), $\Amc'$ can sum up $\rho(r_1 \dots r_m) - b_{i, 0}^{(p_m)}$ (with the additional counters) by taking $z_1$ times the transition $(q_0', \varepsilon, b^{(p_m)}_{i,1} \cdot \ebf^M_{\pi(C^{(p_m)}_i)}, q_0')$, $z_2$ times the transition $(q_0', \varepsilon, b^{(p_m)}_{i,2} \cdot \ebf^M_{\pi(C^{(p_m)}_i)}, q_0')$, and so on. 
Finally, we take the shortcut transition $\hat{r}_n = (q_0', a, (\vbf + b^{(p_m)}_{i,0}) \cdot \ebf^M_{\pi(C^{(p_m)}_i)}, p_{m+1})$ to add the missing base vector $b_{i, 0}^{(p_m)}$. From there we can continue the run as in $\Amc'$ again using the zero-padded transitions. 
This yields a run $r' = \hat{r}_1 \dots \hat{r}_{n-1}\hat{r}_n r'_{m+1}r'_{m+2} \dots$ of $\Amc'$ on~$\alpha$ where $\hat{r}_1 \dots \hat{r}_{n-1}$ is a sequence of $\varepsilon$-loops, $\hat{r}_n$ is the shortcut transition, and $r'_i$ for $i \geq m+1$ is defined as in the first case. 
It remains to show that $r'$ is accepting. First observe that $\rho(\hat{r}_1 \dots \hat{r}_n) \in \{\rho(r_1 \dots r_{m+1})\} \cdot \{z \ebf_{\pi(C^{(p)}_i)}^M \mid z \in \Nbb\}$ and for all $j \geq m+1$ we have \smash{$\rho(r'_{m+1} \cdot r'_j) = \vbf \cdot \0^M$} for some $\vbf$. Thus, all (infinitely many) accepting hits of $r$ that occur after $r_m$ translate into accepting hits of $r'$, hence $r'$ is accepting. \hfill $\lrcorner$

\medskip
$\Leftarrow$ To show vice versa that $P_\omega(\Amc')\subseteq P_\omega(\Amc)$, let $\alpha \in P_\omega(\Amc')$ with accepting run $r' = r'_1 r'_2 r'_3 \dots$ where $r'_i = (p_{i-1}, \gamma_i, \vbf_i \cdot \ubf_i, p_i)$ for $\ubf_i \in \{z \ebf_i^M \mid z \in \{0,1\}, 1 \leq i \leq M\}$. Again we distinguish two cases.

In the first case assume that $r'_1$ is of the form $(q_0', a, \vbf \cdot \0^M, p_1)$. By construction we have $p_1 \neq q_0'$. 
In particular, we have $p_i \neq q_0'$ for all $i \geq 1$. Hence, we can replace $r'_1$ by $r_1 = (q_0, a, \vbf, p_1)$ and all $r'_i$ by $r_i = (p_{i-1}, a, \vbf_i, p_i)$ for $i \geq 2$. 
Then the run $r_1 r_2 r_3 \dots$ is an accepting run of $\Amc$ on $\alpha$.

In the second case assume that $r'_1$ is not of the form mentioned above. Then there is a (unique) $m \geq 1$ such that $r'_m$ is of the form $(q_0', \alpha_1, \vbf \cdot \ebf_j^M, p_m)$ for some $j \leq M$.
Let $C^{(q)}_i$ be the set with $\pi(C^{(q)}_i) = j$ (which is uniquely determined as $\pi$ is a bijection). 
In particular, for all $k \geq 1$ we have $\rho(r'_1 \dots r'_k) = \ubf \cdot z\ebf_j^M$ for some $z \in \Nbb$ by the choice of $C'$. 
Hence, we can replace $r'_1 \dots r'_m$ by a sequence of $\varepsilon$-transitions $r_1 \dots r_n$ in $\Amc$ such that $r_1 = (q_0, \varepsilon, \vbf_1, p_1)$, $r_n = (p_{n-1}, \varepsilon, \vbf_n, q)$ and $\rho(r_1 \dots r_n) \in C^{(q)}_i$. Let $r_{n+1} = (q,\alpha_1, \vbf, p_m)$ where $\vbf$ is the vector on $r'_m$. 
Note that $r_{n+1} \in \Delta$ by construction. 
Then $r_1 \dots r_n r_{n+1} r_{m+1} r_{m+2}\ldots$ where $r_i = (p_{i-1}, \gamma_k, \vbf_i, p_i)$ for all $i \geq m+1$ is a valid run of $\Amc$ on $\alpha$. 
Furthermore, if $\rho(r_1',\ldots, r_i')=\ubf\cdot z\ebf_j^M$ for some $i\geq m+1$, then $\rho(r_1,\ldots, r_{n+1}r_{m+1},\ldots, r_i)=\ubf$. 
Since $r'$ is accepting we conclude that $r$ is accepting. 
\end{proof}

\begin{lemma}
\label{lemma:loops}
Let $\Amc = (Q, \Sigma, q_0, \Delta, \Emc, F, C)$ be an $\varepsilon$-PPBA of dimension $d$. There is an equivalent $\varepsilon$-PPBA~$\Amc'$ where all $\varepsilon$-transitions are $\varepsilon$-loops.
\end{lemma}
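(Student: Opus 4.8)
The plan is to mirror the construction of \Cref{lem:initial-state-only-loops}, but to apply the shortcut-and-loop idea at \emph{every} state rather than only at the initial state; by \Cref{lem:initial-state-only-loops} I may assume that $q_0$ already carries only $\varepsilon$-loops. For every ordered pair $(p,q)$ of states, \Cref{crl:eps-semi-linear} gives that $\sum L(\Bmc_{p,q})$ is semi-linear, so I write it as a union of linear sets and fix, for each, its base vector and its period vectors. The intuition is that a run of $\Amc$ splits into blocks, each a maximal sequence of $\varepsilon$-transitions followed by a single non-$\varepsilon$-transition, and that the counter contribution of the $\varepsilon$-part of a block leading from $p$ to $q$ ranges exactly over $\sum L(\Bmc_{p,q})$. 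I would therefore replace each such $\varepsilon$-part by a \emph{shortcut} non-$\varepsilon$-transition that adds the base vector of the relevant linear set, together with $\varepsilon$-loops that add its period vectors, so that every value of $\sum L(\Bmc_{p,q})$ can be reproduced.

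The first issue is that the period loops of different linear sets (and of different pairs $(p,q)$) must not be combined, since an arbitrary mixture of periods need not correspond to any genuine $\varepsilon$-path. In \Cref{lem:initial-state-only-loops} this was handled by extra marking counters and a semi-linear set forcing the marks into a single coordinate, but that device does not survive here: there the $\varepsilon$-part is consumed only once, at the very start, whereas now infinitely many blocks are collapsed, the marks would accumulate across blocks, and no fixed semi-linear constraint could certify per-block consistency. Instead I would encode the currently active linear set into the \emph{state}, making for each state and each linear set a copy that carries only the $\varepsilon$-loops of that one linear set, and letting each shortcut commit to the linear set of the block it opens by leading into the corresponding copy. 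This keeps the loops of distinct linear sets on distinct states, so that any loop taken in $\Amc'$ expands back into a unique genuine $\varepsilon$-cycle of $\Amc$, which is what soundness needs. Putting the base vector on the shortcut that \emph{enters} a copy and the periods on the loops \emph{at} that copy further guarantees that the counter at any accepting copy equals $\Amc$'s counter plus a genuine element of the corresponding linear set, so no spurious accepting hit is created.

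The second, harder issue is that the prefix-acceptance condition may be witnessed at a state lying \emph{strictly inside} an $\varepsilon$-sequence, i.e.\ at an accepting state $f$ that is entered and left by $\varepsilon$-transitions; such a hit is destroyed when the block is collapsed. To recover it I would let the shortcut that collapses a block terminate not only at the block's target, but alternatively at (an accepting copy of) any accepting state $f$ that is $\varepsilon$-reachable along the block, reached with exactly $\Amc$'s counter value at $f$, so that landing there registers the hit. The delicate point, and the one I expect to be the main obstacle, is continuing the run from such an $f$: the remainder $f \to s \to \dots$ up to the next non-$\varepsilon$-transition again contributes an $\varepsilon$-path whose periods must be pumped, and these belong to a \emph{different} linear set than the one used to reach $f$, so they must be kept separate from $f$'s own loops without reintroducing a non-loop $\varepsilon$-transition. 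I would address this by working along the strongly connected components of the $\varepsilon$-transition graph: each component is contracted to loops (its cycles supply precisely the period vectors), while the acyclic connections between components have boundedly many distinct images and can be absorbed into the bases carried by the shortcuts. Arranging this bookkeeping so as to simultaneously realize every counter value of each $\varepsilon$-path, register every internal accepting hit, and never mix periods of distinct paths is exactly the ``surprisingly challenging'' core of the elimination.

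Finally, I would verify $P_\omega(\Amc)=P_\omega(\Amc')$ by the usual two inclusions. For the forward inclusion, given an accepting run of $\Amc$ I choose block by block the linear set containing the image of that block's $\varepsilon$-part, take the matching shortcut, and pump the loops the appropriate number of times; each accepting hit of $\Amc$, whether at a block target or internal to an $\varepsilon$-sequence, becomes an accepting hit of $\Amc'$. For the converse, an accepting run of $\Amc'$ is expanded by replacing each shortcut by its base $\varepsilon$-path plus non-$\varepsilon$-transition and each loop by the corresponding $\varepsilon$-cycle; the state copies make this expansion unambiguous, and the placement of bases and periods ensures every accepting hit of $\Amc'$ maps back to one of $\Amc$. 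The infinite pigeonhole principle, as in \Cref{lemma:indep}, lets me assume that a single accepting copy is hit infinitely often. One could alternatively first reduce to a single accepting state and a single linear set via \Cref{lemma:indep} and \Cref{lemma:GPPBA}, treat each summand, and recombine, which streamlines the pigeonhole step.
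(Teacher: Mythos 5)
Your overall strategy coincides with the paper's: reduce to a single accepting state and a loop-only initial state, introduce for every pair $(p,q)$ and every linear set of $\sum L(\Bmc_{p,q})$ a dedicated new state carrying that set's period vectors as $\varepsilon$-loops, put the base vector on the letter-reading shortcut that enters it, and keep distinct linear sets on distinct state copies so that pumped loops always expand back to genuine $\varepsilon$-paths of $\Amc$. Up to that point your construction matches the paper's $(p,q)$-states, and your observation that the marking-counter device of \Cref{lem:initial-state-only-loops} cannot simply be reused across infinitely many collapsed blocks is correct.

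The gap lies in the second difficulty you identify, which is the actual core of the lemma. When the accepting state $f$ is visited strictly inside an $\varepsilon$-sequence from $p$ to $q$, you must (a) register the hit at an accepting state whose counter value equals $\rho$ of the run up to $f$, which forces the period loops of $\sum L(\Bmc_{p,f})$ to sit on that accepting state, and (b) realize the contribution of the remaining segment from $f$ to $q$, whose periods come from $\sum L(\Bmc_{f,q})$, \emph{without} placing those loops on the same accepting state (which would manufacture accepting hits at counter values $\Amc$ never attains at $f$) and without reintroducing a non-loop $\varepsilon$-transition to reach a second loop-carrying state. Your proposed fix via strongly connected components of the $\varepsilon$-graph does not resolve this: contracting an SCC containing $f$ to a single looped state faces exactly the same accepting-versus-non-accepting dilemma, the cycles of an SCC do not directly yield a finite period set without invoking Parikh's theorem again, and the inter-SCC skeleton says nothing about where the post-hit periods live. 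The paper's resolution is to defer the $C_{f,q}$-loops past the next letter-reading transition: it introduces further states $(p,f,q,s)$ and $(p,f,q,s,t)$ that behave like copies of the successor state $s$ (respectively of the block-state $(s,t)$) but are additionally equipped with the period loops of $C_{f,q}$; since only the total Parikh image matters and all accepting hits one still needs occur after that letter anyway, adding the post-hit contribution late is harmless, while keeping those loops off the accepting state avoids spurious hits. You name this obstacle but do not supply a construction that overcomes it, so the proof is incomplete at precisely the step the lemma is about.
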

\begin{proof}
By \Cref{lemma:GPPBA} we may assume that $\Amc$ has only a single accepting state $f$, that is, $F = \{f\}$, and by \Cref{lem:initial-state-only-loops} we may assume that all outgoing $\epsilon$-transitions of the initial state are $\epsilon$-loops. 

In the beginning, $\Amc' = (Q', \Sigma, q_0', \Delta', \Emc', F', C')$ is a copy of $\Amc$, which will be modified step-by-step. First, we remove all non-loop $\varepsilon$-transitions from $\Amc'$.

The intuition for the construction is as follows. 
We introduce new states such that in every run of $\Amc$ on an infinite word we can replace every maximal partial run $r_i \dots r_j$ of \mbox{$\varepsilon$-transitions}, where $r_{i-1} = (p_{i-2}, \alpha_{z-1}, \vbf_{i-1}, p_{i-1}), r_i = (p_{i-1}, \varepsilon, \vbf_i, p_i), r_j = (p_{j-1}, \varepsilon, \vbf_j, p_j)$, and $r_{j+1} = (p_j, \alpha_{z}, \vbf_{j+1}, p_{j+1})$, by a sequence of $\varepsilon$-loops on a new state depending on $p_{i-1}, p_{j}$, and $\sum L(\Bmc_{p_{i-1}, p_j})$. 
The (finite) partial run $r_i \dots r_j$ translates into an accepting run in $\Bmc_{p_{i-1}, p_j}$, thus $\rho(r_i \dots r_j) \in \sum L(\Bmc_{p_{i-1}, p_j})$. 
By \Cref{crl:eps-semi-linear} the set $\sum L(\Bmc_{p_{i-1}, p_j})$ is semi-linear and hence a finite union of linear sets, each defined by a base vector and a set of period vectors, we can encode the period vectors as $\varepsilon$-loops and shortcut the base vector. 
Some care must be taken if we visit the accepting state $f$ in the partial run $r_i \dots r_j$.

Let us continue with the formal construction. 
We iterate through all pairs $p,q$ of states. If $L(\Bmc_{p, q}) = \varnothing$, we do nothing and continue with the next pair of states. 
Otherwise, the state $q$ is reachable from $p$ in $\Bmc_{p, q}$ and $\sum L(\Bmc_{p, q})$ is non-empty and can be written as the finite union of linear sets. 
For the sake of readability we assume that $\sum L(\Bmc_{p, q}) = C_{p,q}$ is a single linear set. 
For the general case we apply the following construction independently for each of the linear sets in the union. In the following, for any pair of states $s,t$ we denote by $b_{s,t}$ the base vector of $C_{s,t}$. 

We add a new state $(p,q)$ to the new state set $Q'$ (the new state is the tuple of states~$p$ and $q$). 
If $C_{p,q} = \{b_{p,q} + b_1z_1 + \dots + b_kz_k \mid z_1, \dots, z_k \in \Nbb\}$ 
we add $\varepsilon$-loops labeled with the period vectors $b_1, \dots, b_k$ of $C_{p,q}$ to the state $(p,q)$. 
Finally, we add \emph{shortcuts} to $(p,q)$, that is, for every transition of the form $(s, a, \vbf, p) \in \Delta$ we add a transition $(s, a, \vbf + b_{p,q}, (p,q))$ to $\Delta'$ (note that we add the base vector $b_{p,q}$ of $C_{p,q}$ to $\vbf$). 
Likewise, for every transition of the form $(q, a, \vbf, s) \in \Delta$ we add a transition $((p,q), a, \vbf, s)$ to $\Delta'$. 
Finally, we connect the new states by further adding transitions
$((p,q), a, \vbf+b_{s,t}, (s, t))$ to $\Delta'$ for every transition $(q, a, \vbf, s) \in \Delta$ and $t \in Q$ such that $(s,t)\in Q'$  (again assuming that $C_{s,t}$ is linear; otherwise, we would have multiple copies of $(s,t)$, each of which gets connected by a transition as above.).

	
	
	
	

After the exhaustive application of this procedure $\Amc'$ can simulate every run of $\Amc$ in the sense that all $\varepsilon$-transitions that appear in $\Amc$ have been reduced to a number of new states equipped with $\varepsilon$-loops. 
However, if the accepting state $f$ is reachable from~$p$ in $\Bmc_{p,q}$ we must take into account that the procedure above might have shortcut the accepting state, which leads to missing accepting hits. 
To prevent this, we need to add additional new states.


We iterate over all pairs of states $p, q \in Q$ again. 
However, this time we consider the automata $\Amc_{p,f}^\varepsilon$ and $\Amc_{f,q}^\varepsilon$. 
As above, let $\Bmc_{p,f}$ and $\Bmc_{f,q}$ be the NFA whose alphabets are subsets of $\Nbb^d$ obtained from $\Amc_{p,f}^\varepsilon$, resp.\ $\Amc_{f,q}^\varepsilon$ by keeping only the vectors of the $\varepsilon$-transitions. 
By \Cref{crl:eps-semi-linear} the sets $\sum L(\Bmc_{p, f})$ and $\sum L(\Bmc_{f, q})$ are semi-linear.

If both of these sets are non-empty, we can write each of them as a finite union of linear sets. 
Again, for the sake of readability, we assume that $\sum L(\Bmc_{p, f}) = C_{p,f}$ and $\sum L(\Bmc_{f, q}) = C_{f,q}$ are linear sets. For the general case, we apply the following construction for every combination of a linear set of $L(\Bmc_{p, f})$ with a linear set of $L(\Bmc_{f, q})$ independently.

Let $C_{p,f} = \{b_{p,f} + b_1z_1 + \dots + b_kz_k \mid z_1, \dots, z_k \in \Nbb\}$ and $C_{f,q} = \{b_{f,q} + c_1z_1 + \dots + c_\ell z_\ell \mid z_1, \dots, z_\ell\in\Nbb\}$. 
We add a new accepting state $(p,f,q)$ to $Q'$ and equip it with $\varepsilon$-loops labeled with the period vectors $b_1, \dots, b_k$ of $C_{p,f}$. 
Furthermore, for each transition of the form $(r, a, \vbf, p) \in \Delta$ we add a shortcut $(r, a, \vbf + b_{p,f}, (p,f,q))$ to $\Delta'$.

Let us now connect the $(p,q)$-states to the states just introduced. 
For every transition $(q, a, \vbf, s) \in \Delta$ we add the ingoing transitions $((p,q), a, \vbf+b_{s,f}, (s,f,t))$ for all $t \in Q$ to~$\Delta'$.

In the next step we introduce the outgoing transitions of the $(p,f,q)$-states. Unfortunately, this situation is more complicated and we have to introduce yet more states $(p,f,q,s)$ and $(p,f,q,s,t)$. 
The idea is that these states act like a copy of $s$ resp. $(s,t)$, but are additionally equipped with the $\varepsilon$-loops labeled with the period vectors of $C_{f,q}$. 
We cannot simply add these loops to $(p,f,q)$, as this might lead to accepting hits in $\Amc'$ that are not possible in $\Amc$ by using these loops (which are for vectors of $\varepsilon$-sequences \emph{leaving}~$f$). 
We cannot simply ignore them either, as they are necessary to simulate the runs of $\Amc$ appropriately. 
Hence the copies, which allow us to use the loops without generating false accepting hits.

Formally, we insert a new state $(p,f,q,s)$ for all $p,q,s \in Q$ and a new state $(p,f,q,s,t)$ for all $p,q\in Q$ and $(s,t)\in Q'$. For every transition $(q, a, \vbf, s) \in \Delta$ we add the following transitions to $\Delta'$:
$((p,f,q), a, \vbf+b_{f,q}, (p,f,q,s))$, and 
 $((p,f,q), a, \vbf+b_{f,q}+b_{s,t}, (p,f,q,s,t))$ for all $t \in Q$ such that there exists $(s,t)\in Q'$.

Finally, we connect the new states $(p,f,q,s)$ and $(p,f,q,s,t)$ as follows. First, $(p,f,q,s)$ and $(p,f,q,s,t)$ are equipped with $\varepsilon$-loops labeled with the period vectors of $C_{f,q}$. 
Furthermore, $(p,f,q,s)$ has all outgoing transitions of $s$, that is, for every transition $(s, a, \vbf,t) \in \Delta'$ we also add a transition $((p,f,q,s), a, \vbf, t)$ to $\Delta'$. 
Similarly, $(p,f,q,s,t)$ gets all outgoing transitions of $(s,t)$ in $\Delta'$. Additionally, $(p,f,q,s,t)$ gets all $\varepsilon$-loops of $(s,t)$.

\smallskip
Now, all remaining $\varepsilon$-transitions of $\Amc'$ are loops and we have finished the construction of $\Amc'$. $\hfill \lrcorner$

\bigskip

We prove that $\Amc'$ is equivalent to $\Amc$. In the following, by an $\epsilon$-sequence we mean a maximal sequence of $\epsilon$-transitions containing at least one non-loop $\epsilon$-transition.

\medskip

$\Rightarrow$ To show that $P_\omega(\Amc)\subseteq P_\omega(\Amc')$, let $\alpha \in P_\omega(\Amc)$ with accepting run $r = r_1 r_2 r_3 \dots$. If there are no $\varepsilon$-sequences in $r$ we are done (as $r$ is also a run of $\Amc'$ on $\alpha$).

Otherwise, we construct an accepting run $r'$ of $\Amc'$ on $\alpha$ by replacing $\epsilon$-sequences step-by-step. 
Let $i$ be minimal such that $r_i \dots r_j$ is an $\epsilon$-sequence. 
Note that because~$q_0$ has no non-loop $\epsilon$-transitions we have $i>1$. Let $r_{i-1} = (p_{i-2}, \alpha_{z-1}, \vbf_{i-1}, p_{i-1})$, $r_i = (p_{i-1}, \varepsilon, \vbf_i, p_i), r_j = (p_{j-1}, \varepsilon, \vbf_j, p_j)$, and $r_{j+1} = (p_j, \alpha_{z}, \vbf_{j+1}, p_{j+1})$. Similarly, let $k > j$ be minimal such that $r_k \dots r_\ell$ is an $\epsilon$-sequence, that is, $r_k \dots r_\ell$ is the second $\varepsilon$-sequence in $r$ (it might be the case that such $k$ does not exist, we handle this case explicitly below). Note that we have $\rho(r_i \dots r_j) \in \sum L(B_{p_{i-1}, p_j})$ and $\rho(r_k \dots r_\ell) \in \sum L(B_{p_{k-1}, p_\ell})$.

\smallskip
We distinguish the (combination of the) following cases. 
\begin{itemize}
    \item There is $f_1$ with $i \leq f_{1} \leq j$ such that we have an accepting hit in $r_{f_1}$ (F) or not~(N). Recall that we have only one accepting state $f$ and $f_1$ here denotes the position of the accepting hit.
    \item We have $k = j+2$, that is, there is just a single non-$\varepsilon$-transition between the two sequences $r_i \dots r_j$ and $r_k \dots r_\ell$ (1) or $k > j+2$, that is, there are at least two non-$\epsilon$-transitions between the two sequences (2).
    \item There is $f_2$ with $k \leq f_2 \leq \ell$ such that we have an accepting hit in $r_{f_2}$ (F) or not, or~$k$ does not even exist (N).
\end{itemize}
Hence, we consider eight cases in total.

\medskip
$\bullet$ Case (F1N). This means we need to take care of an accepting hit in the first \mbox{$\epsilon$-sequence} (at position $f_1$), but not in the second $\epsilon$-sequence (here we assume that the second sequence exists, the other case is handled in (F2N)), and there is just a single non-$\varepsilon$-transition between these sequences.

We replace $r_{i-1}$ by $\hat{r}_{i-1} = (p_{i-2}, \alpha_{z-1}, \vbf_{i-1} + b_{p_{i-1}, f}, (p_{i-1}, f, p_j)) \in \Delta'$. 
As there is an accepting hit at position $f_1$, we have $\rho(r_1 \dots r_{f_1}) = \rho(r_1 \dots r_{i-1}) + \rho(r_i \dots r_{f_1}) \in C$. In particular, we have $\rho(r_i \dots r_{f_1}) \in C_{p_{i-1}, p_{f_i}}$. 
By construction, there are $\varepsilon$-loops on $(p_{i-1}, f, p_j)$ labeled with the period vectors of $C_{p_{i-1}, p_{f_i}}$. 
Hence, we can replace $r_i \dots r_{f_1}$ by a sequence of $\varepsilon$-loops $\hat{r}_i \dots \hat{r}_m$ on $(p_{i-1}, f, p_j)$ with $\rho(\hat{r}_i \dots \hat{r}_m) = \rho(r_i \dots r_{f_1}) - b_{p_{i-1}, f}$. 
As $b_{p_{i-1}, f}$ has already been added to $\hat{r}_{i-1}$, the sequence $\hat{r} = r_1 \dots r_{i-2} \hat{r}_{i-1} \hat{r}_i \dots \hat{r}_m$ is a prefix of a run of $\Amc'$ on $\alpha$ with $\rho(\hat{r}) \in C$.

We describe how to continue the run at this point. 
We take the transition $\hat{r}_{m+1} = ((p_{i-1}, f, p_j), \alpha_z, \vbf_{j+1} + b_{f, p_j} + b_{p_{j+1}, p_\ell}, (p_{i-1}, f, p_j, p_{j+1}, p_\ell)) \in \Delta$ (recall that $p_{j+1} = p_{k-1}$). 
As $(p_{i-1}, f, p_j, p_{j+1}, p_\ell)$ is equipped with $\epsilon$-loops labeled with the period vectors of $C_{f,p_j}$ and also with the period vectors of $C_{j+1,\ell}$, we can replace the partial run $r_{m+1}\ldots r_j$ with a sequence of $\varepsilon$-loops $\hat{r}_{m+2} \dots \hat{r}_n$ on $(p_{i-1}, f, p_j,p_{j+1},p_\ell)$ with $\rho(\hat{r}_{f_1+1} \dots \hat{r}_j) = \rho(r_{f_1+1} \dots r_{j}) - b_{f, p_j}$ and the partial run $r_{j+2}\ldots r_\ell$ with a sequence of $\varepsilon$-loops $\hat{r}_{n+1} \dots \hat{r}_o$ on $(p_{i-1}, f, p_j,p_{j+1},p_\ell)$ with $\rho(\hat{r}_{j+2} \dots \hat{r}_\ell) = \rho(r_{j+2} \dots r_{\ell}) - b_{p_{j+1}, p_\ell}$. 
As $b_{p_{j+1}, p_\ell}$ and $b_{f, p_j}$ have already been added to $\hat{r}_{m+1}$, the partial run $\hat{r}_1 \dots \hat{r}_m \dots \hat{r}_n \dots \hat{r}_o$ is equivalent to $r_i \dots r_\ell$ in the sense that both runs have (at least) one accepting hit, read $\alpha_{z-1} \alpha_{z}$, have the same extended Parikh image, and "fit" into the whole run $r$, as $(p_{i-1}, f, p_j, p_{j+1}, p_l)$ has the same outgoing transitions (including possible shortcuts) as $p_l$.

We now continue with the next $\varepsilon$-sequence in $r$.

\medskip
$\bullet$ Case (F1F). We do the exact same replacement as in (F1N). We thereby lose an accepting hit in the second sequence, however, this is not a problem, as we still have an accepting hit in the first sequence and we have infinitely many accepting hits to come.

\medskip
$\bullet$ Case (F2N). This is similar to (F1N) but we chose $\hat{r}_{m+1} = ((p_{i-1}, f, p_j), \alpha_z, \vbf_{j+1} + b_{f, p_j}, (p_{i-1}, f, p_j, p_{j+1}))$. As we do not need to consider the set $C_{k, \ell}$ at this point, we just replace $r_{f_1 + 1} \dots r_j$ by a matching sequence $r_{m+2} \dots r_n$ of $\varepsilon$-loops on $(p_{i-1}, f, p_j, p_{j+1})$.

At this point, we continue as if we were in $p_{j+1}$. Note that we are done if the second sequence does not exist.

\medskip
$\bullet$ Case (F2F) is the same as (F2N).

\medskip
$\bullet$ Case (N1F). Here we replace $r_{i-1}$ by $\hat r_{i-1} = (p_{i-2}, \alpha_{z-1}, \vbf_i + b_{p_{i-1}, p_j}, (p_{i-1}, p_j)) \in \Delta'$. As $\rho(r_i \dots r_j) \in C_{p_{i-1}, p_j}$, we can replace this partial run by a sequence of $\varepsilon$-loops $\hat{r}_i \dots \hat{r}_m$ on $(p_{i-1}, p_j)$ with $\rho(\hat{r}_i \dots \hat{r}_m) = \rho(r_i \dots r_j) - b_{p_{i-1}, p_j}$. 
As~$b_{p_{i-1}, p_j}$ has already been added to $\hat{r}_{i-1}$, the sequence $\hat{r}_{i-1} \hat{r}_i \dots \hat{r}_m$ is equivalent to $r_{i-1} r_i \dots r_j$ in the sense that both runs read $\alpha_{z-1}$, have the same Parikh images and fit into the whole run $r$, as we can continue the run from $(p_i, p_j)$ in exactly the same way as in $p_j$, hence we continue with the next $\varepsilon$-sequence.

\medskip
$\bullet$ The remaining cases (N1N), (N2F) and (N2N) are the same as (N1F).

\medskip
All accepting hits outside of $\varepsilon$-sequences translate one-to-one. This finishes the proof of the forward direction. 
$\hfill \lrcorner$

\bigskip

 $\Leftarrow$ To show that $P_\omega(\Amc')\subseteq P_\omega(\Amc)$, let $\alpha \in P_\omega(\Amc')$ with accepting run $r' = r'_1 r'_2 r'_3 \dots$. 
 If all states that appear in $r$ belong to the state set $Q$ of $\Amc$, we are done as the run is also an accepting run of $\Amc$.
 
 Otherwise, we construct an accepting run $r$ of $\Amc$ on $\alpha$ step-by-step. Let $i$ be minimal such that $r'_i = (p'_{i-1}, \alpha_z, \vbf'_i, p'_i)$ contains a state that is not part of $Q$. 
 As all (accepting) runs of $\Amc'$ start in $q_0$ (which belongs to $Q$), we have that $p'_i \notin Q$. We distinguish two cases.
 
 If $p'_i = (s,t)$ for some $s,t \in Q$, we have $\vbf_i' = \vbf_i + b_{s,t}$ by the choice of $\Delta'$. 
 Let $j \geq i$ be maximal such that for all $i \leq k \leq j$ we have that~$r'_k$ is an $\varepsilon$-loop on $(s,t)$, \ie, of the form $((s,t), \varepsilon, \vbf_k', (s,t))$.
 
 By the semantics of $(s,t)$, there is an $\varepsilon$-sequence $r_i \dots r_n$ in $\Amc$ that starts in $s$ and ends in $t$. 
 To be precise, we have $r_i = (s, \varepsilon, \vbf_i, p_i)$ and $r_n = (p_{n-1}, \varepsilon, \vbf_n, t)$, and there is a transition $r_{i-1} = (p_{i-2}, \alpha_z, \vbf_i, s) \in \Delta$ such that $\vbf_i = \vbf_i'$ and hence $\rho(r_{i-1} r_i \dots r_n) = \rho(r'_i \dots r'_j)$. 
 Furthermore, observe that $p_{i-2} = p'_{i-1}$, hence $r_{i-1} r_{i} \dots r_n$ is equivalent to $r'_i \dots r'_j$ in the sense that both runs read $\alpha_z$, have the same extended Parikh images, and fit into the whole run $r'$ as we can continue the run from $t$ exactly the same way as in~$(s,t)$.
 
 If $p'_i = (s,f,t)$ for some $s,t \in Q$, we have $\vbf_i' = \vbf_i + b_{s,f}$ by the choice of $\Delta'$. 
 Let $j \geq i$ be maximal such that for all $i \leq k \leq j$ we have that $r'_k$ is an $\varepsilon$-loop on $(s,f,t)$, \ie, $r'_k$ is a transition of the form $((s,f,t), \varepsilon, \vbf_k', (s,f,t))$ and there is an accepting hit in $r_j'$ (if there is no accepting hit, let $j$ be the last $\varepsilon$-loop in this sequence).
 
 By the semantics of $(s,f,t)$ there is an $\varepsilon$-sequence $r_i \dots r_m \dots r_n$ in $\Amc$ that starts in $s$, visits $f$, and ends in $t$. 
 To be precise, we have $r_i = (s, \varepsilon, \vbf_i, p_i)$, $r_m = (p_{m-1}, \varepsilon, \vbf_m, f)$, $r_n = (p_{n-1}, \varepsilon, \vbf_n, t)$, and there is a transition $r_{i-1} = (p_{i-2}, \alpha_z, \vbf_i, s) \in \Delta$ such that $\vbf_i = \vbf_i'$. In particular, if there is an accepting hit in $r'_j$, we can choose $m$ such that $\rho(r_{i-1} r_i \dots r_m) = \rho(r_i' \dots r'_j)$, hence there is also an accepting hit in $r_m$. 
 By the choice of $\Delta'$, $r'_j$ is followed by a (possibly empty) sequence of $\varepsilon$-loops on $(s,f,t)$, followed by a transition of the form $\delta_1 = ((s,f,t), \alpha_{z+1}, \vbf + b_{f,t}, (s,f,t,p))$ or $\delta_2 = ((s,f,t), \alpha_{z+1}, \vbf + b_{f,t}+b_{p,q}, (s,f,t,p,q))$ for some $p,q \in Q$. 
 This sequence of $\varepsilon$-loops on $(s,f,t)$ matches to a sequence of \mbox{$\varepsilon$-transitions} in $\Amc$ that starts and ends in $f$ with the same extended Parikh image, hence we replace it accordingly.
 
 We now consider the next transition. In the first case we assume that it is of the form~$\delta_1$. 
 Recall that $(s,f,t,p)$ is equipped with several $\varepsilon$-loops labeled with the period vectors of~$C_{f,t}$, as well with possible $\varepsilon$-loops of $p$ in $\Amc$. 
 This means that the transition is followed by a possibly empty sequence of $\varepsilon$-loops on $(s,f,t,p)$. 
 Without loss of generality we assume that they are ordered in such a way that first all $\varepsilon$-loops labeled with period vectors appear, say $\hat{r}_1 \dots \hat{r}_o$, followed by possible $\varepsilon$-loops of $p$ (this is not a problem, as we just swap $\varepsilon$-loops and do not need to take care of any accepting hits as $(s,f,t,p)$ is non-accepting). 
 By construction we have $\rho(\hat{r}_1 \dots \hat{r}_o) = \rho(r_{m+1} \dots r_n) - b_{f,t}
 $. As $b_{f,t}$ has already been added to $\delta_1$, we replace $\delta_1 \hat{r_1} \dots \hat{r}_o$ by $r_{m+1} \dots r_n r_{n+1}$ where $r_{n+1} = (t, \alpha_{z+1}, \vbf, p)$. As $p$ is equipped with all outgoing transitions of $(s,f,t,p)$ (including possible $\varepsilon$-loops, but no other $\varepsilon$-transitions), we can continue the run in $\Amc$ the same way as in $\Amc'$.
 
 Finally, we assume that the next transition is of the form $\delta_2$. This case is similar, with the only exception that $(s,f,t,p,q)$ "behaves" like $(p,q)$ in the sense that $(s,f,t,p,q)$ has all outgoing transitions of $(p,q)$. Thus, we continue from here as in the first case, where we handle states of this form. 
 
 Again, all accepting hits using states in $Q$ translate one-to-one. This finishes the proof of the backward direction. 
\end{proof}

We now proceed to eliminate the remaining $\epsilon$-loops. We need the following lemma for $\epsilon$-elimination for automata on finite words. 

\begin{lemma}\label{lem:KR-finite}
[Theorem 22 of Klaedtke and Ruess~\cite{klaedtkeruess}, reformulated]
For every PA $\Amc = (Q, \Sigma, q_0, \Delta, F, C)$ of dimension $d$ \emph{on finite words} (with $\epsilon$-transitions) there exists an equivalent $\epsilon$-free PA $\hat\Amc = (Q, \Sigma, q_0, \hat\Delta, F, \hat C)$ on the same state set of dimension $d + |Q| -1$.
\end{lemma}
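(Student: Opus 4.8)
The plan is to recast acceptance in graph-theoretic terms. Fix a run of \Amc on a finite word $w$ and consider the multigraph on vertex set $Q$ built from the non-$\epsilon$-transitions used (spelling $w$) together with the $\epsilon$-transitions used, each with its traversal multiplicity. Reading the run off shows that these edges form a single directed walk from $q_0$ to an accepting state, and the extended Parikh image of the run is exactly the sum of $\mathrm{vec}(e)$ over all traversed edges counted with multiplicity. Thus acceptance is equivalent to: there is such a walk whose total edge-vector lies in $C$. The enabling fact is \Cref{crl:eps-semi-linear}: the vectors contributed by $\epsilon$-walks between two fixed states form a semi-linear set, so the $\epsilon$-part can be handled arithmetically rather than by actually performing $\epsilon$-moves.

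I would keep the state set $Q$ and fold $\epsilon$-moves into the letter-transitions only to preserve state reachability: for every $p \xrightarrow{\epsilon^*} r$ and every $(r,a,\vbf,r')\in\Delta$ I add a transition from $p$ to $r'$ reading $a$ to $\hat\Delta$, recording $p$ and $r$ explicitly so that $\hat\Amc$ ``knows'' an $\epsilon$-walk from $p$ to $r$ must be accounted for. All counter bookkeeping of these $\epsilon$-walks is deferred to the acceptance set. To make an Eulerian characterization usable, I add $|Q|-1$ new counters recording the net in/out imbalance that the chosen non-$\epsilon$-edges together with the $\epsilon$-walk demands $p\to r$ impose at each state. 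Since the $|Q|$ balance equations of a walk sum to zero, one state's balance is redundant, which is exactly why $|Q|-1$ extra counters suffice and the dimension lands at $d+|Q|-1$.

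The new semi-linear set $\hat C\subseteq\Nbb^{d+|Q|-1}$ then existentially quantifies nonnegative multiplicities $x_e$ for the $\epsilon$-edges and asserts (i) that the accumulated real counters plus $\sum_e x_e\,\mathrm{vec}(e)$ lie in $C$, and (ii) the directed balance equations ($\mathrm{in}=\mathrm{out}$ everywhere except $+1$ at $q_0$ and $-1$ at the accepting endpoint). Each is a system of linear constraints over $C$ and the new counters; since semi-linear sets are closed under intersection with such constraints and under projecting out the $x_e$, the set $\hat C$ is again semi-linear. Handling the signs of the net imbalances (the counters range over \Nbb) is routine by moving negative terms to the other side of the balance equations. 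The completeness direction, that every run of \Amc induces an accepting run of $\hat\Amc$, is then immediate by reading off the walk.

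The delicate point, and what I expect to be the main obstacle, is \emph{soundness}, because flow balance characterizes Eulerian walks only up to connectivity. A degree-balanced multiset of $\epsilon$-edges may contain a cycle supported on states the walk never visits; its vector would be spuriously credited by $\hat C$, although no actual run realizes it. To repair this I would credit only $\epsilon$-edges incident to states the run actually visits (the collapse already exposes these states on each transition) and prove that a balanced $\epsilon$-multiset all of whose weakly connected components touch the visited skeleton can be spliced into the letter-walk to form a genuine single walk with the prescribed total vector. Making this splicing argument precise, without reintroducing states and while keeping the balance encoding within $|Q|-1$ counters, is the technical heart of the proof; the naive alternative of realizing each $\epsilon$-segment separately through its set $\sum L(\Bmc_{p,q})$ sidesteps connectivity entirely but spends far more than $|Q|-1$ additional counters and would miss the stated bound.
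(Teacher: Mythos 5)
First, note that the paper does not prove this lemma at all: it is imported verbatim as Theorem~22 of Klaedtke and Rue{\ss}~\cite{klaedtkeruess}, so the only fair comparison is with their construction, whose shape is visible in the paper's own later proofs (the $\epsilon$-loop elimination for PPBA and \Cref{lem:SPBAepselim}). That construction decomposes each maximal $\epsilon$-segment of a run into a \emph{simple} $\epsilon$-path (of which there are finitely many, so its vector can be hardcoded into a shortcut transition) plus a multiset of $\epsilon$-cycles, each anchored at a state the run actually visits. The $|Q|-1$ extra counters are \emph{visitation indicators} for the states other than $q_0$, and $\hat C$ is, roughly, $\{\vbf\cdot(x_1,\dots,x_{|Q|-1}) \mid \vbf+\ubf\in C,\ \ubf\in\sum_{x_i\geq 1}\sum L(\Bmc_{q_i,q_i})\}$, which is semi-linear by \Cref{lemma:parikh} and closure under the subtraction used in \Cref{sec:epsilon-elimination}. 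Soundness is then immediate: a cycle credited at a visited state can always be spliced into the run at the moment of that visit. Your proposal takes a genuinely different route (flow balance / Eulerian characterization of the whole $\epsilon$-edge multiset), and that route has a real gap, which you partly acknowledge yourself.

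The gap is not merely the connectivity caveat you flag; the balance encoding is unsound in a second, harder way. The degree/divergence equations only constrain the \emph{total} multiset of $\epsilon$-edges, so a satisfying assignment of multiplicities $x_e$ may decompose into walks whose endpoints are mismatched against the demands: if two collapsed transitions demand $\epsilon$-walks $p_1\to r_1$ and $p_2\to r_2$, a balanced multiset realizing walks $p_1\to r_2$ and $p_2\to r_1$ passes every linear test you impose, yet (when the two walks are vertex-disjoint) cannot be rerouted into the demanded pair, so no run of $\Amc$ realizes the credited vector. Repairing this, plus the disconnected-balanced-cycle problem, would require $\hat C$ to know which states the run visits and to reason about supports, and your budget of $|Q|-1$ counters is already consumed by the balance bookkeeping --- which moreover does not obviously fit into $|Q|-1$ counters over $\Nbb$, since recording signed divergences demands separate in/out counters per state. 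The missing idea is precisely to abandon the global balance view: decompose each $\epsilon$-segment as simple path plus cycles at visited states, push the simple paths into the transitions, and spend the $|Q|-1$ counters on visitation indicators rather than on flow conservation. With that decomposition the splicing step you identify as ``the technical heart'' becomes trivial, which is why Klaedtke and Rue{\ss} obtain the stated dimension bound.
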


\begin{lemma}
Let $\Amc = (Q, \Sigma, q_0, \Delta, \Emc, F, C)$ be an $\varepsilon$-PPBA of dimension $d$ where all occuring $\varepsilon$-transitions are $\varepsilon$-loops. There is an equivalent PPBA $\Amc'$.
\end{lemma}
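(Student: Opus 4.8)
The plan is to first normalise $\Amc$ and then to make the loop contributions explicit on the input-reading transitions at a \emph{bounded} rate. By \Cref{lemma:GPPBA} I may assume that $\Amc$ has a single accepting state $f$; this preserves the property that all $\varepsilon$-transitions are loops. The key structural observation is that the $\varepsilon$-loops at a state $p$ contribute exactly the elements of the linear set $L_p=\langle \vbf_1,\dots,\vbf_{m}\rangle$ generated by the loop vectors at $p$ (a pure-period linear set, i.e.\ a finitely generated submonoid of $\Nbb^d$), and that in a run these contributions are added independently each time $p$ is passed. Between two consecutive input-reading transitions the run sits in some state $p$ and adds an arbitrary, possibly large element of $L_p$, and an accepting hit may occur at $f$ either at the end of a block or in the middle of such a loop phase.

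For the construction I fold a bounded amount of loop mass into each real transition. I fix a constant $c$ depending only on $\Amc$ (its size is determined in the correctness proof), and for every transition $(p,a,\vbf,q)\in\Delta$ and every multiset $S$ of at most $c$ loop vectors of $p$ I add the transition $(p,a,\vbf+\sum S,q)$ to $\Amc'$. I keep the same states, the same dimension $d$, the same semi-linear set $C$, and I drop all $\varepsilon$-loops. Since each state has finitely many loop vectors and $c$ is fixed, this yields a genuine $\varepsilon$-free PPBA. Intuitively $\Amc'$ is a \emph{slow} version of $\Amc$: it may add loop mass only while reading input and only at a bounded rate, and it may add a period of $L_p$ only on a transition leaving $p$, so every loop addition of $\Amc'$ is legitimately available at that point of the run.

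The inclusion $P_\omega(\Amc')\subseteq P_\omega(\Amc)$ is immediate: a run of $\Amc'$ is turned into a run of $\Amc$ by re-inserting, before each real transition out of $p$, the $\varepsilon$-loops at $p$ realising the folded mass, leaving states and accepting hits unchanged. The converse $P_\omega(\Amc)\subseteq P_\omega(\Amc')$ is the crux, and it is exactly where the prefix-acceptance condition makes the problem harder than the finite-word elimination of \Cref{lem:KR-finite}: a single pass through $p$ may add unbounded mass in $\Amc$, whereas every transition of $\Amc'$ adds only $O(1)$ mass, so $\Amc'$ cannot reproduce the \emph{same} cumulative Parikh images at the \emph{same} positions. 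In particular the naive idea of independently guessing, at each prefix, some loop mass landing in $C$ is wrong: it ignores that the cumulative loop mass is monotone along the run, and it over-accepts. This monotonicity is the main obstacle.

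To overcome it I argue that a bursty accepting run can always be replaced by a slow one. Given an accepting run of $\Amc$ with infinitely many hits at $f$ and monotone cumulative images $\Pi_{t}\in C$, I first pass to the states visited infinitely often, so that from some point on the available loop monoid is a fixed submonoid $M$; by the infinite pigeonhole principle infinitely many hits lie in one linear piece $b_0+P$ of $C$, where $P=\langle b_1,\dots,b_r\rangle$; and by Dickson's lemma I extract a subsequence $\sigma_1<\sigma_2<\dots$ of these hits whose coordinate vectors in the $b_i$ are componentwise non-decreasing, so that the consecutive differences $D_k=\Pi_{\sigma_{k+1}}-\Pi_{\sigma_{k}}$ lie in $P$. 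Writing $D_k=\Delta R_k+\Delta W_k$ with forced real part $\Delta R_k$ and loop part $\Delta W_k\in M$, I replace $\Delta W_k$ by a \emph{minimal} solution $\Delta\tilde W_k\in M$ of $\Delta R_k+\Delta\tilde W_k\in P$; such a solution exists because the original $\Delta W_k$ witnesses solvability, and by the standard bound on minimal solutions of such systems it has norm $O(\lVert\Delta R_k\rVert)=O(\sigma_{k+1}-\sigma_{k})$. Spreading $\Delta\tilde W_k$ at rate at most $c$ over the transitions of the $k$-th segment, which after stabilisation visits every state whose loops generate $M$, keeps the slow cumulative image inside $b_0+P\subseteq C$ at every $\sigma_k$; handling the initial segment up to $\sigma_1$ in the same way yields an accepting run of $\Amc'$ on $\alpha$. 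Choosing $c$ larger than the finitely many $\Amc$-dependent constants appearing in these bounds makes the spreading possible and completes the argument. The finite-word elimination of \Cref{lem:KR-finite} can be used to streamline the per-segment bookkeeping, but the essential new ingredient is the monotonicity argument above.
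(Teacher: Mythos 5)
Your reduction direction $P_\omega(\Amc)\subseteq P_\omega(\Amc')$ fails at the spreading step, and no choice of the constant $c$ can repair it. A loop vector of a state $p$ can only be folded onto transitions \emph{leaving} $p$, and between two consecutive accepting hits the run may visit $p$ only a bounded number of times while the loop mass that must be deposited at $p$ grows linearly with the length of the segment. Concretely, take $Q=\{q_0,f\}$, $F=\{f\}$, $C=\{(z,z)\mid z\in\Nbb\}$, the single $\varepsilon$-loop $(q_0,\varepsilon,(1,0),q_0)$, and $\Delta=\{(q_0,a,(0,0),f),\,(f,b,(0,1),f),\,(f,a,(0,0),q_0)\}$. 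The word $ab^1\,aab^2\,aab^3\cdots$ is accepted by $\Amc$: before the $k$-th visit to $f$ the run takes the $\varepsilon$-loop $k$ times, so the cumulative image at the $k$-th hit is $(\sum_{i\le k}i,\sum_{i\le k}i)\in C$. In your $\Amc'$ the state sequence on this word is forced, and the only transition in the $k$-th segment that may carry copies of $(1,0)$ is the single $a$-transition out of $q_0$; hence after $k$ segments the first coordinate is at most $ck$ while the second is $k(k+1)/2$, so only finitely many hits occur and the word is rejected. Your minimal-solution bound $\lVert\Delta\tilde W_k\rVert=O(\sigma_{k+1}-\sigma_k)$ is correct (for fixed $\Amc$ the minimal solutions are indeed linearly bounded in the right-hand side), but irrelevant: the required mass has the right total size yet cannot be placed where it is needed at any bounded rate.

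This obstacle is exactly what the paper's proof is built around: instead of materialising the loop contributions inside the run, it absorbs them into the acceptance condition. It guesses the set $S=\{s_1,\dots,s_m\}$ of states visited infinitely often, forces via $m+1$ layers that every state of $S$ is visited between consecutive hits, and replaces $C$ by $\vec C-C_{s_1}-\cdots-C_{s_m}$, where $C_q=\sum L(\Bmc_{q,q})$ is the semi-linear set of $\varepsilon$-loop contributions available at $q$ and $C_1-C_2=\{\ubf\in\Nbb^d\mid\ubf+\vbf\in C_1\text{ for some }\vbf\in C_2\}$ is again semi-linear; the finite prefix before stabilisation is handled by the finite-word elimination of \Cref{lem:KR-finite}. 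The existential quantification over the loop mass thus lives in the Presburger-definable set rather than on the transitions, which is precisely what removes any rate restriction. To salvage your approach you would have to let the per-transition increment be unbounded, i.e., move the unbounded part into $C$ --- at which point you have rediscovered the paper's construction.
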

\begin{proof}
Let us first sketch the proof idea. Intuitively, we will split the automaton into two parts. We will guess the set $S$ of states that will be seen infinitely often. In the first part of the automaton we will deal with the set of states that are seen only finitely often. For this, we apply the construction of 
\Cref{lem:KR-finite} and make all accepting states non-accepting to obtain the automaton $\hat \Amc$. 
We non-deterministically switch to the second part, where we will verify that for some set $S\subseteq Q$ exactly the states of $S$ will be seen infinitely often.
We follow the idea of Klaedtke and Ruess~\cite{klaedtkeruess} for finite words:  
Since it does not matter when and in what order vectors are added, we can simulate $\epsilon$-loops by an appropriate modification of the semi-linear set. 
Instead of $\epsilon$-looping on a state we can intuitively ``substract'' the semi-linear set corresponding to the loop from $C$. 
Formally, we will construct an automaton $\Amc_S$ for each possible guess of $S$ and shortcut appropriately from $\hat \Amc$. We will work with one semi-linear set for each $\Amc_S$, so that we formally construct an MPPBA. 
We conclude by applying \Cref{lemma:GPPBA} to translate this automaton finally to an equivalent PPBA. Let us come to the formal proof. 

\medskip

By \cref{lemma:GPPBA} we may assume that $\Amc$ has only a single accepting state, say $F = \{f\}$. We first interpret $\Amc$ as a PA on finite words and denote by $\hat\Amc$ the $\epsilon$-free automaton obtained from \Cref{lem:KR-finite} by
padding every vector with $d$-many zeros (that is $\hat\Amc$ is of dimension $2d + |Q| - 1$ instead of $d + |Q| - 1$). We denote the semi-linear set of $\hat \Amc$ by $\hat C$. 

Now, for every non-empty subset of states $S \subseteq Q$ containing at least $f$, we construct an MPPBA $\Amc_S$ of dimension $d$, as follows. 
Let the states of $S$ be ordered arbitrarily, say $S = \{s_1, \dots, s_m\}$. $S$ is a candidate set for the set consisting exactly of those states that will be visited infinitely often. 
We connect~$\hat\Amc$ to all $\Amc_S$ by shortcutting all transitions in~$\hat\Amc$ that lead to~$f$ to the initial states of the $\Amc_S$. In this way we can non-deterministically switch from~$\hat\Amc$ to some $\Amc_S$. 

The automaton $\Amc_S$ consists of $m+1$ copies of $\Amc$ as well as a fresh state $q_S$, which is the initial state of $\Amc_S$, as well as the only accepting state of $\Amc_S$. 
In the following, we call the $i$th copy of $\Amc_S$ the $i$th \emph{layer} of $\Amc_S$. 
By allowing to switch from the $i$th layer to the $(i+1)$st layer only after visiting state $s_i$, the $m$ layers ensure that we visit all of the states $s_1, \dots, s_m$ infinitely often. 
From the last layer we have shortcuts into the new state~$q_S$, allowing us to "wait" for the point where the next transition would bring us to~$f$ and the counters to a value in the semi-linear set. 
Additionally, $q_S$ has basically the same outgoing transitions as the accepting state $f$, but they lead into the layer for $s_1$.

Formally, let $\Amc_S = ((S \times \{1, \dots, m+1\}) \cup \{q_S\}, \Sigma, q_S, \Delta_S, \{q_S\}, C_S)$, where
\begin{align*}
\Delta_S =\ &\{((s_i, k), a, \0^{d+|Q|-1} \cdot \vbf, (s_j,k)) \mid (s_i, a, \vbf, s_j) \in \Delta, k \leq m+1\} \\ 
      \cup\ &\{((s_i, i), a, \0^{d+|Q|-1} \cdot \vbf, (s_j,i+1) \mid (s_i, a, \vbf, s_j) \in \Delta, i \leq m\} \\ 
      \cup\ &\{((s_i, m+1), a, \0^{d+|Q|-1} \cdot \vbf, q_S) \mid (s_i, a, \vbf, f) \in \Delta\} \\ 
      \cup\ &\{(q_S, a, \0^{d+|Q|-1} \cdot \vbf, (s_i, 1) \mid (f, a, \vbf, s_i) \in \Delta\}. 
\end{align*}

For two semi-linear sets $C_1, C_2$ of dimension $d$, let $C_1 - C_2 = \{\ubf \in \Nbb^d \mid \ubf + \vbf \in C_1$ $\text{for some }\vbf \in C_2\}$. As shown by Klaedtke and Ruess, the set $C_1 - C_2$ remains semi-linear (observe that it is definable in Presburger Arithmetic). 
For $q \in Q$, let $C_q = \sum L(B_{q, q})$ and choose $C_S(q_S) = \{\0^{d+|Q|-1}\} \cdot (\vec{C} - C_{s_1} - \dots - C_{s_m})$, where $\vec{C}$ is defined as $C$ but without any base vectors.

Finally, let $\Amc' = (Q', \Sigma, q_0, \Delta', \{q_S \mid S \subseteq Q\}, C')$ be the (disjoint) union of $\hat\Amc$ and all~$\Amc_S$ with additional transitions $(q, a, \vbf, q_S)$ for each $(q, a, \vbf, f) \in \Delta$ and $C'(q_S) = \hat{C} + C_S(q_S)$. We claim that $\Amc'$ is equivalent to $\Amc$.

$\Rightarrow$ We first show $P_\omega(\Amc)\subseteq P_\omega(\Amc')$. Let $\alpha \in P_\omega(\Amc)$ with accepting run $r = r_1 r_2 r_3 \dots$ where $r_i = (p_{i-1}, \gamma_i, \vbf_i, p_i)$. 
Let~$S$ be the set of states that appear infinitely often in $r$ and let $i$ be minimal such that there is an accepting hit in $r_i$ and we have $p_j \in S$ for all $j \geq i$. 
Let $u = \alpha_1 \dots \alpha_z \in \Sigma^*$ be the prefix of $\alpha$ that has been read upon visiting $p_i$, and let $\beta = \alpha_{z+1} \alpha_{z+2} \dots \in \Sigma^\omega$, \ie, $\alpha = u\beta$.

First observe that $u \in L(\Amc)$ as $r_1 \dots r_i$ is an accepting run of $\Amc$ on $u$ by definition. 
As a consequence of \Cref{lem:KR-finite}, the automaton $\hat\Amc$ and $\Amc$ are equivalent as PA (over finite words), hence, there is also an accepting run of $\hat{\Amc}$ on $u$, say $\hat r = \hat{r}_1 \dots \hat{r}_z$. 
Note that $\hat{r}_z = (p_{z-1}, \alpha_z, \vbf_z, f)$, hence $\Amc'$ has a transition of the form $\delta = (p_{z-1}, \alpha_z, \vbf_z, q_S)$, hence we can simulate this partial run of $\Amc'$ with $\hat{r}_1 \dots \hat{r}_{z-1}\delta$.

Up to this point we have only used the first $d + |Q| - 1$ counters, and will now only use the last $d$ counters.
By definition of $C(q_S)$ it remains to show that $\beta \in P_\omega(\Amc_S)$ (note that we already had an accepting hit and have thus removed the base vectors from $C(q_S)$). Observe that we can safely remove any $\varepsilon$-transition in $r$ without malforming the run, as all $\varepsilon$-transitions are loops. 
Let $j_0 > i$ be minimal such that $r_{j_0} \in \Delta$, say $r_{j_0} = (f, \alpha_{z+1}, \vbf_{j_0}, p_{j_0})$. By construction there is a transition of the form $(q_S, \alpha_{z+1}, \0^{d+|Q|-1} \cdot \vbf_{j_0}, (p_{j_0}, 1))$ in $\Amc'$, which we use to continue our run in $\Amc'$ (\ie, we forget the $\varepsilon$-loops on $f$).

Now let $j_{m+1} > j_0$ be minimal such that there are $j_0 < j_1 < j_2 < \dots < j_m < j_{m+1}$ with $p_{j_k} = s_k$ for all $1 \leq k \leq m$, and $r_{j_{m+1}}$ is an accepting hit.
For all $j_{k-1} < \ell < j_k, 1 \leq k \leq m+1$, we consider the transition $r_\ell$. If $r_\ell \in \Emc$, we simply forget it. 
Otherwise, we replace $r_\ell$ by $r'_\ell = ((p_{\ell-1}, k), \gamma_\ell, \0 \cdot \vbf_\ell, ((p_\ell, k))$. 
Furthermore, for $1 \leq k \leq m$ we replace $r_{j_k}$ by $r_{j_k}' = ((p_{j_k - 1}, k), \gamma_{j_k},$ $\0 \cdot \vbf_{j_k}, (p_{j_k}, k+1))$, and finally $r_{j_{m+1}}$ by $r'_{j_{m+1}} = ((p_{j-1}, m+1), \0 \cdot \vbf_j, q_S)$.

Observe that for the partial run $r' = r'_{j_0} \dots r'_{j_{m+1}}$ we have $\rho(r') \in C(q_S)$, as $\rho(r_{i+1} \dots r_{j_{m+1}})$ $\in C$, since we have only removed $\varepsilon$-transitions (on states that we have all seen by construction), and $C(q_S)$ is defined accordingly.

We can now iterate the construction and obtain an accepting run of $\Amc_S$ on $\beta$. This concludes the forward direction. 

\medskip

$\Leftarrow$ We now show $P_\omega(\Amc')\subseteq P_\omega(\Amc)$. Let $\alpha \in P_\omega(\Amc')$ with accepting run $r' = r'_1 r'_2 r'_3 \dots$ with $r'_i = (p'_{i-1}, \alpha_i, \vbf'_i, p'_i)$. Let $j_1 < j_2 < \ldots$ be the positions such that $r'_{j_i}$ is an accepting hit for all $i \geq 1$. We proceed by proving a sequence of claims. 

\medskip
\noindent 
\textbf{Claim 1}: $\0 \in \vec{C} \subseteq \vec{C} - C_{s_1} - \dots - C_{s_m}$. This is immediate from the fact that $\vec C$ has no base vectors. 

\medskip
\noindent 
\textbf{Claim 2}: For all $k < j_1$ we have $p'_k \in Q$ (recall that $Q$ is the subset of the state set of~$\Amc'$ that belongs to $\hat \Amc)$. Hence $\rho(r'_1 \dots r'_{j_1}) \in \hat{C} \cdot \{\0\}$. 
The claim is immediate by the fact that~$\hat \Amc$ as a subautomaton of $\Amc'$ has no accepting states and the transition from $\hat \Amc$ to some $\Amc_S$ leads to the accepting state $q_S$, that is $r'_{j_1} = (p'_{j_1-1}, \alpha_{j_1}, \vbf'_{j_1}, q_S)$ with $p'_{j_1-1} \in Q$.

\medskip
\noindent 
\textbf{Claim 3}: $\rho(r'_{j_1+1} \dots r'_j) \in \{\0\} \cdot \Nbb^d$ for every $j \geq j_1+1$. Hence $\vbf'_j$ can be written as $\0 \cdot \vbf_j$ for some $\vbf_j \in \Nbb^d$. Furthermore, every $p'_j$ is either $q_S$ or of the form $(s_k, \ell)$ for some $k \leq m$ and $\ell \leq m+1$. This is immediate by construction of $\Amc'$, since after the first accepting hit we have switched to some $\Amc_S$.  Define $p_j = f$ if $p'_j = q_S$, and $p_j = s_k$ if $p'_j = (s_k, \ell)$. Similarly, let $r_j = (p_{j-1}, \alpha_j, \vbf_j, p_j)$.

\medskip
\noindent 
\textbf{Claim 4}: All states of $S$ are visited between every two consecutive $j_\ell, j_{\ell+1}$. This follows from the fact that in order to visit $q_S$ in $\Amc_S$ again, we have to run through all layers of~$\Amc_S$.

\medskip
\noindent 
\textbf{Claim 5}: $\rho(r'_{j_\ell+1} \dots r'_{j_{\ell+1}}) \in \{\0\} \cdot (\vec{C} - C_{s_1} - \dots - C_{s_m})$, say $\rho(r'_{j_\ell+1} \dots r'_{j_{\ell+1}}) = \vbf_{j_\ell} - \vbf_{s_1} - \dots - \vbf_{s_m}$ with $\vbf_{s_i} \in C_{s_i}$. This holds by Claim 3 and the construction of $\Amc_S$, since $\epsilon$-loops have been removed. Let $j_\ell < k_1 < k_2 < \dots < k_m < j_{\ell + 1}$ be the positions in the partial run where we change the layers, \ie, $r'_{k_i} = ((s_i, i), \alpha_{k_i}, \vbf'_i, (p_{k_i}, i+1))$. These positions exist by Claim 4.

\medskip
\noindent 
\textbf{Claim 6}: For every $\vbf_{s_i} \in C_{s_i}$ there is a sequence of $\varepsilon$-loops $\bar\lambda_i$ on $s_i$ in $\Amc$ with $\rho(\bar\lambda_i) = \vbf_{s_i}$. This is immediate by the choice of $C_{s_i}$.

\medskip
\noindent 
\textbf{Claim 7}: The run $\bar r_\ell = r_{j_\ell+1} \dots \bar\lambda_1 r_{k_1} \dots \bar\lambda_2r_{k_2} \dots \bar\lambda_m r_{k_m} \dots r_{j_{\ell + 1}}$ of $\Amc$ is equivalent to $\bar r'_\ell=r'_{j_\ell} \dots r'_{j_{\ell+1}}$ of $\Amc'$ in the sense that $\rho(\bar r_\ell) = \rho(\bar r'_\ell) - \vbf_{s_1} - \dots - \vbf_{s_m} \in \bar{C} - C_{s_1} - \dots - C_{s_m}$, hence $\rho(\bar r_\ell) \in \vec{C}$. This claim follows from Claim 5 and Claim 6.

\smallskip
We are ready to finish the proof. By \Cref{lem:KR-finite} there exists an accepting run $\hat r$ of $\alpha_1\ldots \alpha_{j_1}$ in $\hat \Amc$ (as a PA on finite words). 
By construction we have $\rho(\hat r)\in \hat C\cdot \{\0\}$. 
We construct the run $r$ of $\Amc$ as follows. In $r'$ we replace $r'_1\ldots r'_{j_1}$ by $\hat r$. 
Now, for $\ell\geq 1$ we replace $r'_{j_{\ell}+1}\ldots r'_{j_{\ell+1}}$  by $\bar r_\ell$ as constructed in Claim 7. 
Note that this is a valid run of $\Amc$. 
Furthermore, by Claim 7 we have accepting hits of $\Amc$ at positions $(|\hat r|-j_1)+j_\ell$ for all $\ell\geq 1$. 
Hence, $r$ is accepting in $\Amc$. 
\hfill $\lrcorner$

\medskip
We have proved that the MPPBA $\Amc'$ is equivalent to $\Amc$. We conclude the proof of the lemma with \Cref{lemma:GPPBA}. 
\end{proof}

By combining the previous lemmas we conclude the main theorems of this section (see also \Cref{fig:equivalencesPrefix}). 

\begin{theorem}
\label{thm:epsilon-elimination}
The class of $\epsilon$-PPBA recognizable $\omega$-languages coincides with the class of PPBA recognizable $\omega$-languages. In other words, $\epsilon$-PPBA admit $\epsilon$-elimination. 
\end{theorem}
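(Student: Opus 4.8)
The plan is to assemble the theorem directly from the two structural lemmas that this section is organized around, since the substantive work has already been carried out. One inclusion is immediate: every PPBA $\Amc$ is in particular an $\epsilon$-PPBA (namely the one with $\Emc = \varnothing$), and as noted right after the definition of $\epsilon$-PBA this does not change the recognized $\omega$-language. Hence every PPBA-recognizable $\omega$-language is $\epsilon$-PPBA-recognizable, and it remains only to prove the reverse inclusion.

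For the converse I would take an arbitrary $\epsilon$-PPBA $\Amc$ and remove its $\epsilon$-transitions in the two stages already established. First apply \Cref{lemma:loops} to obtain an equivalent $\epsilon$-PPBA $\Amc''$ in which every $\epsilon$-transition is an $\epsilon$-loop, so that $P_\omega(\Amc) = P_\omega(\Amc'')$. Then apply the preceding lemma, which removes the remaining $\epsilon$-loops by absorbing the semi-linear sets $\sum L(\Bmc_{q,q})$ into the (Presburger-definable) acceptance condition via set difference, to obtain an equivalent genuine PPBA $\Amc'$ with $P_\omega(\Amc'') = P_\omega(\Amc')$. Chaining the two equivalences yields $P_\omega(\Amc) = P_\omega(\Amc')$, so the $\epsilon$-PPBA-recognizable $\omega$-language $P_\omega(\Amc)$ is PPBA-recognizable, completing the inclusion and hence the claimed coincidence of the two classes.

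I do not expect any genuine obstacle in this final step: the real difficulties were discharged inside the lemmas themselves, in particular the eight-case analysis of \Cref{lemma:loops} handling $\epsilon$-sequences that pass through the unique accepting state, and the conversion of $\epsilon$-loops into a modification of the semi-linear set. The only point requiring mild care is that \Cref{lemma:loops} presupposes a single accepting state together with an initial state whose outgoing $\epsilon$-transitions are loops; but these preconditions are themselves supplied internally by \Cref{lemma:GPPBA} and \Cref{lem:initial-state-only-loops}, so the composition applies to an arbitrary $\epsilon$-PPBA without additional hypotheses. The theorem is therefore a clean corollary of the chain of lemmas, with the essential combinatorial work — as the introduction anticipates — residing entirely in \Cref{lemma:loops} and the $\epsilon$-loop elimination step.
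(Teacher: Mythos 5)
Your proposal is correct and matches the paper's own treatment: the paper derives \Cref{thm:epsilon-elimination} exactly as you describe, by composing \Cref{lemma:loops} (itself invoking \Cref{lemma:GPPBA} and \Cref{lem:initial-state-only-loops} for its preconditions) with the subsequent $\epsilon$-loop-elimination lemma, the converse inclusion being the trivial observation that a PPBA is an $\epsilon$-PPBA with $\Emc = \varnothing$. No further comment is needed.
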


\begin{theorem}
\label{thm:ppba-equivalence}
Let $L$ be an $\omega$-language. Then the following statements are equivalent.
\begin{enumerate}
    \item $L$ is PPBA-recognizable.
    \item $L$ is MPPBA-recognizable.
    \item $L$ is $\varepsilon$-PPBA-recognizable.
    \item $L$ is $\varepsilon$-MPPBA-recognizable.
\end{enumerate}
\end{theorem}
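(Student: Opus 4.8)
The plan is to prove the four statements equivalent by establishing the cycle of implications $(1) \Rightarrow (2) \Rightarrow (4) \Rightarrow (3) \Rightarrow (1)$, exploiting that each of the four automaton models is a syntactic special case of a more general one. This makes three of the four arrows immediate, while the remaining work is already encapsulated in the two main results proved above.

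First I would observe that $(1) \Rightarrow (2)$ and $(2) \Rightarrow (4)$ are purely syntactic. A PPBA $\Amc = (Q, \Sigma, q_0, \Delta, F, C)$ is the special case of an MPPBA whose labelling function is constant, \ie, $C(q) = C$ for every $q \in F$; the runs and the prefix-acceptance condition are literally the same, so $P_\omega$ is unchanged and statement $(1)$ implies statement $(2)$. Likewise, every MPPBA can be read as an $\varepsilon$-MPPBA with empty set of $\varepsilon$-transitions $\Emc = \varnothing$, as already noted after the definition of $\varepsilon$-PBA, which gives $(2) \Rightarrow (4)$.

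The implication $(4) \Rightarrow (3)$ is exactly the content of \Cref{lemma:GPPBA}: for every $\varepsilon$-MPPBA $\Amc$ that lemma produces an $\varepsilon$-PPBA $\Amc'$ (in fact with a single accepting state) satisfying $P_\omega(\Amc) = P_\omega(\Amc')$, so an $\varepsilon$-MPPBA-recognizable language is $\varepsilon$-PPBA-recognizable. Finally, $(3) \Rightarrow (1)$ is \Cref{thm:epsilon-elimination}: every $\varepsilon$-PPBA is equivalent to an $\varepsilon$-free PPBA. Chaining these four arrows closes the cycle and yields the equivalence of all four statements.

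The point I want to stress is that this theorem carries essentially no new combinatorial content: all the difficulty has been front-loaded into \Cref{lemma:GPPBA} (reducing multiple semi-linear sets to a single one via additional marker counters) and into the $\varepsilon$-elimination argument of \Cref{thm:epsilon-elimination} (first pushing all $\varepsilon$-transitions into $\varepsilon$-loops, then absorbing those loops into the semi-linear set). Consequently there is no real obstacle left in the present proof; the only thing to double-check is that the two cited constructions compose correctly, namely that feeding the $\varepsilon$-PPBA output of \Cref{lemma:GPPBA} into \Cref{thm:epsilon-elimination} is legitimate, which it is, since the latter takes an arbitrary $\varepsilon$-PPBA as input.
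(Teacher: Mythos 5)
Your proof is correct and follows exactly the route the paper takes (compare \Cref{fig:equivalencesPrefix}): the syntactic inclusions give $(1)\Rightarrow(2)\Rightarrow(4)$ and $(1)\Rightarrow(3)\Rightarrow(4)$, while \Cref{lemma:GPPBA} closes $\varepsilon$-MPPBA back to $\varepsilon$-PPBA and \Cref{thm:epsilon-elimination} closes $\varepsilon$-PPBA back to PPBA. Nothing further is needed.
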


    \begin{figure}
    \centering
	    \begin{tikzpicture}[%
	      node distance=27mm,>=Latex,
	      initial text="", initial where=below left,
	      every state/.style={rectangle,rounded corners,draw=black,thin,fill=black!5,inner sep=1mm,minimum size=6mm},
	      every edge/.style={draw=black,thin}
	    ]
	    \node[state] (PPBA) {PPBA};
	    \node[state,above right of = PPBA] (ePPBA) {$\varepsilon$-PPBA};
        \node[state,above left  of = PPBA] (MPPBA) {MPPBA};
        \node[state,above left  of = ePPBA] (eMPPBA) {$\varepsilon$-MPPBA};
 
	    \path[->, dashed]
          (PPBA)  edge [bend left  = 10] (ePPBA)
          (PPBA)  edge [bend right = 10] (MPPBA)
          (MPPBA) edge [bend right = 10] (eMPPBA)
          (ePPBA) edge [bend left  = 10] (eMPPBA)
	    ; 

     	\path[->] 
          (eMPPBA) edge [bend left=10] node[above right] {\Cref{lemma:GPPBA}} (ePPBA)
          (ePPBA) edge [bend left=10] node[below right] {\Cref{thm:epsilon-elimination}} (PPBA)
          (MPPBA) edge [bend right=10] node[below left] {\Cref{lemma:GPPBA}} (PPBA)
	    ;

	    \end{tikzpicture}        
        \caption{Equivalences of the prefix models. Dashed lines represent obvious containment.}
        \label{fig:equivalencesPrefix}
    \end{figure}
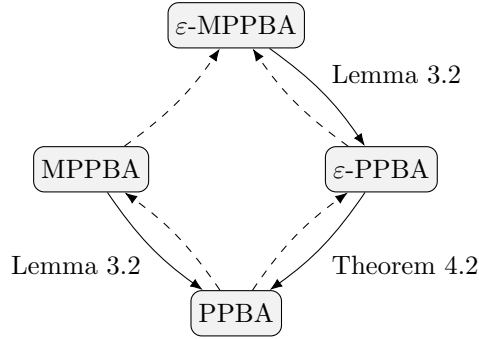

\subsection{$\varepsilon$-elimination for SPBA}

Finally, we prove that the reset models admit $\epsilon$-elimination. In this section, because we deal with SPBA, where every visit of an accepting state is resetting, we speak of resetting states instead of accepting states. 

\begin{lemma}
\label{lem:SPBAepselim}
For every $\varepsilon$-SPBA $\Amc$ there is an equivalent SPBA. In other words, SPBA admit $\varepsilon$-elimination.
\end{lemma}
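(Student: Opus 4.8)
The plan is to exploit the fact that the strong reset condition makes an $\varepsilon$-SPBA behave, between two consecutive resetting states, exactly like a Parikh automaton on \emph{finite} words, so that the finite-word $\varepsilon$-elimination of \Cref{lem:KR-finite} can be applied segment by segment. Write $\Amc = (Q, \Sigma, q_0, \Delta, \Emc, F, C)$. For every pair $(p,q)$ with $p \in F \cup \{q_0\}$ and $q \in F$, consider the finite-word $\varepsilon$-PA $\Amc_{p,q}^{\mathrm{seg}}$ obtained from $\Amc$ by taking $p$ as the sole source and $q$ as the sole accepting sink, forbidding any visit to a state of $F$ strictly between them, and keeping all $\varepsilon$-transitions together with the semi-linear set $C$. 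Its language $L_{p,q} \subseteq \Sigma^*$ is precisely the set of words that $\Amc$ may read along a single reset-to-reset segment from $p$ to $q$. By \Cref{lem:KR-finite} each $L_{p,q}$ is recognized by an $\varepsilon$-free finite-word PA.

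The subtle point is that a reset-to-reset segment may be \emph{empty}: a run may move from one resetting state to the next along a pure $\varepsilon$-path that reads no symbol yet carries a (possibly nonzero) Parikh contribution which must still lie in $C$. Such resets cannot survive literally in an $\varepsilon$-free SPBA, where every reset is anchored at the reading of a symbol. I would deal with them by absorption. Let $p \mathrel{E} q$ mean $\varepsilon \in L_{p,q}$, i.e.\ that there is a pure $\varepsilon$-path from $p$ to $q$ whose Parikh image lies in $C$, and let $E^{*}$ be the reflexive-transitive closure of $E$ on $F \cup \{q_0\}$. Since the run lies in $(\Emc^{*}\Delta)^{\omega}$, between any two non-$\varepsilon$ transitions only finitely many $\varepsilon$-transitions occur, so only finitely many empty segments can appear consecutively, and having infinitely many resets forces infinitely many nonempty segments. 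Grouping each maximal block of (finitely many) empty segments together with the following nonempty segment yields the characterization
\[
S_\omega(\Amc) = \{\, w_1 w_2 w_3 \cdots \mid w_i \neq \varepsilon,\ \exists\, q_0 = p_0, p_1, p_2, \dots \in F \text{ with } w_i \in \hat L_{p_{i-1}, p_i} \text{ for all } i \ge 1 \,\},
\]
where $\hat L_{p,q} = \bigcup_{p' :\, p \mathrel{E^{*}} p'} (L_{p',q} \setminus \{\varepsilon\})$. Each $\hat L_{p,q}$ is a finite union of finite-word Parikh languages and hence, by \Cref{lem:KR-finite} and closure of the Parikh-recognizable languages under union, is recognized by an $\varepsilon$-free finite-word PA $\hat\Amc_{p,q}$ with some semi-linear set $\hat C_{p,q}$.

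Finally I would reassemble these $\varepsilon$-free gadgets into an $\varepsilon$-free \emph{MSPBA} $\Amc''$. Its state set is the disjoint union of the internal states of all $\hat\Amc_{p,q}$ together with a fresh initial state $q_0'$ and, for each pair, a resetting state $q^{(p,q)}$ carrying the semi-linear set $\hat C_{p,q}$. From $q_0'$ we copy the first transitions of every $\hat\Amc_{q_0,q}$; whenever a gadget $\hat\Amc_{p,q}$ would enter an accepting configuration we provide two branches, one that continues inside the gadget (via a non-resetting copy, so that longer words of $\hat L_{p,q}$ can still be read) and one that moves to the resetting state $q^{(p,q)}$; and from $q^{(p,q)}$ we copy the first transitions of every gadget $\hat\Amc_{q,q'}$. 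Because $q^{(p,q)}$ is the only resetting state reachable on a segment and carries exactly $\hat C_{p,q}$, the strong reset condition at $q^{(p,q)}$ holds if and only if the word read since the previous reset lies in $\hat L_{p,q}$, and reading an infinite word forces infinitely many such resets; this matches the displayed characterization, so $S_\omega(\Amc'') = S_\omega(\Amc)$ while $\Amc''$ is $\varepsilon$-free by construction. Applying \Cref{lem:MSPBAtoSPBA} to turn the $\varepsilon$-free MSPBA $\Amc''$ into an $\varepsilon$-free SPBA completes the argument. The main obstacle is precisely the treatment of the empty, $\varepsilon$-only segments: one must show that the static Parikh checks they impose are faithfully captured by the reachability closure $E^{*}$, so that no constraint is lost and none is added spuriously, and that folding them into the following symbol-reading segment neither creates nor destroys accepting runs.
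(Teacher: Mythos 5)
Your argument is correct, but it takes a genuinely different route from the paper's proof. The paper performs an in-place surgery on the transition graph: it adds shortcut transitions for blocks in $\Emc^*\Delta\Emc^*$ without internal resets, introduces auxiliary states $(p,S,q)$ for ``$C$-sequences'' of resetting states traversed by pure $\varepsilon$-runs, adds one counter per state to record which states were visited since the last reset, and modifies $C$ by subtracting the semi-linear sets of $\varepsilon$-cycles on visited states (in the style of Klaedtke--Rue{\ss}), followed by a four-case analysis in each direction. You instead exploit the decisive feature of the strong reset semantics -- that consecutive inter-reset segments are independent finite-word Parikh languages $L_{p,q}$ -- so that \Cref{lem:KR-finite} can be invoked as a black box per segment, with the $\varepsilon$-only resets absorbed by the reachability closure $E^{*}$ (which plays exactly the role of the paper's $C$-sequences/$C$-pairs) and the pieces reassembled into an $\varepsilon$-free MSPBA, finished off by \Cref{lem:MSPBAtoSPBA}. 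This is cleaner and more modular than the paper's construction, it avoids redoing the path/cycle decomposition by hand, and it makes transparent why the same strategy cannot work for PPBA (no resets, so segments are not independent), consistent with the paper's far more involved PPBA elimination; the price is that a few routine details are left implicit -- padding the gadgets $\hat\Amc_{p,q}$ to a common counter dimension before forming the disjoint union, closure of $\LPA$ under union and the use of \Cref{lem:normalized} to remove $\varepsilon$ from each segment language, and the source/sink copies needed to forbid intermediate visits to $F$ inside a segment automaton -- none of which is problematic. Your characterization of $S_\omega(\Amc)$ as infinite products of nonempty segment words is sound: each maximal block of empty segments is confined to a single finite $\varepsilon$-block of a run in $(\Emc^*\Delta)^\omega$, so grouping it with the following nonempty segment is well defined, and unfolding a word $w_i\in\hat L_{p_{i-1},p_i}$ back into a chain of $E$-steps followed by a nonempty segment recovers a legal run satisfying the strong reset condition, since no segment visits $F$ internally.
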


\begin{proof} 
    Let $\Amc = (Q, \Sigma, q_0, \Delta, \Emc, F, C)$ be an $\varepsilon$-SPBA of dimension $d$. We assume w.l.o.g.\ that~$q_0$ has no ingoing transitions (this can be achieved by introducing a fresh copy of~$q_0$). Furthermore, we assume that $F \neq \varnothing$ (otherwise $S_\omega(\Amc) = \varnothing$). 
    Let the states of $Q$ be ordered arbitrarily, say $Q = \{q_0, \dots, q_{n-1}\}$.
    We construct an equivalent SPBA $\Amc' = (Q', \Sigma, q_0, \Delta', F', C')$ of dimension $d + n$. In the beginning, $\Amc'$ is a copy of $\Amc$ (keeping the $\epsilon$-transitions for now), which is modified step-by-step.
    The purpose of the new counters is to keep track of the states that have been visited (since the last reset). Initially, we hence modify the transitions as follows:
    for every transition $(q_i, \gamma, \vbf, q_j) \in \Delta \cup \Emc$ we replace $\vbf$ by $\vbf \cdot \ebf_j^n$. 
    
    Let $p, q \in Q$. Assume there is a sequence of transitions $\tilde\lambda = r_1 \dots r_j \dots r_k \in \Emc^* \Delta \Emc^*$; $1\leq j \leq k \leq 2n+1$, where 
    \begin{itemize}
        \item $r_j = (p_{j-1}, a, \vbf_j, p_j) \in\Delta$, and 
        \item $r_i = (p_{i-1}, \varepsilon, \vbf_i, p_i) \in \Emc$ for all $i \neq j, i \leq k$,
        \item such that $p_0 = p, p_k = q$, and $p_i\neq p_\ell$ for $i,\ell\leq j$ and $p_i\neq p_\ell$ for $i,\ell\geq j$, and
        \item all internal states are non-resetting, \ie, $p_i \notin F$ for all $0 < i < k$.
    \end{itemize}  
    
    Then we introduce the \emph{shortcut} $(p, a, \rho(\tilde\lambda), q)$, where $\rho(\tilde\lambda)$ is computed already with respect to the new counters, tracking that the $p_i$ in $\tilde\lambda$ have been visited, \ie, the counters corresponding to the $p_i$ in this sequence have non-zero values. 
    
    Let $p,q\in Q$. We call a (possibly empty) sequence $\lambda = r_1 \dots r_k \in \Emc^*; k\geq 0$ with $r_i = (p_{i-1}, \varepsilon, \vbf_i, p_i)$ and $p_0 = p, p_k = q$ a \emph{no-reset $\varepsilon$-sequence} from $p$ to~$q$ if all internal states are non-resetting, \ie, $p_i \notin F$ for all $0 < i < k$. A \emph{no-reset $\epsilon$-path} is a no-reset sequence such that $p_i\neq p_j$ for $i\neq j$. 
    Observe that the set of no-reset $\varepsilon$-paths from $p$ to $q$ is finite, as the length of each path is bounded by $n-1$.
    We call the pair $(p,q)$ a \emph{$C$-pair} if there is a no-reset $\varepsilon$-sequence $r$ from $p$ to $q$ with $\rho(r) \in C$, where $\rho(r)$ is computed in $\Amc$. 
    
    Let $S=(f_1,\ldots, f_\ell)$ be a non-empty sequence of pairwise distinct resetting states (note that this implies $\ell \leq n$). We call $S$ a \emph{$C$-sequence} if each $(f_i,f_{i+1})$ is a $C$-pair.
    
    For all $p,q\in F$ and $C$-sequences $S$ such that $p=f_1$ if $p\in F$ and $q=f_\ell$ if $q\in F$, we introduce a new state $(p,S,q)$.
    We add $(p,S,q)$ to $F'$, that is, we make the new states resetting. 
    State $(p,S,q)$ will represent a partial run of the automaton with only $\epsilon$-transitions starting in $p$, visiting the resetting states of $S$ in that order, and ending in $q$.

    Observe that in the following we introduce only finitely many transitions by the observations made above, we will not repeat this statement in each step.
    Let $p, q \in Q$ and $S = (f_1, \dots, f_\ell)$ be a $C$-sequence.
    For every transition of the form $(s, a, \vbf, p) \in \Delta$ we insert new transitions $\{(s, a, \vbf + \rho(\lambda), (p,S,q)) \mid \lambda$ is a no-reset $\varepsilon$-path from $p$ to $f_1\}$ to~$\Delta'$. 
     Similarly, for every transition of the form $(q, a, \vbf, t) \in \Delta$ we insert new transitions $\{((p,S,q), a, \vbf + \rho(\lambda), t) \mid \lambda \text{ is a no-reset $\varepsilon$-path from $f_\ell$ to $q$}\}$ to $\Delta'$. 
    Again this set is finite. 
    Additionally, let $p', q' \in Q$ and $S' = (f_1', \dots, f'_k)$ a $C$-sequence. For every sequence $\tilde\lambda=\lambda \delta \lambda'$ where $\lambda$ is a no-reset $\varepsilon$-path from $f_\ell$ to $q$, $\delta = (q, a, \vbf, p')$, and $\lambda'$ is a no-reset $\varepsilon$-path from~$p'$ to $f'_1$
    we add the shortcuts $\big((p,S,q), a, \rho(\tilde{\lambda}), (p',S',q')\big)$ to $\Delta'$. 
    
    Lastly, we connect the initial state $q_0$ in a similar way (recall that we assume that $q_0$ has no ingoing transitions, and in particular no loops). 
    For every transition $(p, a, \vbf, q) \in \Delta$ and every $C$-sequence $S = (f_1, \dots, f_l)$ with the property that $(q_0, f_1)$ is a $C$-pair and there is a no-reset $\varepsilon$-path $\lambda$ from $f_\ell$ to $p$, we introduce the transition $(q_0, a, \rho(\lambda) + \vbf, q)$ for every such path $\lambda$. 
    Additionally, for every $C$-sequence $S' = (f_1', \dots f'_k)$ such that there is a no-reset $\varepsilon$-path~$\lambda'$ from $q$ to $f_1'$, we introduce the transition $\big(q_0, a, \rho(\lambda) + \vbf + \rho(\lambda'), (q,S',t)\big)$ for all such paths $\lambda, \lambda'$ and $t\in Q$. 
    Furthermore, for every no-reset $\varepsilon$-path $\hat\lambda$ from $q_0$ to~$p$, we introduce the transition $(q_0, a, \rho(\hat{\lambda}) + \vbf + \rho(\lambda'), (q,S',t))$ for all $t \in Q$. 
    A reader who is worried that we may introduce too many transitions at this point shall recall that $(q,S',t)$ has no outgoing transition if there does not exist a no-reset $\epsilon$-path from $f_k'$ to $t$.
    Finally, we delete all $\epsilon$-transitions.
    
    We define $C'$ similar to the construction due to Klaedtke and Ruess \cite{klaedtkeruess}. For every $q \in Q$ we define $C_q = \sum \hat{\Bmc}_{q,q}$, where $\hat{\Bmc}_{q,q}$ is defined as $\Bmc_{q,q}$ but without any accepting states, that is, $\hat{\Bmc}_{q,q} = (Q \setminus F, \Gamma, q, \{(p, \vbf, p') \mid (p, \varepsilon, \vbf, p') \in \Emc, p, p' \notin F\},\{q\})$ for a suitable alphabet $\Gamma \subseteq \Nbb^d$. 
    Then $C' = \{\vbf \cdot (x_0, \dots, x_{n-1}) \mid \vbf + \ubf \in C, \ubf \in \sum_{x_i \geq 1} C_{q_i}\}$. 
    By this, we substract the $C_{q_i}$ if the counter for $q_i$ is greater or equal to one, that is, the state has been visited. 
    This finishes the construction. 
    
    We now prove that $\Amc'$ is equivalent to $\Amc$. In the one direction we compress the run by using the appropriate shortcuts, in the other direction we unravel it accordingly. 

    \medskip
    $\Rightarrow$ To show that $S_\omega(\Amc)\subseteq S_\omega(\Amc')$, let $\alpha \in S_\omega(\Amc)$ with accepting run $r = r_1 r_2 r_3 \dots$. If there are no $\varepsilon$-transitions in $r$, we are done (as $r$ is also an accepting run of $\Amc'$ on $\alpha$).
    
    Otherwise, we construct an accepting run $r'$ of $\Amc'$ on $\alpha$ by replacing maximal \mbox{$\varepsilon$-sequences} in $r$ step-by-step. Let $i$ be minimal such that $r_i \dots r_j$ is a maximal $\varepsilon$-sequence. 
    Let $r_i = (p_{i-1}, \varepsilon, \vbf_i, p_i), r_j = (p_{j-1}, \varepsilon, \vbf_j, p_j)$, and $r_{j+1} = (p_j, \alpha_z, \vbf_{j+1}, p_{j+1})$. It might be the case that $i = 1$, \ie, the run $r$ starts with an $\varepsilon$-transition leaving $q_0$. Otherwise $i > 1$ and we can write $r_{i-1} = (p_{i-2}, \alpha_{z-1}, \vbf_{i-1}, p_{i-1})$.
    By allowing the empty sequence, we may assume that there is always a second (possibly empty) maximal $\varepsilon$-sequence $r_{j+2} \dots r_{k}$ starting directly after $r_{j+1}$. 
    We distinguish (the combination of) the following cases.
    \begin{itemize}
        \item At least one state in $r_i \dots r_j$ is resetting, \ie, there is a position $i-1 \leq \ell \leq j$ such that $p_\ell \in F$ (F) or not (N).
        \item At least one state in $r_{j+2} \dots r_k$ is resetting, \ie, there is a position $j+1 \leq \ell' \leq k$ such that $p_{\ell'} \in F$ (F) or not (N). If $r_{j+2} \dots r_k$ is empty, we are in the case~(N).
    \end{itemize}

Hence, we consider four cases in total.
    
    \medskip
    $\bullet$ Case (NN). That is, there is no resetting state in $r_i \dots r_k$. Note that the $\varepsilon$-sequence $r_i \dots r_j$ can be decomposed into an $\varepsilon$-path and $\varepsilon$-cycles as follows. 
    If we have $p_{i_1} \neq p_{j_1}$ for all $i \leq i_1 < j_1 \leq j$ we are done as $r_i \dots r_j$ is already an $\varepsilon$-path. Otherwise let $i_1 \geq i$ be minimal such that there is $j_1 > i_1$ with $p_{i_1} = p_{j_1}$, that is, $r_{i_1+1} \dots r_{j_1}$ is an $\varepsilon$-cycle. 
    If $r_i \dots r_{i_1} r_{j_1+1} \dots r_j$ is an $\varepsilon$-path, we are done. 
    Otherwise, let $i_2 > j_1$ be minimal such that there is $j_2 > i_2$ with $p_{i_2} = p_{j_2}$, that is, $r_{i_2+1} \dots r_{j_2}$ is an $\varepsilon$-cycle. 
    Then again, if $r_i \dots r_{i_1} r_{j_1+1} \dots r_{i_2} r_{j_2+1} \dots r_j$ is an $\varepsilon$-path, we are done. 
    Otherwise, we can iterate this argument and obtain a set of $\varepsilon$-cycles $r_{i_1+1} \dots r_{j_1}, \dots, r_{i_m+1} \dots r_{j_m}$ for some $m$, and an $\varepsilon$-path $\hat{r}_{i,j} = r_i \dots r_{i_1} r_{j_1+1} \dots r_{i_{m}} r_{j_m+1} \dots r_j$ which partition $r_i \dots r_j$. 
    Now observe that $\rho(r_{i_1+1} \dots r_{j_1}) + \dots + \rho(r_{i_m+1} \dots r_{j_m}) \in C_{p_{i_1}} + \dots + C_{p_{i_m}}$. 
    We can do the same decomposition for the $\varepsilon$-sequence $r_{j+2} \dots r_k$ into a set of $\varepsilon$-cycles and an $\varepsilon$-path $\hat{r}_{j+2,k}$. 
    By the construction of $\Delta'$, there is a shortcut $\delta = (p_{i-1}, \alpha_z, (\rho(\hat{r}_{i,j}) + \vbf_{j+1} + \rho(\hat{r}_{j+2,k})) \cdot \hat\ebf^n, p_k)$, where $\hat\ebf^n$ is the $n$-dimensional vector counting the states appearing in $\hat r_{i,j}$ and $\hat r_{j+2,k}$ and the state $p_{j+1}$. 
    By the construction of $\Delta'$ and $C'$, we may subtract all $\varepsilon$-cycles that have been visited in $r_i \dots r_k$, hence, we may replace $r_i \dots r_k$ by $\delta$ to simulate exactly the behavior of~$\Amc$.

    \medskip
    $\bullet$
    Case (NF). That is, there is no resetting state in $r_i \dots r_j$ but at least one resetting state in $r_{i+2} \dots r_k$ (in particular, this sequence is not empty). 
    Let $\ell_1, \dots, \ell_m$ denote the positions of resetting states in $r_{i+2} \dots r_k$, and let $\ell_0 < \ell_1$ be maximal such that~$\ell_0$ is resetting (this is before $r_i$, and if such an $\ell_0$ does not exist, let $\ell_0 = 0$), \ie, $\ell_0$ is the position of the last reset before the reset at position $\ell_1$. 
    As $r$ is an accepting run, the sequence $S = (\ell_1, \dots, \ell_m)$ is a $C$-sequence (we may assume that all states in~$S$ are pairwise distinct, otherwise there is a reset-cycle, which can be ignored). 
    In the same way as in the previous case we can partition the $\varepsilon$-sequence $r_i \dots r_j$ into an $\varepsilon$-path $\hat{r}_{i,j}$ and a set of $\varepsilon$-cycles, which may be subtracted from $C$. 
    Likewise, we can partition the sequence $r_{j+2} \dots r_{\ell_1}$ into an $\varepsilon$-path $\hat{r}_{j+2,\ell_1}$ and $\varepsilon$-cycles with the same property. 
    By the construction of $\Delta'$ there is a shortcut $(p_{i-1}, a, \rho(\hat{r}_{i,j}) + \vbf_{j+1}, p_{j+1})$ and hence a transition $\delta = (p_{i-1}, a, \rho(\hat{r}_{i,j}) + \vbf_{j+1} + \rho(\hat{r}_{j+2, \ell_1}), (p_{j+1}, S,p_k))$ (note that this is also the case if~$i = 1$). 
    Thus, we replace $r_i \dots r_k$ by~$\delta$. 
    In particular, $\rho(r_{\ell_0+1} \dots r_{i-1}\delta)$ can be obtained from $\rho(r_{\ell_0+1} \dots r_{\ell_1})$ by subtracting all $\varepsilon$-cycles that have been visited within this partial run. 
    Furthermore, observe that $\rho(r_{\ell_1+1} \dots r_{\ell_2}) \in C, \dots, \rho(r_{\ell_{m-1}+1} \dots r_{\ell_m}) \in C$ depend only on the automaton, and not the input word. 
    As the counters are reset in $r_{\ell_m}$, we may continue the run from $\delta$ the same way as in $r_k$, using an appropriate transition from~$\Delta'$ that adds the vector $\rho(\hat{r}_{\ell_{m}+1, k})$, thus respecting the acceptance condition.

    \medskip
    $\bullet$
    Case (FN). Similar to (NF), but this time we replace $r_{i-1}r_i \dots r_j$ by an appropriate transition into a state of the form $(p_{i-2}, \alpha_z, \vbf, (p_{i-1}, S, p_j))$ for a suitable $C$-sequence $S$ and vector $\vbf$, followed by a shortcut leading to $p_k$. If $i = 0$ (we enter a $C$-sequence before reading the first symbol), we make use of the transitions introduced especially for $q_0$.

    \medskip
    $\bullet$
    Case (FF). Similar to (FN) and (NF), but we transition from a state of the form $(p_{i-1}, S, p_j)$ into a state of the form $(p_{j+1}, S', p_k)$ for suitable $C$-sequences $S, S'$, again respecting the case $i = 0$.
    \hfill $\lrcorner$

    \medskip
    $\Leftarrow$ To show that $S_\omega(\Amc')\subseteq S_\omega(\Amc)$ we unravel the shortcuts and $(p, S,q)$-states introduced in the construction. Let $\alpha \in S_\omega(\Amc')$ with accepting run $r' = r'_1 r'_2 r'_3 \dots$. We replace every transition $r'_i \in \Delta' \setminus \Delta$ (\ie, transitions that do not appear in $\Amc$) by an appropriate sequence of transitions in $\Amc$.
    Let $i \geq 1$ be minimal such that $r'_i$ is a transition in $\Delta' \setminus \Delta$.

    We distinguish the form of $r'_i$ and show that the possible forms correspond one-to-one to the cases in the forward direction.

      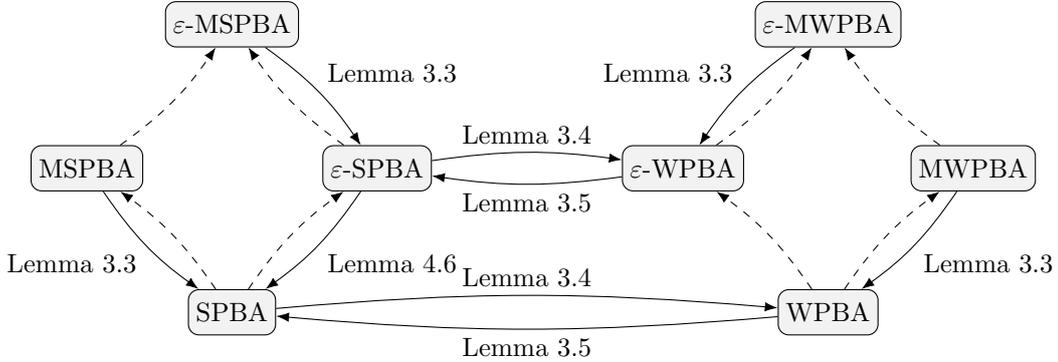
\begin{figure}
    \centering
	    \begin{tikzpicture}[%
	      node distance=27mm,>=Latex,
	      initial text="", initial where=below left,
	      every state/.style={rectangle,rounded corners,draw=black,thin,fill=black!5,inner sep=1mm,minimum size=6mm},
	      every edge/.style={draw=black,thin}
	    ]
	    \node[state] (SPBA) {SPBA};
	    \node[state,above right of = SPBA] (eSPBA) {$\varepsilon$-SPBA};
        \node[state,above left  of = SPBA] (MSPBA) {MSPBA};
        \node[state,above left  of = eSPBA] (eMSPBA) {$\varepsilon$-MSPBA};

        \node[state, right = 25mm of eSPBA] (eWPBA) {$\varepsilon$-WPBA};
	    \node[state,below right of = eWPBA] (WPBA) {WPBA};
        \node[state,above right of = WPBA]  (MWPBA) {MWPBA};
        \node[state,above right of = eWPBA] (eMWPBA) {$\varepsilon$-MWPBA};
 
	    \path[->, dashed]
          (SPBA)  edge [bend left  = 10] (eSPBA)
          (SPBA)  edge [bend right = 10] (MSPBA)
          (MSPBA) edge [bend right = 10] (eMSPBA)
          (eSPBA) edge [bend left  = 10] (eMSPBA)

          (WPBA)  edge [bend right = 10] (eWPBA)
          (WPBA)  edge [bend left  = 10] (MWPBA)
          (MWPBA) edge [bend left  = 10] (eMWPBA)
          (eWPBA) edge [bend right = 10] (eMWPBA)
	    ; 

     	\path[->] 
          (eMSPBA) edge [bend left=10] node[above right] {\Cref{lem:MSPBAtoSPBA}} (eSPBA)
          (eSPBA) edge [bend left=10] node[below right] {\Cref{lem:SPBAepselim}} (SPBA)
          (MSPBA) edge [bend right=10] node[below left] {\Cref{lem:MSPBAtoSPBA}} (SPBA)

          (eSPBA) edge [bend left=8] node[above] {\Cref{lem:SPBAtoWPBA}} (eWPBA)
          (eWPBA) edge [bend left=8] node[below] {\Cref{lem:WPBAtoSPBA}} (eSPBA)
          (SPBA)  edge [bend left=5] node[above] {\Cref{lem:SPBAtoWPBA}} (WPBA)
          (WPBA)  edge [bend left=5] node[below] {\Cref{lem:WPBAtoSPBA}} (SPBA)

          (eMWPBA) edge [bend right=10] node[above left] {\Cref{lem:MSPBAtoSPBA}} (eWPBA)
          (MWPBA) edge [bend left=10] node[below right] {\Cref{lem:MSPBAtoSPBA}} (WPBA)
	    ;

	    \end{tikzpicture}        
        \caption{Equivalences of the reset models. Dashed lines represent obvious containment.}
        \label{fig:equivalencesReset}
    \end{figure}
    
    \begin{itemize}
        \item Case (NN). The case that $r'_i = (p, a, \rho(\tilde{\lambda}), q)$ is a shortcut, \ie, $\tilde{\lambda} \in \Emc^* \Delta \Emc^*$, corresponds to the case (NN). 
        In particular, there are no accepting states in $r$. Let $k < i$ be the position of the last reset before $r'_i$, and $k'$ the position of the first reset after $r'_i$, where $k' = i$ if $r'_i$ transitions into a resetting state. By the acceptance condition we have $\rho(r'_{k+1} \dots r'_{k'}) \in C - (\sum_{q \in Q'} C_q)$ for some set $Q' \subseteq Q$ based on the counter values. Hence, we can replace $r'_i$ by the partial run $\tilde{\lambda}$ filled with possible $\varepsilon$-cycles on some states in $Q'$.
        \item Case (NF). The case that $r'_i = (s, a, \vbf + \rho(\lambda), (p,S,q))$ such that $S = (f_1, \dots f_\ell)$ is a $C$-sequence, there is a transition $\delta = (s, a, \vbf, p) \in \Delta$ and $\lambda$ is a no-reset $\varepsilon$-path from $p$ to $f_1$, corresponds to the case (NF). 
        By the definition of $C$-sequence there is a sequence $r_{f_1, f_\ell}$ of $\varepsilon$-transitions in $\Amc$ starting in $f_1$, ending in $f_\ell$, visiting the resetting states $f_1$ to $f_\ell$ (in that order) such that the reset-acceptance condition is satisfied on every visit of one the accepting states. 
        Then we can replace $r'_i$ by $\delta \lambda r_{f_1, f_\ell}$, possibly again filled with some $\varepsilon$-cycles based on the state counters of~$\lambda$, similar to the previous case. Note that at this point we do not yet unravel the path from $f_\ell$ to $q$, as it depends on how the run $r'$ continues (as handled by the next two cases).
        \item Case (FN). The case that $r'_i = ((p,S,q), a, \vbf + \rho(\lambda), t)$ such that $S = (f_1, \dots f_\ell)$ is a $C$-sequence, there is a transition $\delta = (q, a, \vbf, t) \in \Delta$ and $\lambda$ is a no-reset $\varepsilon$-path from~$f_\ell$ to $q$, corresponds to the case (FN). 
        Similar to the previous case, we can replace~$r'_i$ by~$\lambda\delta$, possibly again amended with some $\varepsilon$-cycles based on the state counters of~$\lambda$.
        If $i = 1$, the transition might also be of the form $r'_1 = (q_0, \alpha_1, \rho(\lambda) + \vbf, t)$ such that $S$ is a $C$-sequence with the property that $(q_0, f_1)$ is a $C$-pair. Then there is a sequence of $\varepsilon$-transitions $r_{q_0, f_\ell}$ in $\Amc$ as above. 
        Then we replace $r'_1$ by $r_{q_0, f_\ell} \lambda \delta$ (with possible $\varepsilon$-cycles) instead.
        \item Case (FF). The case that $r'_i = ((p,S,q), a, \rho(\tilde{\lambda}), (p', S', q')$ such that $S = (f_1, \dots f_\ell)$ and $S' = (f'_1, \dots, f'_k)$ are $C$-sequences, there is a transition $\delta = (q, a, \vbf, p') \in \Delta$ and $\tilde{\lambda} = \lambda \delta \lambda'$,  where $\lambda$ is a no-reset $\varepsilon$-path from $f_\ell$ to $q$ and $\lambda'$ is a no-reset $\varepsilon$-path from $p'$ to $f_1'$, corresponds to the case (FF).
        This case is basically the union of the previous cases. There is a sequence $r_{f'_1, f'_k}$ of $\varepsilon$-transitions in $\Amc$, as in the case (RF). Hence, we replace $r'_i$ by $\tilde{\lambda} r_{f'_1, f'_k}$ (with possible $\varepsilon$-cycles).
        If $i = 1$, the transition might also be of the form $r'_1 = (q_0, \alpha_1, \rho(\lambda) + \vbf + \rho(\lambda'), (p', S', q'))$ such that $(q_0, f_1)$ is a $C$-pair. Then there is a sequence of $\varepsilon$-transitions $r_{q_0, f_\ell}$ in $\Amc$ as above, and we replace $r'_1$ by $r_{q_0, f_\ell} \tilde{\lambda} r_{f'_1, f'_k}$ (with possible $\varepsilon$-cycles). \hfill $\lrcorner$
    \end{itemize}

This finishes the proof of the lemma. 
\end{proof}

Concluding this section, we have proved the equivalence of all reset models, as stated in the following theorem. See \Cref{fig:equivalencesReset} for an illustration.

\begin{theorem}
\label{thm:reset-equivalence}
Let $L$ be an $\omega$-language. Then the following statements are equivalent.
\begin{enumerate}
    \item $L$ is SPBA-recognizable.
    \item $L$ is MSPBA-recognizable.
    \item $L$ is $\varepsilon$-SPBA-recognizable.
    \item $L$ is $\varepsilon$-MSPBA-recognizable.
    \item $L$ is WPBA-recognizable.
    \item $L$ is MWPBA-recognizable.
    \item $L$ is $\varepsilon$-WPBA-recognizable.
    \item $L$ is $\varepsilon$-MWPBA-recognizable.
\end{enumerate}
\end{theorem}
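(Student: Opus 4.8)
The plan is to chase the reduction diagram in \Cref{fig:equivalencesReset}: each of the eight model classes lies in a single equivalence class, and this is established by composing the four conversion lemmas already proved in this section. I take SPBA as the canonical model and argue that (i) every one of the eight models reduces to an equivalent SPBA, and (ii) the SPBA class embeds into each of the eight models. Since both directions hold, all eight classes coincide with the SPBA class and hence with one another.

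For the trivial containments (the dashed arrows in the figure) I simply note that every SPBA is literally an $\varepsilon$-SPBA (take $\Emc = \varnothing$) and an MSPBA (take the constant function $q \mapsto C$), and likewise every WPBA sits inside MWPBA, $\varepsilon$-WPBA, and $\varepsilon$-MWPBA; these inclusions require no construction. For direction (ii), the only non-obvious step is SPBA $\subseteq$ WPBA, which is \Cref{lem:SPBAtoWPBA}; composing this with the trivial inclusions places the SPBA class inside every node on both halves of the diagram.

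For direction (i), I compose the lemmas in an order dictated by the fact that $\varepsilon$-elimination (\Cref{lem:SPBAepselim}) is available only for the single-set strong-reset model. Concretely: starting from an $\varepsilon$-MSPBA I first apply \Cref{lem:MSPBAtoSPBA} to obtain an equivalent $\varepsilon$-SPBA and then \Cref{lem:SPBAepselim} to obtain an SPBA; an MSPBA passes directly to an SPBA via \Cref{lem:MSPBAtoSPBA}; an $\varepsilon$-SPBA passes to an SPBA via \Cref{lem:SPBAepselim}. On the weak-reset side I first cross over to the strong-reset side using \Cref{lem:WPBAtoSPBA}: an $\varepsilon$-WPBA becomes an $\varepsilon$-SPBA and then an SPBA, a WPBA becomes an SPBA, an $\varepsilon$-MWPBA becomes an $\varepsilon$-WPBA via \Cref{lem:MSPBAtoSPBA} and then follows the $\varepsilon$-WPBA route, and an MWPBA becomes a WPBA via \Cref{lem:MSPBAtoSPBA} and then an SPBA via \Cref{lem:WPBAtoSPBA}. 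In each step the recognized $\omega$-language is preserved, and the side conditions of the lemmas (``if the input has no non-loop $\varepsilon$-transitions, then neither does the output'') guarantee that the intermediate automata remain $\varepsilon$-free wherever the next step requires it.

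The argument is essentially bookkeeping once the four conversion lemmas are in hand; the only point requiring genuine care is the routing just described. Because \Cref{lem:SPBAepselim} does not apply to multi-set or weak-reset automata directly, one must \emph{first} normalize any model to a single-set strong-reset automaton and only \emph{then} eliminate $\varepsilon$-transitions — applying $\varepsilon$-elimination too early would leave a model for which no such elimination result has been established. Combining (i) and (ii), the reduction graph of \Cref{fig:equivalencesReset} is strongly connected, so all eight classes are equal.
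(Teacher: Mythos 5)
Your proposal is correct and follows essentially the same route as the paper, which likewise establishes the theorem by composing \Cref{lem:MSPBAtoSPBA}, \Cref{lem:SPBAtoWPBA}, \Cref{lem:WPBAtoSPBA}, and \Cref{lem:SPBAepselim} along the reduction diagram of \Cref{fig:equivalencesReset}, normalizing to a single-set strong-reset automaton before eliminating $\varepsilon$-transitions. Your explicit remark about the ordering constraint (that $\varepsilon$-elimination is only available for SPBA and must come last) is exactly the routing the figure encodes.
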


Recall that $\LReset$ denotes the class of all SPBA-recognizable $\omega$-languages, which by the above theorem can equivalently be defined using any reset model. 

\section{Equivalence of PPBA and blind counter machines and their closure \mbox{properties}}
\label{sec:blind-counter}
In this section we prove that PPBA describe the same class of $\omega$-languages as (synchronous) blind counter machines introduced by Fernau and Stiebe \cite{blindcounter}. A \emph{blind $k$-counter machine} (CM) is quintuple $\Mmc = (Q, \Sigma, q_0, \Delta, F)$ where $Q$, $\Sigma$, $q_0$ and $F$ are defined as for NFA,
and $\Delta \subseteq Q \times (\Sigma \cup \{\varepsilon\}) \times \Zbb^d \times Q$ is the set of \emph{integer labeled transitions}. In particular, the transitions of $\Delta$ are labeled with possibly negative integer vectors. Furthermore, $\varepsilon$-transitions are allowed.

A \emph{configuration} for an infinite word $\alpha = \alpha_1\alpha_2\alpha_3\dots$ of $\Mmc$ is a tuple of the form $c = (p, \alpha_1 \dots \alpha_i, \alpha_{i+1} \alpha_{i+2} \dots, \vbf) \in Q \times \Sigma^* \times \Sigma^\omega \times \Zbb^k$ for some $i \geq 0$. 
A configuration $c$ \emph{derives} into a configuration $c'$, written $c \vdash c'$, if either $c' = (q, \alpha_1 \dots \alpha_{i+1}, \alpha_{i+2} \dots, \vbf + \ubf)$ and $(p, \alpha_{i+1}, \ubf, q) \in \Delta$, or $c' = (q, \alpha_1 \dots \alpha_i, \alpha_{i+1}\alpha_{i+2} \dots, \vbf + \ubf)$ and $(p,\varepsilon, \ubf,q) \in \Delta$. 
$\Mmc$~\emph{accepts} an infinite word $\alpha$ if there is an infinite sequence of configuration derivations $c_1 \vdash c_2 \vdash c_3 \vdash \dots$ with $c_1 = (q_0, \varepsilon, \alpha, \0)$ such that for infinitely many $i$ we have $c_i = (p_i, \alpha_1 \dots \alpha_j, \alpha_{j+1} \alpha_{j+2} \dots, \0)$ with $p_i \in F$ and for all $j \geq 1$ there is a configuration of the form $(p, \alpha_1 \dots \alpha_j, \alpha_{j+1} \alpha_{j+1} \dots, q)$ for some $p, q \in Q$ in the sequence.
That is, a word is accepted if we infinitely often visit an accepting state when the counters are $\0$, and every symbol of $\alpha$ is read at some point.
We define the $\omega$-language recognized by $\Mmc$ as $L_\omega(\Mmc) = \{\alpha \in \Sigma^\omega \mid \Mmc \text{ accepts } \alpha\}$.


We show that we can effectively convert every CM into an equivalent $\varepsilon$-PPBA, which is equivalent to a PPBA by \Cref{thm:epsilon-elimination}. 
For the other direction we will make use of $\varepsilon$-transitions, \ie, our results do not yield an $\varepsilon$-elimination scheme for CM. To the best of our knowledge, it is unknown if CM without $\varepsilon$-transitions are as powerful as CM with $\varepsilon$-transitions.

\begin{lemma}
\label{lem:kcountertoPPBA} 
For every CM $\Mmc$ there is an equivalent $\varepsilon$-PPBA $\Amc$.
\end{lemma}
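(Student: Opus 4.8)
The plan is to simulate $\Mmc$ almost verbatim, keeping its state set $Q$, initial state $q_0$, accepting set $F$, and the entire transition graph, and to change only the labels: the $\Zbb^d$-vectors of $\Mmc$ become $\Nbb^{2d}$-vectors, and acceptance ``counters are $\0$'' is encoded by a single fixed semi-linear set $C$. The one genuine obstacle is that a PPBA may increment its counters only by \emph{non-negative} values, whereas a CM counter may decrease; I resolve this with the standard splitting trick.

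Concretely, I introduce two PPBA counters for each of the $d$ CM counters: one accumulating the positive contributions and one accumulating the absolute values of the negative contributions. For a transition $(p, \gamma, \vbf, q) \in \Delta$ with $\vbf = (v_1, \dots, v_d) \in \Zbb^d$, I define $\mathrm{split}(\vbf) \in \Nbb^{2d}$ by placing $\max(v_i, 0)$ in position $2i-1$ and $\max(-v_i, 0)$ in position $2i$, and I add $(p, \gamma, \mathrm{split}(\vbf), q)$ to $\Amc$, sending $\varepsilon$-transitions of $\Mmc$ into $\Emc$ and symbol transitions into $\Delta$. After reading any prefix, the value of counter $2i-1$ minus the value of counter $2i$ equals the $i$-th CM counter; hence the CM zero-test becomes ``the prefix sum $(a_1, b_1, \dots, a_d, b_d)$ satisfies $a_i = b_i$ for all $i$''. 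I therefore take $C = \{(a_1, b_1, \dots, a_d, b_d) \in \Nbb^{2d} \mid a_i = b_i \text{ for } 1 \le i \le d\}$, which is linear with base $\0$ and period vectors $\ebf^{2d}_{2i-1} + \ebf^{2d}_{2i}$, and keep $F$ as the accepting set of $\Amc$.

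For correctness I match derivation sequences of $\Mmc$ on $\alpha$ with runs of $\Amc$ on $\alpha$ transition-by-transition under the correspondence $\vbf \leftrightarrow \mathrm{split}(\vbf)$. The run shape $(\Emc^*\Delta)^\omega$ demanded of an $\varepsilon$-PPBA is exactly the CM requirement that every symbol of $\alpha$ is eventually read (only finitely many $\varepsilon$-steps between consecutive reads), so the two notions of ``run over $\alpha$'' coincide. Under this bijection a configuration with state $p_i \in F$ and counter vector $\0$ corresponds precisely to a position $i$ with $p_i \in F$ and prefix sum in $C$, that is, to an accepting hit; since the initial counter vector $\0$ matches the empty prefix sum $\0$, the invariant holds from the start. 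Consequently $\Mmc$ has infinitely many zero-accepting configurations along a run iff the matched run of $\Amc$ has infinitely many accepting hits, which yields $L_\omega(\Mmc) = P_\omega(\Amc)$.

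The step needing the most care is the exact alignment of the two acceptance conditions: checking that the ``all symbols read'' clause of the CM semantics is equivalent to membership of the run in $(\Emc^*\Delta)^\omega$ (neither stronger nor weaker), and that accepting hits are permitted after $\varepsilon$-transitions as well as after symbol transitions, so that no zero-accepting configuration of $\Mmc$ is lost in $\Amc$ nor any spurious one created. I note that this direction deliberately uses $\varepsilon$-transitions, which is harmless here since the target is an $\varepsilon$-PPBA; the subsequent reduction to an $\varepsilon$-free PPBA is supplied by \Cref{thm:epsilon-elimination}.
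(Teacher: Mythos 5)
Your construction is correct and is essentially the paper's own proof: the same positive/negative splitting of each $\Zbb$-counter into two $\Nbb$-counters (the paper groups all positive parts before all negative parts rather than interleaving them, an immaterial coordinate permutation), the same diagonal semi-linear set $C$ enforcing equality of the paired counters, and the same state-preserving translation of symbol and $\varepsilon$-transitions. The extra care you take in aligning the $(\Emc^*\Delta)^\omega$ run shape with the CM's ``every symbol is eventually read'' clause is precisely the verification the paper leaves to the reader with ``it is now easily verified.''
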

\begin{proof}
Let $\Mmc = (Q, \Sigma, q_0, \Delta, F)$ be a $k$-counter machine. For a vector \mbox{$(x_1, \dots, x_k) \in \Zbb^k$} we define the vector $\vbf^\pm = (x_1^+, \dots x_k^+, x_1^-, \dots x_k^-) \in \Nbb^{2k}$ as follows: if $x_i$ is positive, then $x_i^+ = x_i$ and $x_i^- = 0$. Otherwise, $x_i^+ = 0$ and $x_i^- = |x_i|$. We construct an equivalent $\varepsilon$-PPBA $\Amc = (Q, \Sigma, q_0, \Delta', \Emc', F, C)$ of dimension $2k$, where $\Delta' = \{(p, a, \vbf^\pm, q) \mid (p, a, \vbf, q) \in \Delta\}$ and $\Emc' = \{(p, \varepsilon, \vbf^\pm, q) \mid (p, \varepsilon, \vbf, q) \in \Delta\}$. Finally, we choose $C = \{(x_1, \dots, x_k, x_1, \dots, x_k) \mid x_i \in \Nbb\}$. It is now easily verified that $L_\omega(\Mmc) = P_\omega(\Amc)$.
\end{proof}

Next, we show that we can convert every PPBA with $d$ counters into an equivalent $d$-counter machine by introducing one new state.
\begin{lemma}
\label{lem:PPBAtokcounter}
For every PPBA $\Amc$ there is an equivalent CM $\Mmc$.
\end{lemma}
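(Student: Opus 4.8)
The plan is to exploit the two features that distinguish a blind counter machine from a PPBA: its counters range over \Zbb (so they may be decremented, even below zero) and acceptance only inspects whether all counters are simultaneously \0 at an accepting state. The idea is to let $\Mmc$ simulate $\Amc$, adding the (non-negative) transition vectors to its counters, and to certify the Parikh condition by \emph{subtracting} generators of the semi-linear set until the counters reach \0. A run of $\Amc$ has an accepting hit at position $i$ exactly when the accumulated vector lies in $C$, and $\Mmc$ should reach a fresh accepting state $q_f$ with all counters \0 precisely at those positions. First I would normalise: by \Cref{lemma:indep} we have $P_\omega(\Amc)=\bigcup_{f,j}P_\omega(\Amc_{f,j})$, where $\Amc_{f,j}$ has the single accepting state $f$ and the single linear set $C_j$ of $C$. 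Since $\varepsilon$-transitions are allowed in a CM, a finite union of CM-recognizable $\omega$-languages is again CM-recognizable (branch from a new initial state by $\varepsilon$-transitions labelled \0), so it suffices to build a CM for a fixed $\Amc_{f,j}$ with $C_j=\{b_0+b_1z_1+\dots+b_mz_m\mid z_t\in\Nbb\}$. I also assume $q_0$ has no ingoing transitions.

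The construction then adds a single new state $q_f$, the only accepting state of $\Mmc$, and keeps the $d$ counters. Every transition $(p,a,\vbf,q)\in\Delta$ is kept, except that the base vector is discharged once by replacing each initial transition $(q_0,a,\vbf,q)$ with $(q_0,a,\vbf-b_0,q)$ (legal since CM counters may go negative). From $f$ I add $\varepsilon$-transitions $(f,\varepsilon,\0,q_f)$ and $(q_f,\varepsilon,\0,f)$, and at $q_f$ an $\varepsilon$-loop $(q_f,\varepsilon,-b_t,q_f)$ for every period vector $b_t$. The invariant along any faithful simulation is that the counter value equals $\rho-b_0-P$, where $\rho$ is the accumulated Parikh image and $P\in M:=\{\sum_t b_tz_t\}$ is the total of the period vectors subtracted so far. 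A visit of $q_f$ with counter \0 therefore means $\rho=b_0+P\in C_j$, i.e.\ it marks an accepting hit of $\Amc_{f,j}$, and the detour $f\to q_f\to f$ uses only $\varepsilon$, so the reading of $\alpha$ is unaffected.

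The direction $L_\omega(\Mmc)\subseteq P_\omega(\Amc_{f,j})$ is then immediate from the invariant: each of the infinitely many zero-visits of $q_f$ sits over a position where $\Amc_{f,j}$ is in $f$ with $\rho\in C_j$. The converse is where I expect the real work. Given an accepting run of $\Amc_{f,j}$ with hits at positions $i_1<i_2<\dots$ and values $\rho_{i_t}\in C_j$, I must schedule the period subtractions so that the counter returns to exactly \0 infinitely often. Because subtraction is cumulative and monotone and the base is discharged only once, this is possible only if I can select an infinite subsequence of hits whose consecutive \emph{increments} $\rho_{i_{t+1}}-\rho_{i_t}$ lie in $M$. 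The key lemma is therefore that the translation preorder $v\preceq_M v'\iff v'-v\in M$ is a well-quasi-order on $M$; this is a Dickson-type statement holding because $M$ is a finitely generated commutative monoid (a quotient of $(\Nbb^k,+)$ by a monoid congruence, on which divisibility is well-quasi-ordered). Applying it to the sequence $(\rho_{i_t}-b_0)_t$ in $M$ yields an infinite $\preceq_M$-increasing subsequence whose increments are exactly period combinations that $\Mmc$ can subtract at $q_f$, and restricting the checkpoints to this subsequence produces an accepting CM run.

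The main obstacle is thus reconciling the \emph{non-resetting}, cumulative prefix condition of $\Amc$ with the zero-counter acceptance of $\Mmc$. One cannot simply subtract a fresh element of $C_j$ at each checkpoint, since that would only certify that the \emph{differences} between consecutive prefixes lie in $C_j$, not the prefixes themselves. This forces discharging the base vector exactly once and subtracting only period vectors in between, which in turn makes the well-quasi-order argument the technical heart of the converse direction.
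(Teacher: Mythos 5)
Your construction is essentially the paper's: you too discharge the negated base vector exactly once near the initial state and equip the accepting state with $\varepsilon$-loops labelled with the negated period vectors, after using \Cref{lemma:indep} and closure of CM under union to reduce to a single linear set. The only substantive difference is that the paper dismisses correctness as ``easily verified'', whereas you rightly observe that the completeness direction requires extracting an infinite subsequence of accepting hits whose consecutive increments lie in the monoid generated by the period vectors, and your Dickson-style well-quasi-order argument on that finitely generated monoid supplies exactly this missing step.
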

\begin{proof}
Let $\Amc = (Q, \Sigma, q_0, \Delta, F, C)$ be a PPBA of dimension $d$. We can assume that~$C$ is linear by \Cref{lemma:indep} and since CM are closed under union \cite{blindcounter}.
We construct a $d$-counter machine $\Mmc$ that simulates $\Amc$ as follows: $\Mmc$ consists of a copy of $\Amc$ where the accepting states have additional $\varepsilon$-transitions labeled with the negated period vectors of $C$
We only need to consider the base vector of $C$ a single time, hence we introduce a fresh initial state $q_0'$ and a $\varepsilon$-transition from $q'_0$ to $q_0$ labeled with the negated base vector of~$C$.
Observe that a vector $\vbf$ lies in $C = \{b_0 + b_1z_1 + \dots + b_\ell z_\ell \mid z_1, \dots, z_\ell\}$ if and only if $\vbf - b_1z_1 - \dots - b_\ell z_\ell - b_0 = \0$ for some $z_i$. 
Intuitively, $\Mmc$ computes the vector $\vbf$ in the copies of $Q$ and guesses the $z_i$ in the accepting states.
We construct $\Mmc = (Q \cup \{q_0'\}, \Sigma, q_0', \Delta', F)$ where
\[
\Delta' = \Delta \cup \{(q_0', \varepsilon, -b_0, q_0\} \cup \{(q_f, \varepsilon, -b_i, q_f) \mid q_f \in F, i \leq \ell\}.
\]

It is now easily verified that $P_\omega(\Amc) = L_\omega(\Mmc)$.
\end{proof}

From the previous two lemmas follows the equivalence of PPBA and CM.
\begin{corollary}
\label{cor:equivalencePPBAKM}
The classes of PPBA-recognizable $\omega$-languages and CM-recognizable $\omega$-languages coincide.
\end{corollary}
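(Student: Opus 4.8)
The plan is to prove the two inclusions separately, each by chaining together results already established in the excerpt. The corollary asserts equality of two language classes, so I would show that every CM-recognizable $\omega$-language is PPBA-recognizable, and conversely that every PPBA-recognizable $\omega$-language is CM-recognizable.

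For the first inclusion, I would start from an arbitrary CM $\Mmc$ and apply \Cref{lem:kcountertoPPBA} to obtain an $\varepsilon$-PPBA $\Amc$ with $L_\omega(\Mmc) = P_\omega(\Amc)$. The only subtlety is that this construction genuinely produces $\varepsilon$-transitions (the CM itself may use $\varepsilon$-transitions, and these are translated directly into $\Emc'$), so $\Amc$ is not yet an $\varepsilon$-free PPBA. I would therefore invoke \Cref{thm:epsilon-elimination}, which guarantees that every $\varepsilon$-PPBA is equivalent to a PPBA, yielding a PPBA $\Amc'$ with $P_\omega(\Amc') = P_\omega(\Amc) = L_\omega(\Mmc)$. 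Hence $L_\omega(\Mmc)$ is PPBA-recognizable.

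For the converse inclusion, the argument is a single step: given a PPBA $\Amc$, \Cref{lem:PPBAtokcounter} directly produces a CM $\Mmc$ with $P_\omega(\Amc) = L_\omega(\Mmc)$, so every PPBA-recognizable $\omega$-language is CM-recognizable. Combining the two inclusions gives the claimed equality of the two classes.

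Since every individual building block is already available, there is no genuinely hard step left in the corollary itself; the substantial work lies entirely in the $\epsilon$-elimination of \Cref{thm:epsilon-elimination}, which is what makes the first inclusion go through. I expect the only point worth stating explicitly is the necessity of invoking $\epsilon$-elimination after \Cref{lem:kcountertoPPBA}, precisely because the CM-to-PPBA translation is not $\varepsilon$-free and cannot be made so by an obvious direct argument (indeed, as the excerpt notes, $\epsilon$-elimination for CM themselves is open). I would keep the proof to a couple of sentences that name the three cited results and trace the two chains of equivalences.

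\begin{proof}
By \Cref{lem:kcountertoPPBA}, every CM $\Mmc$ is equivalent to an $\varepsilon$-PPBA, which by \Cref{thm:epsilon-elimination} is equivalent to a PPBA. Thus every CM-recognizable $\omega$-language is PPBA-recognizable. Conversely, by \Cref{lem:PPBAtokcounter} every PPBA is equivalent to a CM, so every PPBA-recognizable $\omega$-language is CM-recognizable. Hence the two classes coincide.
\end{proof}
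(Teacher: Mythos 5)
Your proof is correct and follows exactly the paper's route: the paper derives the corollary from \Cref{lem:kcountertoPPBA} combined with \Cref{thm:epsilon-elimination} for one inclusion, and \Cref{lem:PPBAtokcounter} for the other. Your explicit remark that the $\epsilon$-elimination step is the essential ingredient (since the CM-to-PPBA translation necessarily introduces $\epsilon$-transitions) matches the paper's own framing of the argument.
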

As shown by Fernau and Stiebe, the class of CM-recognizable $\omega$-languages are closed under union, and closed under intersection with $\omega$-regular languages, but not closed under intersection and complement. Hence, the class of PPBA-recognizable $\omega$-languages has the same closure properties. 
At this point we briefly mention that the class of SPBA-recognizable $\omega$-languages is also closed under union and intersection with $\omega$-regular languages, and not closed under intersection and complement (which can be shown using very similar arguments).
\begin{observation}
\label{cor:closure}
The classes of PPBA-recognizable $\omega$-languages and SPBA-recognizable \mbox{$\omega$-languages} are closed under union, and closed under intersection with $\omega$-regular languages, but not under intersection and complement.
\end{observation}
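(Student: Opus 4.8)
The plan is to handle the two classes separately: the PPBA case comes essentially for free from the equivalence with blind counter machines, and the SPBA case from mild adaptations of the positive constructions together with a single counterexample.

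For PPBA I would simply invoke \Cref{cor:equivalencePPBAKM}. Since PPBA and blind counter machines recognize the same $\omega$-languages, all three assertions transfer verbatim from the corresponding results of Fernau and Stiebe for CM~\cite{blindcounter}: closure under union, closure under intersection with $\omega$-regular languages, and non-closure under intersection and under complement. No further work is needed here.

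For the positive SPBA closures I would argue directly and then return to the plain SPBA model via \Cref{thm:reset-equivalence}. For union, given SPBA $\Amc_1,\Amc_2$, I would take their disjoint union, add a fresh non-accepting initial state with $\varepsilon$-transitions labeled $\0$ into the two old initial states, and keep each automaton's accepting states with its own semi-linear set; this is an $\varepsilon$-MSPBA recognizing $S_\omega(\Amc_1)\cup S_\omega(\Amc_2)$, which \Cref{thm:reset-equivalence} turns into an SPBA. For intersection with an $\omega$-regular language given by a Büchi automaton $\Bmc$, I would form the product of $\Amc$ with $\Bmc$, letting the counters and the semi-linear set come from $\Amc$ and adding a flag bit recording whether an accepting state of $\Bmc$ has been seen since the last reset. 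Working in the WPBA model, so that resets are optional, I would declare exactly those product states whose $\Amc$-component is accepting and whose flag is set to be resetting, and clear the flag on a reset; then infinitely many resets force infinitely many visits to $\Bmc$-accepting states, so the product recognizes the intersection, and \Cref{thm:reset-equivalence} converts it back into an SPBA.

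For the negative results I would first exhibit a concrete failure of intersection-closure. Take $L_1=\{a^nb^n\mid n\ge 1\}^\omega$ and $L_2=a^+\cdot\{b^ma^m\mid m\ge 1\}^\omega$; both are Reset-recognizable (an SPBA for $L_1$ counts $a$'s and $b$'s and resets after each balanced block, and an SPBA for $L_2$ reads the leading block of $a$'s with the zero vector so that it does not pollute the first reset-segment). Writing a candidate as an alternation $a^{n_1}b^{p_1}a^{n_2}b^{p_2}\cdots$, membership in $L_1$ forces $n_i=p_i$ and membership in $L_2$ forces $p_i=n_{i+1}$, so $L_1\cap L_2=\bigcup_{n\ge 1}(a^nb^n)^\omega$, the set of words all of whose blocks share one common length. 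I would then show this language is not in $\LReset$ by a pumping argument: any SPBA resets its counters infinitely often, and within a single reset-segment only the semi-linear relation among letter counts is visible, so two blocks lying in different segments cannot be forced to the same length; choosing $n$ larger than the number of states, one finds a block lying entirely in one segment whose $a$-run and $b$-run each repeat a state, and pumping these two loops together keeps the segment's Parikh image inside $C$ while enlarging a single block, producing an accepted word with blocks of two different lengths, a contradiction. Non-closure under complement then follows formally: as $\LReset$ is closed under union but not under intersection, it cannot be closed under complement, since otherwise $L_1\cap L_2=\overline{\overline{L_1}\cup\overline{L_2}}$ (complement within $\Sigma^\omega$) would be Reset-recognizable.

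The step I expect to be the main obstacle is making the pumping argument rigorous, since the reset-segments chosen by an accepting run need not align with the $a^nb^n$-block boundaries; the argument must therefore track where resets fall relative to blocks and pump a segment while simultaneously preserving the semi-linear segment condition and the overall reset structure, for which a Ramsey-type selection of a ``good'' segment type is likely the cleanest tool. A secondary technical point is the precise bookkeeping of the flag bit in the intersection product, ensuring that clearing the flag is properly synchronised with the optional WPBA reset choice.
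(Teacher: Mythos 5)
The paper offers essentially no proof of this observation: the PPBA half is transferred from Fernau and Stiebe via \Cref{cor:equivalencePPBAKM} exactly as you do, and for SPBA it merely asserts that ``very similar arguments'' apply, so your union construction and your De Morgan derivation of non-closure under complement (both correct) have nothing to be compared against. Your proposal does, however, contain two genuine gaps. The first is the product construction for intersection with an $\omega$-regular language. In your flag-bit WPBA the reset positions are only those visits to accepting states of $\Amc$ at which the flag is set, so a single product segment spans several consecutive reset-segments of the SPBA $\Amc$; its extended Parikh image is then a sum of several elements of $C$, which need not lie in $C$ (take $C=\{1\}$ in dimension one), so words in the intersection can be rejected. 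Soundness fails as well: at a visit to an accepting state of $\Amc$ where the flag is still $0$ the product is not in a resetting state, so nothing forces the corresponding $\Amc$-segment into $C$, and the underlying run of $\Amc$ need not satisfy the strong reset condition. The flag product works for PPBA (and for the CM of Fernau and Stiebe) precisely because the prefix semantics checks cumulative sums rather than per-segment sums; for the reset semantics the product's resets would have to coincide with those of $\Amc$ while simultaneously certifying the B\"uchi condition of $\Bmc$, and it is not clear that your construction can be repaired without a substantially different idea.

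The second gap is the pumping argument for $\bigcup_{n\geq 1}(a^nb^n)^\omega\notin\LReset$ (your choice of $L_1,L_2$ and the computation of their intersection are correct). Pumping an $a$-loop with vector $\vbf_1$ and a $b$-loop with vector $\vbf_2$ inside one reset-segment adds $\vbf_1+\vbf_2$ to that segment's extended Parikh image, and an arbitrary semi-linear set $C$ is not closed under this, so the modified run need not remain accepting; moreover, nothing guarantees that any block $a^nb^n$ lies entirely inside a single reset-segment, since an accepting run may reset at every letter. The objects you may legitimately pump or splice are whole reset-segments: they begin and end in resetting states and their Parikh images lie in $C$ by hypothesis, independently of how the individual transitions inside them are labelled. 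A working argument fixes $f\in F$ that occurs infinitely often as a reset state in accepting runs on $(a^nb^n)^\omega$ for infinitely many $n$, takes $n<n'$ from this set, cuts out the nonempty word $u$ read between two $f$-resets of the run on $(a^nb^n)^\omega$, and inserts the corresponding run fragment at an $f$-reset of the run on $(a^{n'}b^{n'})^\omega$; the spliced run is accepting, and a short periodicity argument (the inserted $u$ would have to be a power of the period $2n'$ and simultaneously an infix of $(a^nb^n)^\omega$, forcing $n=n'$) shows that the word read is not of the form $(a^mb^m)^\omega$ for any $m$.
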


We conclude this section by showing that the classes of PPBA-recognizable and SPBA-recognizable $\omega$-languages are also closed under left-concatenation with Parikh-recognizable languages.
We prove the following auxiliary lemma, which simplifies the proof of this statement, as well as some proofs in the following sections.
\begin{lemma}
\label{lem:normalized}
Let $\Amc = (Q, \Sigma, q_0, \Delta, F, C)$ be a PA (on finite words) with $L(\Amc) = L$. Then we can construct in polynomial time a PA $\Amc_N$ with the following properties.
\begin{itemize}
\item $L(\Amc_N) = L \setminus \{\varepsilon\}$.
\item $\Amc_N$ has a single accepting state $f$, and $f$ has no outgoing transitions.
\end{itemize}
We say that $\Amc_N$ is \emph{normalized}.
\end{lemma}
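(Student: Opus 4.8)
The plan is to obtain the required accepting state by \emph{redirecting} transitions rather than merging states. I would introduce a fresh state $f$, declare it the unique accepting state, and for every transition of $\Amc$ that enters an old accepting state add a parallel copy entering $f$ instead, keeping the letter and the vector unchanged. I would retain all original transitions of $\Amc$, so that runs may still pass through old accepting states and continue, and I would add \emph{no} outgoing transitions at $f$. Since $f$ is fresh it differs from $q_0$, so the empty run ends in $q_0 \neq f$ and is rejected; this is precisely what removes $\varepsilon$ from the language, so the exclusion of $\varepsilon$ comes for free rather than needing a separate treatment.

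Formally I would set $\Amc_N = (Q \cup \{f\}, \Sigma, q_0, \Delta_N, \{f\}, C)$ with
\[
\Delta_N = \Delta \cup \{(p, a, \vbf, f) \mid (p, a, \vbf, q) \in \Delta \text{ for some } q \in F\},
\]
keeping the dimension $d$ and the semi-linear set $C$ unchanged. Since $f$ occurs only as a target and never as a source, it has no outgoing transitions and is the single accepting state, as required, and the construction clearly runs in polynomial time (at most $|\Delta|$ new transitions and one new state).

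For correctness I would argue both inclusions via a one-to-one correspondence of accepting runs. Given an accepting run of $\Amc$ on a non-empty word $w$, it has length $n \geq 1$ and its last transition $(p_{n-1}, w_n, \vbf_n, q)$ satisfies $q \in F$; replacing this last transition by its redirected copy $(p_{n-1}, w_n, \vbf_n, f) \in \Delta_N$ yields a run of $\Amc_N$ that ends in $f$ with the same extended Parikh image $\rho$, hence still in $C$, so $w \in L(\Amc_N)$. Conversely, any accepting run of $\Amc_N$ must end in $f$; because $f$ has no outgoing transitions it occurs only at the final position, so the run has length $\geq 1$ (hence $w \neq \varepsilon$) and all earlier transitions lie in $\Delta$, since they do not enter the sink $f$. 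Its last transition is a redirected copy of some $(p_{n-1}, w_n, \vbf_n, q) \in \Delta$ with $q \in F$, and undoing this redirection recovers a genuine accepting run of $\Amc$ on $w$, so $w \in L$; together with $w \neq \varepsilon$ this gives $w \in L \setminus \{\varepsilon\}$.

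The construction has essentially no hard step; the only points requiring care are bookkeeping ones. First, one must keep the original transitions so that words looping through old accepting states are not lost, while simultaneously ensuring $f$ is a genuine sink, so that an accepting run of $\Amc_N$ terminates exactly when it is redirected into $f$ and its prefix is automatically a run of $\Amc$. Second, the removal of $\varepsilon$ must be recognised as automatic: it holds because $f$ is fresh and thus never equal to $q_0$, so no run of length $0$ can be accepting, regardless of whether $q_0 \in F$ in the original automaton.
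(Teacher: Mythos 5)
Your construction is exactly the one the paper uses: add a fresh sink accepting state $f$ and, for every transition entering an old accepting state, a parallel copy redirected into $f$, keeping $C$ unchanged. The correctness argument (redirect the last transition of an accepting run, and conversely undo the redirection, with $\varepsilon$ excluded because $f \neq q_0$ is only reachable by reading a symbol) matches the paper's reasoning, so the proposal is correct and takes essentially the same approach.
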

\begin{proof}
Let $\Amc_N = (Q \cup \{f\}, \Sigma, q_0, \Delta_N, \{f\}, C)$, where $\Delta_N = \Delta \cup \{(p, a, \vbf, f) \mid (p, a, \vbf, q) \in \Delta$, $q \in F\}$, that is, $\Amc_N$ guesses the position of the last symbol of the input word and moves to the new accepting state (which has no outgoing transitions) if the last symbol is read. As we need to read at least one symbol to reach $f$, we can never accept the empty word.
It is obvious that $\Amc_N$ as constructed above satisfies the conditions of being normalized and can be computed in polynomial time.
\end{proof}

\begin{lemma}
\label{lem:concatenation}
Let $L_1\subseteq \Sigma^*$ be Parikh-recognizable and $L_2\subseteq \Sigma^\omega$ be SPBA-recognizable (PPBA-recognizable). Then $L_1 L_2$ is SPBA-recognizable (PPBA-recognizable).
\end{lemma}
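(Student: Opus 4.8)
The plan is to concatenate a Parikh automaton $\Amc_1$ recognizing $L_1$ on finite words with a PBA $\Amc_2$ recognizing $L_2$ on infinite words, gluing them together with $\varepsilon$-transitions at the accepting state of $\Amc_1$, and then invoking $\varepsilon$-elimination to return to a genuine PPBA or SPBA. First I would apply \Cref{lem:normalized} to $\Amc_1$ to obtain a normalized PA $\Amc_N$ recognizing $L_1 \setminus \{\varepsilon\}$ with a single accepting state $f$ that has no outgoing transitions; since $L_1 L_2 = (L_1 \setminus \{\varepsilon\})L_2 \cup (L_1 \cap \{\varepsilon\})L_2$ and the second set is either empty or equals $L_2$ (which is already recognizable in the appropriate class), and since both classes are closed under union by \Cref{cor:closure}, it suffices to handle $(L_1\setminus\{\varepsilon\})L_2$. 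Because $f$ has no outgoing transitions, no run of $\Amc_N$ can ``leave'' $f$ and continue reading, which is exactly what lets us splice $\Amc_2$ in cleanly at $f$ without creating spurious runs.

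The construction takes the disjoint union of $\Amc_N$ (dimension $d_1$, semi-linear set $C_1$) and $\Amc_2$ (dimension $d_2$, semi-linear set $C_2$), padding the counters of each part so the combined automaton has dimension $d_1 + d_2$: transitions of $\Amc_N$ get their vectors padded with $\0^{d_2}$ on the right, transitions of $\Amc_2$ get $\0^{d_1}$ on the left. I would add a single $\varepsilon$-transition $(f, \varepsilon, \0^{d_1+d_2}, q_0^{(2)})$ from the accepting state of $\Amc_N$ to the initial state of $\Amc_2$, make the states of $\Amc_N$ (including $f$) non-accepting, and keep the accepting states of $\Amc_2$ as the accepting (resp.\ resetting) states. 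The delicate part is the choice of the semi-linear set. For the prefix case, the accepting condition of $\Amc_2$ must be checked on the \emph{total} accumulated vector, which includes the contribution of the finite prefix run through $\Amc_N$; I would set $C = C_1 \cdot C_2$, so that an accepting hit in the $\Amc_2$-part requires the first $d_1$ counters to lie in $C_1$ (certifying that the finite prefix was in $L_1$) and the last $d_2$ counters to lie in $C_2$ (the original Parikh condition of $\Amc_2$). For the reset case this is cleaner: since the counters are reset upon entering an accepting state of $\Amc_2$, and the crossover $\varepsilon$-transition occurs before the first reset, the first reset segment naturally accumulates the $\Amc_N$-contribution together with the first $\Amc_2$-segment, so I would again use a product-style set $C_1\cdot C_2$ for the first reset and $\{\0^{d_1}\}\cdot C_2$ for all subsequent resets; this is handled cleanly by passing through an MSPBA (different semi-linear sets on the first versus later resetting states) and then applying \Cref{lem:MSPBAtoSPBA} and \Cref{lem:SPBAepselim}.

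The two directions of correctness are then routine: given $w\in L_1\setminus\{\varepsilon\}$ and $\alpha\in L_2$, concatenating an accepting run of $\Amc_N$ on $w$, the crossover $\varepsilon$-transition, and an accepting run of $\Amc_2$ on $\alpha$ yields an accepting run on $w\alpha$, with the counter bookkeeping working out by the choice of $C$; conversely, any accepting run on an infinite input must pass through $f$ exactly once (as $f$ has no outgoing non-$\varepsilon$ transitions and the $\Amc_N$-part is non-accepting, the infinitely many accepting hits force the run eventually into the $\Amc_2$-part), so the input factors as a finite prefix read by $\Amc_N$ followed by an infinite suffix read by $\Amc_2$, with the Parikh/reset conditions guaranteeing membership in $L_1$ and $L_2$ respectively. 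Finally I would invoke \Cref{thm:epsilon-elimination} (for the prefix case) or \Cref{thm:reset-equivalence} together with \Cref{lem:SPBAepselim} (for the reset case) to eliminate the single crossover $\varepsilon$-transition and obtain a genuine PPBA (resp.\ SPBA). The main obstacle is ensuring that the finite prefix is ``committed'' correctly across the $\varepsilon$-crossover so that the infinite acceptance condition sees exactly the intended counter contributions; the normalization of $\Amc_N$ (single accepting state with no outgoing transitions) is what makes this commitment unambiguous, and using different semi-linear sets for the first versus subsequent resets via the multi-set models is what makes the reset case go through without contaminating later segments.
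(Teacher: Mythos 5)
Your core construction --- normalize $\Amc_1$ via \Cref{lem:normalized}, reduce the $\varepsilon\in L_1$ case to closure under union, take the disjoint union with counters padded to dimension $d_1+d_2$, splice $\Amc_2$ in at the unique accepting state $f$, and certify $L_1$-membership through the combined semi-linear set --- is the same as the paper's. You deviate in two mechanical choices. The first is harmless but costly: you glue with an $\varepsilon$-transition from $f$ to the initial state of $\Amc_2$ and then invoke $\varepsilon$-elimination, whereas the paper simply adds direct shortcuts $(f, a, \0^{d_1}\cdot\vbf, q)$ for every transition $(q_2, a, \vbf, q)\in\Delta_2$ leaving the initial state of $\Amc_2$. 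Since you only need to bypass a single state, the shortcut is cheap and keeps the construction inside the $\varepsilon$-free models; routing through \Cref{thm:epsilon-elimination} or \Cref{lem:SPBAepselim} drags in the heaviest machinery of the paper for no gain.

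The second deviation contains a genuine (though repairable) gap. In the reset case you make $f$ non-accepting and want the \emph{first} reset segment inside $\Amc_2$ to carry the $L_1$-certificate, assigning $C_1\cdot C_2$ to ``the first resetting state'' and $\{\0^{d_1}\}\cdot C_2$ to ``later resetting states.'' But an MSPBA assigns its semi-linear set per \emph{state}, not per visit, and the first reset and a later reset may occur at the very same state of $F_2$; so \Cref{lem:MSPBAtoSPBA} as stated cannot express ``first visit versus later visits.'' (Taking the union of the two sets on such a state is unsound: if the $\Amc_N$-part happens to accumulate $\0^{d_1}\notin C_1$, the first segment could slip through via $\{\0^{d_1}\}\cdot C_2$ without certifying $L_1$.) You would need an extra marker counter that is set to $1$ only on the crossover --- exactly the device used in the PPBA-to-WPBA conversion in \Cref{sec:decision} --- or a second copy of $\Amc_2$ entered after the first reset. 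The paper sidesteps all of this by keeping $f$ itself in the set of accepting (resetting) states: the first reset then happens \emph{at} $f$, its segment image is $\vbf\cdot\0^{d_2}$ and is checked against $C_1\cdot\{\0^{d_2}\}$, every later segment lies wholly in the $\Amc_2$-part and is checked against $\{\0^{d_1}\}\cdot C_2$, and the single semi-linear set $C = C_1\cdot\{\0^{d_2}\}\cup\{\0^{d_1}\}\cdot C_2$ suffices with no multi-set model at all. Your PPBA case, with $C = C_1\cdot C_2$ and acceptance forced into the $\Amc_2$-part, matches the paper exactly.
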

\begin{proof}
We start with the argument for SPBA.
Let $\Amc_1 = (Q_1, \Sigma, q_1, \Delta_1, F_1, C_1)$ be a PA of dimension $d_1$ with $L(\Amc_1) = L_1$ and $\Amc_2 = (Q_2, \Sigma, q_2, \Delta_2, F_2, C_2)$ be an SPBA of dimension~$d_2$ with $S_\omega(\Amc_2) = L_2$.
W.l.o.g.\ we assume that $\varepsilon \notin L$ and hence assume that~$\Amc_1$ is normalized, in particular that $F_1 = \{f\}$ (if $\varepsilon \in L$, we use $L_1 L_2 = (L_1 \setminus \{\varepsilon\})L_2 \cup L_2$ and the closure under union).

We choose the SPBA $\Amc = (Q_1 \cup Q_2, \Sigma, q_1, \Delta, F_1 \cup F_2, C)$, with
\begin{align*}
\Delta  =&\ \{(p, a, \vbf \cdot \0^{d_2}, q) \mid (p, a, \vbf, q) \in \Delta_1\} \\
\cup&\ \{(p, a, \0^{d_1} \cdot \vbf, q) \mid (p, a, \vbf, q) \in \Delta_2\} \\
\cup&\ \{(f, a, \0^{d_1} \cdot \vbf, q) \mid (q_2, a, \vbf, q) \in \Delta_2\}
\end{align*}
and $C = C_1 \cdot \{\0^{d_2}\} \cup \{\0^{d_1}\} \cdot C_2$.
It is straightforward to prove that
$S_\omega(\Amc) = L(\Amc_1) \cdot S_\omega(\Amc_2)$,
we just give a proof sketch.
The SPBA $\Amc$ is constructed in such a way that it starts in a copy of $\Amc_1$
and can transition from $f$ to the initial state of the copy of $\Amc_2$.
Since $\Amc_1$ is normalized it has only a single accepting state $f$ with no outgoing transition into~$\Amc_1$. In particular, in order to accept an infinite word, the automaton~$\Amc$ must transition from the copy of~$\Amc_1$ to the copy of~$\Amc_2$. 
Since $f$ is the only accepting state of $\Amc_1$ and the only state with transitions to the states of the copy of $\Amc_2$, we must have read a word from~$L(\Amc_1)$ upon reaching $f$ by the choice of $C$. After the reset in $f$, it now accepts only if the rest of the infinite word belongs to $S_\omega(\Amc_2)$ (note that the first $d_1$ counters are reset in $f$, and only the last $d_2$ counters are used after the reset). 

The proof for PPBA is similar. Instead of an SPBA, we start with a PPBA
$\Amc_2$ for~$L_2$ and construct a PPBA $\Amc$ for $L_1 L_2$.
The only difference in the construction
is to set $C = C_1 \cdot C_2$, as the counters are not reset when the automaton
visits $f$.
\end{proof}

\section{Expressiveness of PPBA and SPBA}
\label{sec:characterization}

In this section, we show that the class of PPBA-recognizable $\omega$-languages is a strict subset of the class of SPBA-recognizable $\omega$-languages.
It will be convenient to consider the class $\LPAomega = \{U_1V_1^\omega \cup \dots \cup U_nV_n^\omega \mid n \geq 1, U_i, V_i \in \LPA\}$, which is inspired by Büchi's theorem. 
Observe that this class is equivalent to the class $\Kmc_*$ mentioned by Fernau and Stiebe who have shown that the class of CM-recognizable $\omega$-languages is a strict subset of $\Kmc_*$ \cite[Lemmas 3.2 and 3.3]{blindcounter}. By \Cref{cor:equivalencePPBAKM} we obtain that the class of PPBA-recognizable $\omega$-languages is a strict subset of $\LPAomega$.

We show that $\LPAomega$ is a strict subset of the class of SPBA-recognizable languages. We begin by showing that the $\omega$-closure of Parikh-recognizable languages is SPBA-recognizable.
\begin{lemma}
\label{lemma:LomegaSPBA}
    Let $L \subseteq \Sigma^*$ be Parikh-recognizable. Then $L^\omega$ is SPBA-recognizable.
\end{lemma}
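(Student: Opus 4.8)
The plan is to reduce to the normalized form of \Cref{lem:normalized} and then turn the unique accepting state into a resetting state that restarts the automaton from scratch. First I would observe that, by the very definition of $L^\omega$, we have $L^\omega = (L\setminus\{\varepsilon\})^\omega$, so it suffices to handle $L\setminus\{\varepsilon\}$. Let $\Amc=(Q,\Sigma,q_0,\Delta,F,C)$ be a PA with $L(\Amc)=L$, and let $\Amc_N=(Q\cup\{f\},\Sigma,q_0,\Delta_N,\{f\},C)$ be the normalized PA from \Cref{lem:normalized} recognizing $L\setminus\{\varepsilon\}$; recall that $f$ is the unique accepting state and has no outgoing transitions, and that reaching $f$ requires reading at least one symbol.

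Next I would build the SPBA $\Amc'=(Q\cup\{f\},\Sigma,q_0,\Delta',\{f\},C)$ where $\Delta' = \Delta_N \cup \{(f,a,\vbf,q)\mid (q_0,a,\vbf,q)\in\Delta_N\}$, that is, the only new transitions copy the outgoing transitions of $q_0$ onto $f$. Since $\Amc_N$ has no $\varepsilon$-transitions and we add none, $\Amc'$ is a genuine SPBA. The intended behaviour is that $f$ is the resetting state, so whenever $\Amc'$ reaches $f$ it must have accumulated (since the previous reset) a vector in $C$, after which the counters reset and the copied transitions let it begin reading the next factor exactly as if starting from $q_0$.

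Then I would prove $S_\omega(\Amc')=L^\omega$ by two inclusions. For $L^\omega\subseteq S_\omega(\Amc')$, given $\alpha=w_1w_2\dots$ with $w_i\in L\setminus\{\varepsilon\}$, I concatenate accepting runs of $\Amc_N$ on the $w_i$, using a copied $f$-transition at the start of every factor after the first; each factor then contributes a segment between two consecutive visits of $f$ whose extended Parikh image equals that of the corresponding $\Amc_N$-run and hence lies in $C$, so the strong reset condition holds and $f$ is hit infinitely often. For the converse, let $r=r_1r_2\dots$ be an accepting run of $\Amc'$ with the visits of $f$ occurring at positions $k_1<k_2<\dots$. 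Because $f$ has no outgoing transitions in $\Amc_N$, the only transitions leaving $f$ in $\Amc'$ are the copies of $q_0$'s transitions; hence the intermediate states of each segment $r_{k_{i-1}+1}\dots r_{k_i}$ avoid $f$, so each segment reads exactly one nonempty factor. The first segment is literally an accepting run of $\Amc_N$ from $q_0$ to $f$, and every later segment starts with a copied transition and is otherwise internal to $\Amc_N$, so it corresponds to an accepting $\Amc_N$-run from $q_0$ with the same Parikh image; by the strong reset condition this image lies in $C$, witnessing a factor in $L\setminus\{\varepsilon\}$. Thus $\alpha\in L^\omega$.

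The main obstacle is the bookkeeping in the converse direction: I must ensure that each reset segment corresponds to exactly one factor and never passes through $f$ in the middle. This is precisely where normalization pays off—since $f$ has no outgoing transitions in $\Amc_N$, the positions $k_i$ partition the run exactly at factor boundaries, and the equality of Parikh images between a copied-transition segment and a genuine $\Amc_N$-run transfers membership in $C$ faithfully. In particular, single-symbol words in $L$ are handled automatically, as a transition $(q_0,a,\vbf,f)\in\Delta_N$ simply induces a self-loop $(f,a,\vbf,f)$ that reads one symbol and resets.
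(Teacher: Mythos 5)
Your proposal is correct and follows essentially the same route as the paper: normalize $\Amc$ via \Cref{lem:normalized}, copy the outgoing transitions of $q_0$ onto the unique accepting state $f$, and use the fact that $f$ has no other outgoing transitions to make the reset positions coincide exactly with the factor boundaries in both directions. No gaps.
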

\begin{proof}
Let $\Amc = (Q, \Sigma, q_0, \Delta, F, C)$ be a PA with $L(\Amc) = L$. We show that we can construct an SPBA $\Amc'$ that simulates accepting runs of $\Amc$ infinitely often, implying $L_\omega(\Amc') = L^\omega$. 
As we have $L^\omega = (L \setminus \{\varepsilon\})^\omega$ for all languages $L$ by definition, we assume that $\Amc$ is normalized by \Cref{lem:normalized}.

We choose $\Amc' = (Q, \Sigma, q_0, \Delta', F, C)$ where $\Delta' = \Delta \cup \{(f, a, \vbf, q) \mid (q_0, a, \vbf, q) \in \Delta\}$.
We show that $S_\omega(\Amc') = L(\Amc)^\omega$.

\smallskip
$\Rightarrow$ We first show that $S_\omega(\Amc')\subseteq L(\Amc)^\omega$. Let $\alpha \in S_\omega(\Amc')$ with accepting run \mbox{$r = r_1r_2r_3\dots$}, where $r_i = (p_{i-1}, \alpha_i, \vbf_i, p_i)$. Let $0 = k_0< k_1< k_2< \ldots$ 
denote the reset positions in $r$. 
We show that we can modify the partial run $r_{j-1, j} = r_{k_{j-1}+1} r_{k_{j-1}+2} \dots r_{k_j}$ on $w_j = \alpha[k_{j-1} + 1, k_j]$ for all $j \geq 1$ such that it becomes an accepting run of~$\Amc$ on~$w_j$. 
Note that $r_{0,1}$ is already an accepting run of~$\Amc$ on $w_1$. 
Observe that for $j > 1$ the transition $r_{k_{j-1}+1}$ is a new transition of the form $(f, \alpha_{k_{j-1}+1}, \vbf_{k_{j-1}+1}, p_{k_{j-1}+1})$, as~$f$ has no outgoing transitions in $\Amc$. 
In particular, there is a transition of the form $\delta = (q_0, \alpha_{k_{j-1}+1}, \vbf_{k_{j-1}+1}, p_{k_{j-1}+1}) \in \Delta$. Hence, $\delta r_{k_{j-1}+2} \dots r_{k_{j}}$ is an accepting run of $\Amc$ on $w_j$ for all $j > 1$. Hence we have $\alpha \in L(\Amc)^\omega$. \hfill $\lrcorner$

\smallskip
$\Leftarrow$ To show that $L(\Amc)^\omega\subseteq S_\omega(\Amc')$, let $w_1 w_2 w_3 \dots \in L(\Amc)^\omega$, where $w_j \in L(\Amc)$ for all $j \geq 1$. 
Let $n_j = |w_1| + \dots + |w_j|$ denote the length of $w_1 \cdots w_j$, and let $r_{j-1, j} = r_{n_{j-1} + 1} r_{n_{j-1} + 2} \dots r_{n_j}$, where $r_i = (p_{i-1}, \alpha_{i}, \vbf_i, p_i)$ be an accepting run of $\Amc$ on $w_j$. 
Note that $p_{n_j}$ is the only accepting state in $r_{j-1,j}$. Furthermore, there is a transition $\delta_{j} = (p_{n_j}, \alpha_{n_j + 1}, \vbf_{n_j + 1}, p_{n_j + 1}) \in \Delta'$. 
Let $r'_{j-1, j} = \delta_{j-1} r_{n_{j-1} + 2} \dots r_{n_j}$ for all $j \geq 2$. Then $r = r_{0,1}r_{1,2}'r_{2,3}'\dots$ is a run of $\Amc'$ on $w_1w_2w_3\dots$ and we have a reset on each $r_{n_j}$. 
As $\rho(w_j) \in C$ for all $j \geq 1$, the run $r$ is accepting, and hence, $w_1 w_2 w_3 \dots \in S_\omega(\Amc')$. 
\end{proof}

\begin{remark}
As the $\omega$-language $\{a^n b^n \mid n \geq 1\}^\omega$ is not PPBA-recognizable (consequence of \Cref{cor:equivalencePPBAKM} and \cite[Lemma 3.3]{blindcounter}), this lemma is not true for PPBA-recognizable languages.
\end{remark}

At this point we have all ingredients to show that the class $\LPAomega$ is a subset of the class of SPBA-recognizable $\omega$-languages. We show that this inclusion is strict.

\begin{theorem}
\label{thm:LPAomegaLReset}
$\LPAomega \subsetneq \LReset$.
\end{theorem}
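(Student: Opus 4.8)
The plan is to prove the two parts of the statement separately: the containment $\LPAomega \subseteq \LReset$, which I would assemble from results already in the paper, and the strictness, which I would establish by exhibiting a single witness $\omega$-language lying in $\LReset$ but not in $\LPAomega$.

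For the containment, let $L = U_1V_1^\omega \cup \dots \cup U_nV_n^\omega$ with all $U_i,V_i \in \LPA$. By \Cref{lemma:LomegaSPBA} each $V_i^\omega$ is SPBA-recognizable; by \Cref{lem:concatenation} left-concatenating the Parikh-recognizable language $U_i$ preserves SPBA-recognizability, so each $U_iV_i^\omega$ is SPBA-recognizable; and by \Cref{cor:closure} the SPBA-recognizable languages are closed under finite union. Hence $L \in \LReset$, and the inclusion follows with no further work.

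For the strictness I would choose a language whose only natural ``period'' is an iteration of \emph{balanced blocks of varying lengths} — precisely the operation under which Parikh-recognizable languages fail to be closed. Concretely, over $\Sigma = \{a,b,c,d\}$ I would take
\[
L_0 = \bigl(\{a^nb^n \mid n \ge 1\} \cup \{c^md^m \mid m \ge 1\}\bigr)^\omega \cap R,
\]
where $R$ is the $\omega$-regular language of words with infinitely many $c$'s. Since $\{a^nb^n \mid n \ge 1\} \cup \{c^md^m \mid m \ge 1\}$ is Parikh-recognizable, \Cref{lemma:LomegaSPBA} shows its $\omega$-power is SPBA-recognizable, and as $R$ is $\omega$-regular, \Cref{cor:closure} yields $L_0 \in \LReset$. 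It then remains to prove $L_0 \notin \LPAomega$.

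The direction $L_0 \notin \LPAomega$ is the main obstacle, and I would argue by contradiction: suppose $L_0 = \bigcup_{i=1}^n U_iV_i^\omega$ with $U_i,V_i \in \LPA$. Every word of $L_0$ is an $\omega$-concatenation of balanced blocks ($a^nb^n$ or $c^md^m$) containing infinitely many $cd$-blocks; in particular no nonempty $v \in V_i$ may be free of $c$ (otherwise some $uv^\omega \in U_iV_i^\omega$ would have only finitely many $c$'s, contradicting $U_iV_i^\omega \subseteq L_0 \subseteq R$). I would then construct, for each $k$, a word $\alpha_k \in L_0$ in which consecutive $cd$-blocks are separated by at least $k$ balanced $ab$-blocks of strictly increasing lengths, and apply the infinite pigeonhole principle to fix one index $i$ capturing an infinite subfamily of the $\alpha_k$. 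For this $i$ some factor $v \in V_i$ must span a long stretch $a^{n_1}b^{n_1}\cdots a^{n_t}b^{n_t}$ of many balanced $ab$-blocks. The crux is a pumping argument for the Parikh-recognizable language $V_i$: using the bounded state set of a Parikh automaton for $V_i$ together with the semilinearity of its accepting set, I would isolate inside such a long factor a pumpable infix whose iteration remains in $V_i$ but \emph{destroys the balance} of one maximal $a$-$b$ stretch, producing a word still in $U_iV_i^\omega \subseteq L_0$ yet containing an unbalanced maximal $a$-$b$ stretch — contradicting the definition of $L_0$. Arranging the pumped infix to simultaneously respect the semilinear constraint of $V_i$ and break block-balance is the delicate point; the reason such an infix must exist is exactly that the ``separating'' language $(\{a^nb^n \mid n\ge 1\})^*$ is not Parikh-recognizable, so no single Parikh-recognizable $V_i$ can enforce balance on unboundedly many independent blocks at once.
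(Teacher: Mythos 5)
Your containment argument is exactly the paper's: \Cref{lemma:LomegaSPBA} for $V_i^\omega$, \Cref{lem:concatenation} for the left-concatenation with $U_i$, and closure under union from \Cref{cor:closure}. For strictness the paper uses the witness $\{a^nb^n \mid n\ge 1\}^*\cdot\{a\}^\omega \cup \{a^nb^n\mid n\ge 1\}^\omega$; the $a^\omega$-tail forces $V_i\subseteq\{a\}^+$, so all of the balance-enforcement is pushed onto the single finite-word PA for $U_i$, whereas your witness pushes it onto $V_i$. That difference is essentially cosmetic, and your set-up (every nonempty $v\in V_i$ must contain a $c$, pigeonhole over the indices $i$, some factor must cover many balanced $ab$-blocks of distinct lengths) is sound modulo routine care about factor boundaries.

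The genuine gap is the final step. You propose to ``isolate a pumpable infix whose iteration remains in $V_i$ but destroys the balance.'' Parikh automata do not admit a pumping lemma of this kind: iterating an infix that corresponds to a state cycle changes the extended Parikh image, and there is no reason the new image still lies in the semi-linear set of the PA for $V_i$ --- indeed $\{a^nb^n \mid n \ge 1\}$ itself is Parikh-recognizable and admits no such pumping. You flag this as ``the delicate point'' but do not resolve it, and as stated the step fails. The paper's fix is to \emph{swap} rather than pump: given more than $m^4$ balanced blocks $a^{\ell}b^{\ell}$ of pairwise distinct lengths ($m$ the number of states), associate to each block the $4$-tuple of states at the start and end of its $a$-part and of its $b$-part, pigeonhole to find two blocks with the same tuple, and exchange their $b$-parts. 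The resulting word $\dots a^rb^s\dots a^sb^r\dots$ still carries a valid run \emph{with the same extended Parikh image}, so the semi-linear condition is untouched, yet one block is now unbalanced. Substituting this swap argument for your pumping step (applied to the PA for $V_i$ inside a long factor) is what is needed to close the proof.
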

\begin{proof}
The inclusion follows directly from \Cref{cor:closure}, \Cref{lem:concatenation} and \Cref{lemma:LomegaSPBA}.    

We show that the inclusion is strict.
Let $L = \{a^n b^n \mid n \geq 1\}^* \cdot \{a\}^\omega \cup \{a^n b^n \mid n \geq 1\}^\omega$. We show $L \in \LReset$ but $L \notin \LPAomega$.
That $L \in \LReset$ is witnessed by the SPBA in \Cref{fig:resetpbacounterexample} with $C = \{(z,z) \mid z \in \Nbb\}$.
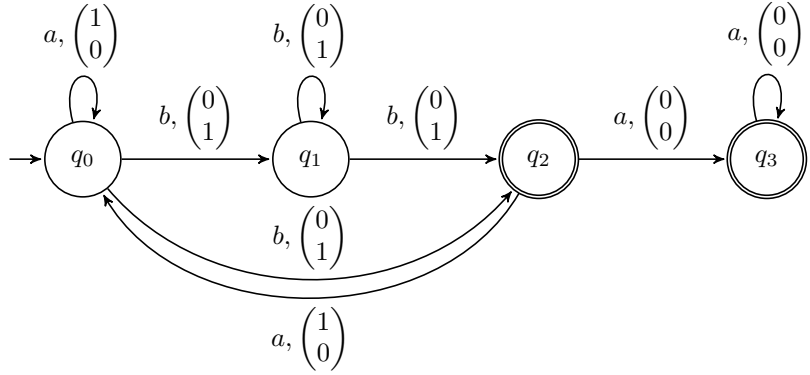
\begin{figure}
	\centering
	\begin{tikzpicture}[->,>=stealth',shorten >=1pt,auto,node distance=3cm, semithick]
	\tikzstyle{every state}=[minimum size=1.0cm]
	
	\node[initial, initial text = {}, state] (q0) {$q_0$};
	\node[state] (q1) [right of=q0] {$q_1$};
	\node[state, accepting] (q2) [right of=q1] {$q_2$};	
	\node[state, accepting] (q3) [right of=q2] {$q_3$};	

	\path
	(q0) edge [loop above] node {$a, \begin{pmatrix}1\\0\end{pmatrix}$} (q0)
	(q0) edge              node {$b, \begin{pmatrix}0\\1\end{pmatrix}$} (q1)
	(q0) edge [bend right = 50] node {$b, \begin{pmatrix}0\\1\end{pmatrix}$} (q2)
	(q1) edge [loop above] node {$b, \begin{pmatrix}0\\1\end{pmatrix}$} (q1)
	(q1) edge              node {$b, \begin{pmatrix}0\\1\end{pmatrix}$} (q2)
	(q2) edge [bend left = 60]  node {$a, \begin{pmatrix}1\\0\end{pmatrix}$} (q0)
	(q2) edge              node {$a, \begin{pmatrix}0\\0\end{pmatrix}$} (q3)	
	(q3) edge[loop above]  node {$a, \begin{pmatrix}0\\0\end{pmatrix}$} (q3)		
	;
	\end{tikzpicture}
	\caption{The SPBA for $L = \{a^n b^n \mid n \geq 1\}^* \cdot \{a\}^\omega \cup \{a^n b^n \mid n \geq 1\}^\omega$ with $C = \{(z,z) \mid z \in \Nbb\}$.}
	\label{fig:resetpbacounterexample}
\end{figure}

We focus on $L \notin \LPAomega$ and argue by contradiction. 
Suppose, $L \in \LPAomega$, \ie, there are Parikh-recognizable languages $U_1, V_1, \dots, U_n, V_n$ such that $L = U_1 V_1^\omega \cup \dots \cup U_n V_n^\omega$. 
Then there is some $i \leq n$ such that for infinitely many $j \geq 1$ the infinite word $\alpha_j = aba^2b^2 \dots a^jb^j \cdot a^\omega \in U_iV_i^\omega$. 
Then $V_i$ must contain a word of the form $v = a^k$, $k > 0$. 
Additionally, there cannot be a word in $V_i$ with infix~$b$. 
To see this assume for sake of contradiction that there is a word $w \in V_i$ with $\ell = |w|_b > 0$. Let  $\beta = (v^{\ell+1} w)^\omega$. 
Observe that $\beta$ has an infix that consists of at least $\ell+1$ many $a$, followed by at most $\ell$, but at least one $b$, hence, no word of the form $u\beta$ with $u \in U_i$ is in $L$. This is a contradiction, thus $V_i \subseteq \{a\}^+$.

Since $U_i \in \LPA$, there is a PA $\Amc_i$ with $L(\Amc_i) = U_i$. 
Let $m$ be the number of states in~$\Amc_i$ and $w' = aba^2b^2 \dots a^{m^4+1} b^{m^4+1}$. 
Then $w'$ is a prefix of a word accepted by~$\Amc_i$. Now consider the infixes $a^\ell b^\ell$ and the pairs of states $q_1,q_2$, where we start reading $a^\ell$ and end reading $a^\ell$, and $q_3,q_4$ where we start to read $b^\ell$ and end to read $b^\ell$, respectively. 
There are $m^2$ choices for the first pair and $m^2$ choices for the second pair, hence $m^4$ possibilities in total. 
Hence, as we have more than $m^4$ such infixes, there must be two with the same associated states $q_1,q_2,q_3,q_4$. 
Then we can swap these two infixes and get a word of the form $ab \dots a^rb^s \dots a^s b^r \dots a^{m^4+1} b^{m^4+1}$ that is prefix of some word in $L(\Amc_i) = U_i$. 
But no word in $L$ has such a prefix, a contradiction. Thus, $U_1 V_1^\omega \cup \dots \cup U_nV_n^\omega \neq L$.
\end{proof}

\section{Decision problems}
\label{sec:decision}
In this section, we show that the (un)decidability results of the common decision problems for PA on finite words in \cite{klaedtkeruess, emptynp} can be transferred to PBA. These problems include the following: 
\begin{enumerate}
\item \emph{Emptiness.} Does $\Amc$ recognize the empty language?
\item \emph{Universality.} Does $\Amc$ accept every word?
\item \emph{Equivalence.} Do $\Amc_1$ and $\Amc_2$ recognize the same language?
\item \emph{Inclusion.} Does $\Amc_2$ recognize every word that $\Amc_1$ recognizes?
\end{enumerate}

In the following we assume that all semi-linear sets are given as a collection of lists of base vector $b_0$ and period vectors $b_1,\ldots, b_k$ (one list for each linear set in the finite union). All numbers are encoded in binary. 


\begin{theorem}
Emptiness for SPBA, WPBA, and PPBA (and their equivalent models) are $\coNP$-complete. Universality, inclusion and equivalence for SPBA, WPBA, and PPBA (and their equivalent models) are undecidable.
\end{theorem}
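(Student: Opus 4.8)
The plan is to transfer the finite-word results of \cite{klaedtkeruess,emptynp} to the $\omega$-setting. By \Cref{thm:ppba-equivalence,thm:reset-equivalence} and the polynomial conversions behind them (\Cref{lemma:GPPBA,lem:MSPBAtoSPBA,lem:WPBAtoSPBA}) it suffices to treat the $\epsilon$-free single-set models PPBA and SPBA, since WPBA reduces to SPBA in polynomial time by \Cref{lem:WPBAtoSPBA} and all Multi-variants convert in polynomial time, so they inherit the bounds. I split the statement into the emptiness part and the universality part, and prove matching upper and lower bounds for the former and an undecidability reduction for the latter.

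For emptiness I show that \emph{non}-emptiness lies in $\NP$, which puts emptiness in $\coNP$. The idea is that a nonempty automaton has a lasso witness $\sigma\tau^\omega$ consisting of a stem $\sigma$ from $q_0$ to an accepting state $f$ and a cycle $\tau$ from $f$ to $f$, and that the existence of such a witness together with the Parikh/reset constraints is captured by a polynomial-size existential Presburger formula, whose feasibility is decidable in $\NP$. Concretely, using \Cref{lemma:indep} I guess an accepting state $f$ and a single linear component $C_j=\{b_0+\sum_t z_t b_t\}$ of $C$. For SPBA I require $\rho(\sigma)\in C_j$ and $\rho(\tau)\in C_{j'}$ for guessed components, with the interiors of $\sigma$ and $\tau$ avoiding $F$; for PPBA I instead require $\rho(\sigma)\in C_j$ together with the condition that the cycle effect $\rho(\tau)$ is a nonnegative integer combination of the period vectors $b_1,\dots,b_\ell$ of $C_j$, so that $\rho(\sigma)+k\rho(\tau)\in C_j$ for every $k$ and the run has infinitely many accepting hits. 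In each case the existence of a stem, respectively a closed walk, realizing a prescribed Parikh image is encoded by the standard polynomial existential-Presburger description of Parikh images of NFA, and the membership conditions contribute only linear constraints.

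For $\coNP$-hardness I reduce from emptiness of PA on finite words, which is $\coNP$-complete because non-emptiness is $\NP$-complete \cite{emptynp}. Given a finite-word PA with language $L$, I first dispatch the trivial case $\varepsilon\in L$ and otherwise normalize the automaton by \Cref{lem:normalized} to a single accepting state $f$ with no outgoing transitions. For PPBA I add a zero-labelled self-loop $(f,a,\0,f)$; then $w a^\omega$ is accepted exactly when $w\in L$, so the resulting PPBA is nonempty iff $L\neq\varnothing$. For SPBA (and hence WPBA) I instead build the SPBA for $L^\omega$ from \Cref{lemma:LomegaSPBA}, which is nonempty iff $L\neq\varnothing$. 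Both reductions are polynomial, completing the $\coNP$-completeness of emptiness. For the undecidability part I reduce the universality problem of PA on finite words, which is undecidable \cite{klaedtkeruess}. Fixing a fresh separator $\#$ and a finite-word PA for $L\subseteq\Sigma^*$, I construct an automaton $\Amc'$ over $\Sigma\cup\{\#\}$ recognizing $R\cup\{w\#^\omega\mid w\in L\}$, where $R$ is the $\omega$-regular set of all words not of the form $w\#^\omega$ with $w\in\Sigma^*$. Here $\{w\#^\omega\mid w\in L\}=L\cdot\{\#^\omega\}$ is recognizable by \Cref{lem:concatenation} (as $\{\#^\omega\}$ is $\omega$-regular and $L$ is Parikh-recognizable), $R$ is $\omega$-regular and hence recognizable, and recognizability is closed under union by \Cref{cor:closure}; thus $\Amc'$ exists in each model. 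Now $\Amc'$ is universal iff every $w\#^\omega$ is accepted iff $L=\Sigma^*$, so universality is undecidable. Since universality is the special case of testing $\Sigma^\omega\subseteq L_\omega(\Amc)$, respectively $L_\omega(\Amc)=\Sigma^\omega$, and $\Sigma^\omega$ is recognizable in every model, undecidability of inclusion and equivalence follows.

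The main obstacle is the completeness direction of the $\NP$ upper bound for PPBA: I must show that whenever a run has infinitely many accepting prefix-hits, there is a lasso whose cycle effect is a \emph{nonnegative} combination of the period vectors of a single linear component. I would establish this by fixing, via the infinite pigeonhole principle, one component $C_j$ hit infinitely often and picking for each such hit a coefficient vector in $\Nbb^\ell$ witnessing its membership in $C_j$; applying Dickson's lemma yields two hits whose coefficient vectors are coordinatewise comparable, and the cycle between them then has an effect equal to a nonnegative combination of $b_1,\dots,b_\ell$, which is exactly the condition the guessed formula checks. The reductions for the lower bound and for undecidability are comparatively routine once the recognizability lemmas are in place.
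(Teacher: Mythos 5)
Your PPBA emptiness upper bound (Dickson's lemma on the coefficient vectors of infinitely many hits in one linear component, yielding a lasso whose cycle effect is a nonnegative combination of the period vectors) is correct and is a genuinely different route from the paper, which instead converts a PPBA into a WPBA by a dedicated polynomial-time construction and then runs its SPBA algorithm. Your $\coNP$-hardness reduction from PA emptiness and your undecidability argument for universality are also sound (the latter is essentially the paper's own reduction). However, there is a genuine gap in your $\NP$ upper bound for SPBA. You require a lasso $\sigma\tau^\omega$ in which the \emph{interiors} of both the stem and the cycle avoid $F$ and only the two total Parikh images lie in guessed components. In an SPBA \emph{every} visit to an accepting state is a reset, so an accepting run may be forced to pass through several distinct accepting states between two occurrences of the recurring state $f$: take $Q=F=\{f,g\}$ with $q_0=f$ and transitions $(f,a,\vbf_1,g)$, $(g,b,\vbf_2,f)$ with $\vbf_1,\vbf_2\in C$; the language is nonempty, but every cycle on $f$ (or on $g$) passes through the other accepting state, so your algorithm finds no witness and wrongly reports emptiness. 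The fix is what the paper does: guess a sequence of at most $2|F|$ pairwise distinct accepting states $f_1,\dots,f_k$ with $f_i=f_k$, check non-emptiness of the finite-word PA fragments between consecutive ones (each fragment's interior avoiding $F$), and check that the cyclic part reads at least one symbol.

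A second, smaller issue: the theorem covers ``their equivalent models,'' which by \Cref{thm:ppba-equivalence} and \Cref{thm:reset-equivalence} include the $\varepsilon$-variants, but the conversions you invoke (\Cref{lemma:GPPBA}, \Cref{lem:MSPBAtoSPBA}, \Cref{lem:WPBAtoSPBA}) do not eliminate $\varepsilon$-transitions, and the $\varepsilon$-elimination constructions of \Cref{sec:epsilon-elimination} are not polynomial. So your reduction to $\varepsilon$-free models leaves the $\coNP$ upper bound for $\varepsilon$-MSPBA and its relatives unproven; the paper addresses this by running its $\NP$ algorithm directly on $\varepsilon$-SPBA (replacing $\varepsilon$ by a fresh letter $e$ only for the finite-word non-emptiness checks, which preserves emptiness and is polynomial). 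You would need to argue that your Presburger encoding tolerates $\varepsilon$-transitions, or adopt the same device. Also note, in passing, that your lasso for PPBA must explicitly require $\tau$ to contain at least one transition so that $\sigma\tau^\omega$ is an infinite run.
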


Since every SPBA can be converted efficiently into an equivalent WPBA and vice versa by \Cref{lem:SPBAtoWPBA} and \Cref{lem:WPBAtoSPBA}, the decidability results for SPBA hold for WPBA, too.
Furthermore, we show how to convert any PPBA into an equivalent WPBA (and hence SPBA) in polynomial time (note that this does not directly follow from our results, as the intermediate step of converting a PPBA into an equivalent CM introduces $\varepsilon$-transitions whose elimination requires super-polynomial time).

Thus, for the $\coNP$-completeness of emptiness it is sufficent to show that emptiness for $\varepsilon$-MSPBA is in $\coNP$ (yielding the $\coNP$-membership for all of our models), and $\coNP$-hardness for (even 1-dimensional) PPBA (yielding hardness for all our models).
Similarly, it is sufficient to show the undecidability results for PPBA.

\begin{lemma}
 Let $\Amc = (Q, \Sigma, q_0, \Delta, F, C)$ be a PPBA of dimension $d$. Then there is an equivalent WPBA that can be constructed in polynomial time.   
\end{lemma}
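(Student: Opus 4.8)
The plan is to construct directly an equivalent MWPBA and then invoke \Cref{lem:MSPBAtoSPBA} to turn it into a WPBA, everything in polynomial time. Write $C=C_1\cup\dots\cup C_m$ with $C_t=\{b^{(t)}_0+\sum_{i=1}^{\ell_t}b^{(t)}_iz_i\mid z_i\in\Nbb\}$, and let $P_t=\{\sum_{i=1}^{\ell_t}b^{(t)}_iz_i\mid z_i\in\Nbb\}$ be the associated \emph{period cone}, which is again linear (base vector $\0$, same period vectors) and directly readable from the input. The difficulty this must overcome is that a PPBA tests the \emph{cumulative} Parikh image $\rho(r_1\dots r_i)$ against $C$, whereas a WPBA only tests the image accumulated \emph{between two consecutive resets}. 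I bridge this gap by separating the single occurrence of the base vector from the periods: the very first reset of a run is tested against the full linear set $C_t$, while every later reset is tested only against $P_t$. Since $C_t+P_t=C_t$, a run that resets once under $C_t$ and then repeatedly under $P_t$ automatically has cumulative image in $C_t$ at every reset, which is exactly a sequence of PPBA-hits.

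Concretely, $\Amc'$ keeps the original $d$ counters and consists of a \emph{phase-$1$} copy of $Q$ (the part of a run before the first reset, with all states made non-accepting) together with, for each index $t$, a \emph{phase-$2$} copy $Q\times\{t\}$. For every transition $(q,a,\vbf,f)\in\Delta$ with $f\in F$ and every $t$ I add a transition from the phase-$1$ copy of $q$ into a fresh accepting entry state $e_{f,t}$ carrying the semi-linear set $C_t$; the state $e_{f,t}$ inherits the outgoing transitions of $f$, redirected into $Q\times\{t\}$. Inside $Q\times\{t\}$ I place a copy of $\Delta$ and declare every $(f',t)$ with $f'\in F$ accepting with semi-linear set $P_t$. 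Thus the first reset happens at some $e_{f,t}$, testing the cumulative image against $C_t$ and fixing the index $t$ once and for all, while every subsequent reset happens in the $t$-layer and tests only the increment against $P_t$. This is an MWPBA with $O(m\,|Q|)$ states, dimension $d$, and no $\varepsilon$-transitions, built in polynomial time; \Cref{lem:MSPBAtoSPBA} then produces an equivalent WPBA in polynomial time. I target the weak model on purpose: in the $t$-layer the run passes through accepting copies of $F$ at the non-selected hits without resetting, which a WPBA is allowed to do but an SPBA is not.

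For correctness, the backward direction is routine: any accepting run of $\Amc'$ stays in phase $1$, takes exactly one reset into some $e_{f,t}$, and afterwards remains in the $t$-layer, resetting infinitely often; erasing the phase/index decorations ($e_{f,t}\mapsto f$, $(q,t)\mapsto q$) gives a run of $\Amc$, and an induction using $C_t+P_t=C_t$ shows the cumulative image lies in $C_t\subseteq C$ at each reset, so these are genuine PPBA-hits at states of $F$. The forward direction is the main obstacle. Given an accepting PPBA run with hits $k_1<k_2<\dots$, the infinite pigeonhole principle fixes one index $t$ for which infinitely many hits satisfy $\rho(r_1\dots r_{k_j})\in C_t$; I then need to reset $\Amc'$ at a sub-selection of these hits whose consecutive \emph{differences} lie in $P_t$, which is not automatic, because a vector of $C_t$ may admit several representations. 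I resolve this with Dickson's lemma: choosing for each selected hit a coefficient vector in $\Nbb^{\ell_t}$ witnessing membership in $C_t$, the resulting sequence has an infinite coordinatewise non-decreasing subsequence, and along it the cumulative differences are non-negative integer combinations of the periods $b^{(t)}_i$, hence lie in $P_t$. Resetting at exactly this subsequence (entering $e_{f,t}$ at the first selected hit and resetting in the $t$-layer thereafter) yields an accepting WPBA run. Crucially, this combinatorial argument lives only in the correctness proof; the construction itself is oblivious to it and stays polynomial.
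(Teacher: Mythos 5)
Your two-phase architecture and your forward direction are sound --- indeed, your use of Dickson's lemma to extract a subsequence of hits whose consecutive differences are non-negative integer combinations of the periods addresses a real subtlety (two elements of a linear set can differ by a vector outside the period cone: $5$ and $6$ both lie in $\{3z_1+5z_2\mid z_1,z_2\in\Nbb\}$, but their difference $1$ does not). The gap is in the backward direction, and it is a flaw of the construction itself, not merely of its verification. You assert that an accepting run of $\Amc'$ ``takes exactly one reset into some $e_{f,t}$'', but the weak reset semantics only guarantees that the run \emph{passes through} $e_{f,t}$ once; the reset positions are existentially quantified, so the run may declare its first reset at a later state $(f',t)$, where only the period cone $P_t$ is tested. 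When the base vector $b^{(t)}_0$ is nonzero, $P_t\not\subseteq C_t$, and the base vector is then never accounted for. Concretely, take $\Amc$ with states $q_0,f$, $f$ accepting, transitions $(q_0,a,(5),f)$ and $(f,a,(5),f)$, and $C=\{3+5z\mid z\in\Nbb\}$: the unique run on $a^\omega$ has cumulative images $5,10,15,\dots$, none in $C$, so $P_\omega(\Amc)=\varnothing$. In your $\Amc'$ the run through $e_{f,1}$ into the $1$-layer cannot reset at $e_{f,1}$ (since $5\notin C_1$), but it may choose reset positions $2,3,4,\dots$, where every increment is $10$ or $5$, both in $P_1=\{5z\mid z\in\Nbb\}$; hence $a^\omega\in W_\omega(\Amc')$ while $a^\omega\notin P_\omega(\Amc)$.

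The paper's construction follows the same two-phase, one-layer-per-linear-set idea but repairs exactly this point with extra counters, in particular a ``flag'' counter (index $k+1$) that is incremented only on the transition into the resetting entry copy of $f$. The semi-linear set of $\Amc'$ then pairs the full linear set $C_i$ (base vector included) with flag value $1$ and the period-only set $\vec{C}_i$ with flag value $0$; since the flag is incremented exactly once and every reset segment must match one of the two components, the segment containing the entry transition --- necessarily the first one --- is forced to contribute the base vector. You can adopt the same fix inside your MWPBA by adding one counter that is set to $1$ on the transition into $e_{f,t}$ and assigning $C_t\cdot\{1\}$ to $e_{f,t}$ and $C_t\cdot\{1\}\cup P_t\cdot\{0\}$ to the states $(f',t)$; with that change the rest of your argument, including the Dickson's-lemma subsequence selection (which the paper's own forward direction in fact glosses over), goes through.
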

\begin{proof}

 The idea is as follows. We have a copy of $\Amc$ and additionally, for each linear set~$C_i$ in the semi-linear set $C$ we introduce a copy $\Amc_i$ of $\Amc$. 
 All states of the copy of $\Amc$ are non-resetting and all states of $F$ are resetting in the $\Amc_i$. 
 At any point while a run is still in $\Amc$ and would enter a state $f\in F$, we allow a non-deterministic to some $\Amc_i$ and to reset the counters. For this, we first go to a special resetting copy of $f$ and then transition to $\Amc_i$. 
 We use additional counters to mark in the semi-linear set in which $\Amc_i$ we are moving and modify the semi-linear set so that we do not count the base-vector again when continuing the run in $\Amc_i$.


Formally, assume $C=\bigcup_{i\leq k}C_i$. 
 Let $\Amc' = (Q', \Sigma, (q_0, 0), \Delta', F', C')$ of dimension $d+k+1$,
 where 
 \begin{align*}
 Q' =&\ Q\times \{0,1,\ldots, k\}\cup F\times \{1,\ldots, k\}\times \{R\}, \\
 F' =&\ F\times \{1,\ldots, k\}\times \{R\}\cup F\times\{1,\ldots, k\}, \\
\Delta' =&\ \{((p,0), a, \vbf \cdot \0^{k+1}, (q,0)), ((p,i), a, \vbf \cdot \ebf_i^{k+1}, (q,i)) \mid (p, a, \vbf, q) \in \Delta, 1\leq i\leq k\} \\
\cup&\ \{((q,0), a, \vbf \cdot (\ebf_i^{k+1}+\ebf_{k+1}^{k+1}), (f,i,R)\} \mid (q,a,\vbf, f) \in \Delta,f \in F\}\\
\cup&\ \{((f,i,R), a, \vbf \cdot \ebf_i^{k+1}, (q,i)\} \mid (f,a,\vbf, q) \in \Delta, f \in F\}, \text{ and } \\
C' =&\ \bigcup_{i\leq k} C_i\cdot \{(\ebf_i^{k+1}+\ebf_{k+1}^{k+1})\}\cup \bigcup_{i\leq k} \vec{C}_i \cdot \{z\cdot \ebf_i^{k+1}\mid z\in \Nbb\},
 \end{align*}
 where $\vec{C}_i$ is defined as $C_i$ but without any base vectors.
 We prove that $P_\omega(\Amc) = W_\omega(\Amc')$.

\smallskip
 $\Rightarrow$ We first show $P_\omega(\Amc)\subseteq W_\omega(\Amc')$. Let $\alpha \in P_\omega(\Amc)$ with accepting run $r = r_1r_2r_3 \dots$ where $r_i = (p_{i-1}, \alpha_i, \vbf_i, p_i)$, and denote by $f_1< f_2< \dots$ the positions of accepting hits of $r$. By the infinite pigeonhole principle there is some $\ell\leq k$ such that infinitely many of these accepting hits are in the linear set $C_\ell$. Denote by $k_1<k_2<\ldots$ the subsequence of $f_1,f_2,\ldots$ of all accepting hits in $C_\ell$. 

 This means $p_{k_j} \in F$ and $\rho(r_1 \dots r_{k_j}) \in C_\ell$ for all $j \geq 1$. For $i < j$ let $r_{i,j} = r_{k_i} + 1 \dots r_{k_j}$.
 Now observe that $\rho(r_1 \dots r_{k_\ell}) = \rho(r_1 \dots r_{k_1}) + \rho(r_{1,2}) + \rho(r_{2,3}) + \dots +\rho(r_{\ell-1, \ell})$ for all $\ell \geq 1$. In particular, we have $\rho(r_1 \dots r_{k_1}) \in C_\ell$ and $r_{i, i+1} \in \vec{C}_\ell$ for all $i < \ell$.
 
 Let $\hat{r} = \hat{r}_1 \dots \hat{r}_{k_1 - 1} ((p_{k_1 - 1},0), \alpha_{k_1}, \vbf_{k_1} \cdot (\ebf_\ell^{k+1}+\ebf_{k+1}^{k+1}), (f,\ell, R))$, where $\hat{r}_i = ((p_{i-1},0)$, $\alpha_i, \vbf_i \cdot \ebf_0^{k+1}, (p_i,0))$ for all $i \leq k_1-1$, 
 and let $r' = ((f,\ell, R), \alpha_{k_1 + 1}, \vbf_{k_1+1}\cdot \ebf_\ell^{k+1}, (p'_{k_1 + 1},\ell))$ $r'_{k_i + 2} r'_{k_i + 3} \dots$ where $r'_i = ((p'_{i-1},\ell), \alpha_i, \vbf \cdot \ebf_\ell^{k+1}, (p'_i,\ell))$ for all $i \geq k_1+1$.
 Then $\hat{r} r'$ is a run of $\Amc'$ on $\alpha$. Furthermore, this run is accepting, as we can translate the positions of accepting hits of $r$ one-to-one to reset positions of $\hat{r} r'$, and by the choice of $\Delta'$, $C'$ and the observations above. Hence $\alpha \in W_\omega(\Amc')$.

\smallskip
 $\Leftarrow$ To see that $W_\omega(\Amc')\subseteq P_\omega(\Amc)$, let $\alpha \in W_\omega(\Amc')$ with accepting run $r' = r'_1r'_2r'_3 \dots$, 
 where $r'_i = (p_{i-1}', \alpha_i, \vbf_i \cdot u,  p_i')$ where ${p}_i' \in Q'$ and $u \in \{\0^{k+1}, \ebf_\ell^{k+1}, (\ebf_\ell^{k+1}+\ebf_{k+1}^{k+1})\mid$ \mbox{$1\leq \ell \leq k\}$}, and denote by $0 = k_0 < k_1 <k_2 \dots$ the reset positions of $r'$. 
 As no state in the first copy of $\Amc$ is resetting and by the further construction of $\Amc'$, we have $p_i' = (p_i,0)$ for all $i<k_1$, $p_{k_1}'= (p_{k_1},\ell,R)$, and $p_i'=(p_i,\ell)$ for all $i>k_1$ for some $\ell\leq k$. 
 As the transition from $p_{k_1-1}'$ to $p_{k_1}'$ is the only one where the $k+1$rst of the new counters is set to $1$, and such a transition appears only once in $r'$ (there is no way back from the copy $\Amc_\ell$), we have $\rho(r'_1 \dots r_{k_1}') \in C_\ell \cdot \{(\ebf_\ell^{k+1}+\ebf_{k+1}^{k+1})\}$, and for all $i > 1$ we have $\rho(r'_{k_{i - 1} + 1} \dots r'_{k_i}) \in \vec{C}_\ell \cdot \{z\ebf_\ell^{k+1}\mid z\in \Nbb\}$.
 For every $i\geq 1$ let $r_i = (p_{i-1}, \alpha_i, \vbf_i, p_i)$.
Then $r = r_1 r_2 r_3 \dots$ is of $\Amc$ on $\alpha$. Furthermore $r$ is accepting, as the reset positions of $r'$ translate to accepting hits in $r$ one-to-one. 
To see this, observe that for all vectors $\ubf \in C_\ell$ and $\vbf \in \vec{C}_\ell$ we have $\ubf + \vbf \in C_\ell$. As $\rho(r_1 \dots r_{k_1}) = \vbf_1 + \dots + \vbf_{k_1} \in C$ and $\rho(r_{k_{i-1}+ 1} \dots r_{k_i}) = \vbf_{k_{i-1}+1} + \dots + \vbf_{k_i} \in \vec{C}$ for all $i > 1$, the run $r$ is indeed accepting, hence $\alpha \in P_\omega(\Amc)$.

The automaton $\Amc'$ has at most $(k+2)|Q|$ states and can obviously be constructed in polynomial time.
\end{proof}

We now show that non-emptiness for PPBA is $\NP$-hard. A similar proof was sketched in \cite[Proposition III.2]{emptynp}.

\begin{lemma}
Non-emptiness for PPBA is $\NP$-hard.
\end{lemma}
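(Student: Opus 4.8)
The plan is to reduce from a standard $\NP$-complete problem whose solution naturally corresponds to a counter condition — the obvious candidates being \textsf{SUBSET-SUM} or, more conveniently here, an \NP-complete problem expressible through a semi-linear membership constraint. Since non-emptiness of a PPBA asks whether there is \emph{some} infinite word with an accepting run, and since a PPBA needs the Parikh condition to hold for infinitely many prefixes, I would design a gadget where the hard combinatorial choice is forced into a single initial segment of the run, after which the automaton can trivially loop forever in an accepting state with the zero vector. This way the infinitary acceptance condition is rendered harmless and the problem collapses to a finite-word Parikh reachability question, which is exactly where the known $\NP$-hardness of PA non-emptiness lives (as referenced via \cite{emptynp}).

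Concretely, I would reduce from \textsf{SUBSET-SUM}: given target $t$ and numbers $a_1,\dots,a_n\in\Nbb$, decide whether some subset sums to $t$. I would build a one-dimensional PPBA over a small alphabet (say $\{\#\}$ suffices, or $\{a_1,\dots,a_n\}$ read in order) that, reading an initial block, nondeterministically chooses for each $i$ whether to ``include'' $a_i$ by taking a transition labeled with the vector $(a_i)$ or to ``skip'' it with vector $(0)$. After processing all $n$ items, the run enters a single accepting state $f$ carrying a self-loop labeled with the zero vector, and we set $C=\{t\}$ (a singleton, hence linear). Then a word is accepted iff the run reaches $f$ with accumulated counter value exactly $t$ and thereafter loops in $f$ forever, producing infinitely many accepting prefixes all with the same Parikh value $t\in C$. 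Thus $P_\omega(\Amc)\neq\varnothing$ if and only if the \textsf{SUBSET-SUM} instance is a yes-instance. The construction is clearly polynomial, and all numbers are encoded in binary as required.

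The main subtlety — and the point I would be most careful about — is ensuring that the infinitary prefix-acceptance condition does not accidentally create acceptance through some unintended run, and that it genuinely forces the reachability-with-value-$t$ event. The $\epsilon$-free zero-loop on $f$ guarantees that once the counter equals $t$ upon the first visit to $f$, every longer prefix ending in $f$ still has value $t$, so infinitely many accepting hits are produced; conversely, because no transition after $f$ changes the counter, any accepting run must already have value exactly $t$ at its first accepting hit, pinning down a valid subset. I would note explicitly that this uses only dimension $d=1$, matching the claim that hardness holds even for one-dimensional PPBA, and that the same encoding works uniformly for the SPBA/WPBA reset models since a single reset at $f$ with value $t$ followed by zero-loops satisfies the reset condition as well. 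The only routine verification left is checking that the self-loop guarantees an infinite run on some fixed tail letter, which is immediate.
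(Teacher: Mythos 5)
Your proposal is correct and is essentially the same reduction as the paper's: a chain of states offering for each $m_i$ a transition labeled $m_i$ or $0$, followed by an accepting sink with a zero-labeled self-loop on a dummy letter and the singleton semi-linear set $\{\ell\}$. The paper merely fixes the alphabet as $\{x_1,\dots,x_k,x_D\}$ rather than a single symbol, which changes nothing substantive.
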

\begin{proof}
We (polynomial-time many-one) reduce from \emph{subset sum}, which is known to be $\NP$-complete \cite{intract}. 
An instance of this problem consists of a finite set $M \subseteq \Nbb$ and a threshold $\ell$, and asks if there is a subset $S \subseteq M$ such that $\sum_{s \in S}s = \ell$. 
Let $\Imc$ be an instance of subset sum with $M = \{m_1, \dots, m_k\}$ and threshold $\ell$. 
Intuitively, we encode a solution of $\Imc$ into a finite word of length $k$ over $\Sigma = \{x_1, \dots, x_k\}$, and pad it using an infinite sequence of dummy symbols $x_D$. 
Afterwards, we construct a PPBA $\Amc_\Imc$ of dimension 1 that accepts valid solutions of $\Imc$ only. 

We choose $\Amc_\Imc = (\{q_i \mid 1 \leq i\leq m\}\, \cup$ $\{q_0, q_f\}, \Sigma \cup \{x_D\}, q_0, \Delta_\Imc, \{q_f\}, \{\ell\})$ where
\begin{align*}
\Delta_\Imc =&\ \{(q_{i-1}, x_i, m_i, q_i), (q_{i-1}, x_i, 0, q_i)  \mid 1 \leq i \leq m\} \\
 \cup&\ \{(q_m, x_D, 0, q_f), (q_f, x_D, 0, q_f)\}.
\end{align*}

We claim that $\Imc$ has a solution if and only if $P_\omega(\Amc_\Imc) \neq \varnothing$.

\smallskip
$\Rightarrow$ Let $S \subseteq M$ be a solution for $\Imc$. It is easy to see that $\Amc_\Imc$ accepts the infinite word $x_1 \dots x_m x_D^\omega$, as the run $r = r_1 \dots r_m (q_m, x_D, 0, q_f) (q_f, x_D, 0, q_f)^\omega$, where $r_i = (q_{i-1}, x_i, m_i, q_i)$ if $m_i \in S$, and $r_i = (q_{i-1}, x_i, 0, q_i)$ otherwise, is accepting.
Thus, $P_\omega(\Amc_\Imc) \neq \varnothing$.

\smallskip
$\Leftarrow$ Let $\alpha$ an infinite word accepted by $\Amc_\Imc$. By the choice of $\Amc_\Imc$ we have $\alpha = x_1\dots x_m x_D^\omega$ with accepting run $r = r_1 \dots r_m (q_m, x_D, 0, q_f) (q_f, x_D, 0, q_f)^\omega$. 
We construct a solution $S$ from $r$, as follows.
For every $1 \leq i \leq m$ we add $m_i$ to $S$ if and only if $r_i = (q_{i-1}, x_i, m_i, q_i)$. Similar to above, the set $S$ is a solution for $\Imc$.

\smallskip
The size of $\Amc_\Imc$ is linear in the size of $\Imc$, hence, this transformation is a polynomial-time reduction.
\end{proof}

In the next step, we show that non-emptiness is in $\NP$ for $\varepsilon$-MSPBA. As proved by Fernau and Stiebe, all CM that accept at least one infinite word, also accept at least one \emph{ultimately periodic} infinite word, that is, a word of the form $uv^\omega$ with $u \in \Sigma^*$ and $v \in \Sigma^+$. 
We show that this is also true for $\varepsilon$-MSPBA (hence for all of our models).
\begin{lemma}
\label{lem:periodic}
    Let $\Amc$ be an $\varepsilon$-MSPBA with alphabet $\Sigma$. If $S_\omega(\Amc) \neq \varnothing$, there is a word of the form $uv^\omega \in S_\omega(\Amc)$ where $u\in \Sigma^*$ and $v \in \Sigma^+$.
\end{lemma}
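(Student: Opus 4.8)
The plan is a pigeonhole-and-pumping argument that exploits the defining feature of the strong reset condition: the counters are reset to $\0$ at every visit of a resetting state, so the only information carried across a reset is the current state. This makes the reset segments essentially independent, which is exactly what a pumping construction needs.

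First I would fix some $\alpha \in S_\omega(\Amc)$ together with an accepting run $r = r_1 r_2 r_3 \dots$, where $r_i = (p_{i-1}, \gamma_i, \vbf_i, p_i)$, and let $0 = k_0 < k_1 < k_2 < \dots$ be its reset positions, so that $p_{k_i} \in F$ and $\rho(r_{k_{i-1}+1} \dots r_{k_i}) \in C(p_{k_i})$ for all $i \geq 1$. Recall that for an SPBA the $k_i$ are \emph{all} the positions at which an accepting state is visited. Since $F$ is finite, the infinite pigeonhole principle yields a resetting state $f \in F$ with $p_{k_i} = f$ for infinitely many $i$; write $m_1 < m_2 < \dots$ for the subsequence of reset positions landing in $f$.

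Next I would choose the pumping window. Because the run lies in $(\Emc^* \Delta)^\omega$ and consumes the infinite word $\alpha$, it contains infinitely many non-$\varepsilon$ transitions, so these cannot all occur before position $m_1$; hence there is an index $b$ with at least one $\Delta$-transition strictly between $m_1$ and $m_b$. Setting $k_a = m_1$ and $k_b = m_b$, let $u \in \Sigma^*$ be the word read by $r_1 \dots r_{k_a}$ and $v$ the word read by $r_{k_a+1} \dots r_{k_b}$; the choice of $b$ guarantees $v \in \Sigma^+$. I would then assemble the periodic run on $uv^\omega$ by following $r_1 \dots r_{k_a}$ and afterwards repeating the block $r_{k_a+1} \dots r_{k_b}$ forever. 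This is a legal run because the block both starts and ends in state $f$, and it lies in $(\Emc^*\Delta)^\omega$ since each copy reads $v \in \Sigma^+$ and therefore uses at least one $\Delta$-transition. The resetting states it visits are exactly the $k_i$ inside the prefix and, within each copy of the block, the resets $k_{a+1}, \dots, k_b$, and no others; hence the strong condition is respected. Each reset segment of the new run begins immediately after a reset (counters at $\0$) and is a verbatim copy of a reset segment of $r$, so its extended Parikh image is unchanged and still lies in the required $C(p_{k_i})$. Thus the new run has infinitely many resets, all satisfying their Parikh conditions, witnessing $uv^\omega \in S_\omega(\Amc)$.

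The only genuine obstacle is ensuring $v \in \Sigma^+$ rather than $v = \varepsilon$: a single reset segment, or even an entire window between two $f$-resets, may consist solely of $\varepsilon$-transitions, so one cannot simply take two \emph{consecutive} $f$-resets and pump between them. This is precisely what the counting argument above circumvents, using that $\alpha$ is infinite and fully consumed so that infinitely many $\Delta$-transitions are spread throughout the run. Everything else is routine verification that resetting and the Parikh conditions transfer segment-by-segment, and the argument applies verbatim to the multi-set variant since each segment reuses the same states and hence the same sets $C(p_{k_i})$.
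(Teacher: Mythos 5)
Your proposal is correct and follows essentially the same route as the paper's proof: pick two reset positions carrying the same resetting state with at least one input symbol read in between (the paper asserts their existence directly, you justify it via pigeonhole plus the fact that infinitely many $\Delta$-transitions occur), then loop the block between them forever, noting that each reset segment of the new run is a verbatim copy of one from the original run. Your additional care about the $v \neq \varepsilon$ issue and the segment-by-segment verification of the strong reset condition makes explicit what the paper leaves implicit, but it is the same argument.
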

\begin{proof}
Assume $S_\omega(\Amc) \neq \varnothing$. Then there is an infinite word $\alpha \in S_\omega(\Amc)$ with accepting run $r = r_1 r_2 r_3 \dots$, where $r_i = (p_{i-1}, \gamma_i, \vbf_i, p_i)$. Let $k_1 < k_2 < \dots$ be the positions of all resetting states in $r$. Let $k_i < k_j$ be two such positions such that $p_{k_i} = p_{k_j}$ and $\Amc$ has read at least one symbol from $\alpha$ after leaving $p_{k_i}$ and entering $p_{k_j}$.
Let $u = \gamma_1 \dots \gamma_{k_i}$ be the prefix of $\alpha$ read upon visiting $p_{k_i}$ and $v = \gamma_{k_i + 1} \dots \gamma_{k_j}$ the infix read between $p_{k_i}$ and $p_{k_j}$. Note that $v \neq \varepsilon$ by the choice of $k_j$.
Then $\Amc$ also accepts $uv^\omega$, as $r_1 \dots r_{k_i} (r_{k_i + 1} \dots r_{k_j})^\omega$ is an accepting run of $\Amc$ on $uv^\omega$ by definition.
\end{proof}

This helps giving an $\NP$-algorithm that solves non-emptiness for $\varepsilon$-MSPBA

\begin{lemma}
Non-emptiness for $\varepsilon$-MSPBA is in $\NP$.
\end{lemma}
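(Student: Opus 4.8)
The plan is to combine \Cref{lem:periodic} with a segment decomposition of accepting runs and the known fact that non-emptiness of Parikh automata on finite words is in \NP~\cite{emptynp}. First I would use \Cref{lem:periodic} to reduce non-emptiness to the existence of an accepting lasso $uv^\omega$: whenever $S_\omega(\Amc)\neq\varnothing$ there is an accepting run of the shape $r_1\dots r_{k_i}(r_{k_i+1}\dots r_{k_j})^\omega$ with $p_{k_i}=p_{k_j}\in F$. Since in an SPBA every visit of a resetting state forces a reset, any accepting run splits into \emph{segments}, each running between two consecutive resets through non-resetting interior states only, and each segment's extended Parikh image (computed freshly after the reset) must lie in the semi-linear set of the resetting state at its right end. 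This suggests working with a \emph{segment graph} on the set $F$ of resetting states, augmented with $q_0$ as a possible start: I put an edge $f\to f'$ labelled by a chosen linear component $L$ of $C(f')$ whenever there is a no-reset path from $f$ to $f'$ in $\Amc$ (using transitions of $\Delta\cup\Emc$, with interior states outside $F$) whose vector sum lies in $L$.

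Next I would show that $S_\omega(\Amc)\neq\varnothing$ holds if and only if the segment graph admits a simple path from $q_0$ to some resetting state $g$ whose first edge leaves $q_0$, together with a simple cycle through $g$ that reads at least one input symbol (equivalently, at least one of its segment edges is realised by a path containing a non-$\varepsilon$ transition). For the \emph{if} direction, concatenating the concrete segment paths certifying the cycle edges yields a closed walk from $g$ to $g$ reading a nonempty word $v$ and resetting validly at each resetting state it passes; repeating it gives an accepting run, preceded by the stem realising $u$. For the \emph{only if} direction, I would take the lasso from \Cref{lem:periodic}, read its stem and loop as walks in the segment graph, reduce the stem to a simple path, and extract from the (nonempty-reading) loop a simple cycle that still reads a symbol. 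Both the stem and the cycle then have length at most $|F|\le|Q|$, so this combinatorial skeleton is of polynomial size.

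The $\NP$-algorithm would then guess this skeleton: the state $g$, the simple stem path and the simple cycle, and for every edge the chosen linear component of the target resetting state's semi-linear set. Each edge of the skeleton is now a non-emptiness question for a Parikh automaton on finite words, namely whether the sub-automaton restricted to non-resetting interior states admits a path between the two endpoints whose vector sum lies in the chosen linear set (and, for one marked cycle edge, reads at least one symbol, which can be enforced by a standard gadget or an extra counter). Since non-emptiness of PA on finite words is in $\NP$~\cite{emptynp}, each such check comes with a polynomial-size certificate; I would guess all these certificates simultaneously and verify them in polynomial time. As there are only polynomially many edges, the entire guess is polynomial and so is the verification, placing non-emptiness of $\varepsilon$-MSPBA in $\NP$.

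The main obstacle is to turn per-edge realizability genuinely into a finite-word PA non-emptiness instance while keeping all witnesses polynomial. Concretely, the certificate for a single edge must encode a \emph{connected} walk from $f$ to $f'$ with interior states outside $F$ (not merely a degree-balanced multiset of transitions) together with the multipliers witnessing membership in the linear set; this is exactly the content that makes PA non-emptiness $\NP$-complete, so invoking \cite{emptynp} as a black box avoids re-deriving the small-model bounds on path length and on the coefficients of the period vectors. Care is also needed with $\varepsilon$-transitions, which contribute to vector sums without consuming input: the finite-word instances must retain them (they may be removed via \Cref{lem:KR-finite} if preferred), and the requirement $v\in\Sigma^+$ must be threaded through the one marked cycle edge to exclude pure-$\varepsilon$ loops, which by the definition of a run cannot occur cofinally in any accepting run anyway.
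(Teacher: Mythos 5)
Your proposal is correct and follows essentially the same route as the paper: reduce to ultimately periodic witnesses via \Cref{lem:periodic}, guess a polynomial-size skeleton of resetting states forming a stem plus a cycle (the paper guesses a sequence $f_1,\dots,f_k$ with $k\le 2|F|$ and $f_i=f_k$, which is exactly your simple path plus simple cycle), reduce each inter-reset segment to a non-emptiness check for a PA on finite words using \cite{emptynp} as a black box, and enforce $v\in\Sigma^+$ on the cycle. The only cosmetic differences are that the paper first normalizes the $\varepsilon$-MSPBA to an $\varepsilon$-SPBA via \Cref{lem:MSPBAtoSPBA} rather than guessing a linear component of $C(f')$ per edge, and it replaces $\varepsilon$-transitions by a fresh letter and uses a product automaton to check that the cycle reads a symbol, rather than marking a single edge.
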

\begin{proof}
By \Cref{lem:MSPBAtoSPBA} we can convert any $\varepsilon$-MSPBA into an equivalent $\varepsilon$-SPBA (it is easy to see that the construction of the lemma works in polynomial time). Hence, we show how to solve non-emptiness for $\varepsilon$-SBPA.
Let $\Amc = (Q, \Sigma, q_0, \Delta, \Emc, F, C)$ be an $\varepsilon$-SPBA. By Lemma~\ref{lem:periodic}, it suffices to check whether $\Amc$ accepts 
an infinite word $u v^\omega$ with $u \in \Sigma^*$ and $v \in \Sigma^+$. If such a word exists, we may assume that there is an accepting run $r_u r_v^\omega$ of $\Amc$ on $uv$ where neither $r_u$ nor $r_v$ visit the same accepting state twice.
For any $p,q \in Q$ we
define $\Amc_{p \Rightarrow q} = (Q \cup \{q_0'\}, \Sigma, q_0', \Delta', \Emc', \{q\}, C)$,
where $\Delta' = \{(q_1, a, \vbf, q_2) \mid (q_1, a, \vbf, q_2) \in \Delta, q_1 \notin F\}
\cup \{(q_0', a, \vbf, q_2) \mid (p, a, \vbf, q_2) \in \Delta\}$ and, analogously,
\mbox{$\Emc' = \{(q_1, \varepsilon, \vbf, q_2) \mid (q_1, \varepsilon, \vbf, q_2) \in \Emc, q_1 \notin F\}
\cup \{(q_0', \varepsilon, \vbf, q_2) \mid (p, \varepsilon, \vbf, q_2) \in \Delta\}$}.

\smallskip
The following NP algorithm solves non-emptiness:
\begin{enumerate}
\item Guess a sequence $f_1, \ldots, f_k$ of accepting states with $k \leq 2|F|$ such that $f_i = f_k$ for some $i \leq k$.
\item Verify that $L(\Amc_{q_0 \Rightarrow f_1}) \neq \varnothing$ and $L(\Amc_{f_j \Rightarrow f_{j+1}}) \neq \varnothing$ for all $1 \leq j < k$ (interpreted as PA over finite words).\enlargethispage{4mm}
\item Verify that $L(\Amc_{f_i \Rightarrow f_{i+1}}) \cdot \ldots \cdot L(\Amc_{f_{k-1} \Rightarrow f_k}) \not\subseteq \{\varepsilon\}$.
\end{enumerate}

The second step can be done by adding a fresh symbol (say $e$) to the automata and replacing every $\varepsilon$-transition with an $e$-transition (observe that this does construction does not change the emptiness behavior, and is, in contrast to the construction of \Cref{lem:KR-finite}, computable in polynomial time).
Afterwards we use the NP-algorithm for non-emptiness of PA \cite{emptynp}.

The third step essentially states that not all $L(\Amc_{f_j \Rightarrow f_{j+1}})$ for $j \geq i$ may only accept the empty word, as we require $v \neq \varepsilon$. To check this property, we can construct a PA\footnote{This is possible in polynomial time by a construction very similar to the one of \Cref{lem:concatenation}. However, to the best of our knowledge there is no explicit construction for concatenation in the literature for PA on finite words.} recognizing $L(\Amc_{f_i \Rightarrow f_{i+1}}) \cdot \ldots \cdot L(\Amc_{f_{k-1} \Rightarrow f_k})$, and again replace every $\varepsilon$-transition with an $e$-transition. Finally, we build the product automaton with the PA (NFA) that recognizes the language $\{w \in (\Sigma \cup \{e\})^* \mid w \text{ contains at least 1 symbol from } \Sigma\}$ and test non-emptiness for the resulting PA.
\end{proof}

Now we show that universality for PPBA (and thus for all our models) is undecidable.

\begin{lemma}
\label{lem:universality}
Universality for PPBA is undecidable.
\end{lemma}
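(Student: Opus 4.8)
The plan is to reduce from a known undecidable problem about Turing machines or two-counter machines, using the standard technique of encoding (non-)halting computations as infinite words and exploiting the fact that the complement of a ``valid computation'' condition can often be captured by a Parikh condition. Concretely, I would reduce from the complement of the halting problem (or the emptiness problem for Turing machines), which is undecidable. The key idea is that universality of a PPBA~$\Amc_M$ is equivalent to the statement that \emph{every} infinite word is accepted, which I would arrange to hold precisely when a given machine~$M$ does \emph{not} have a halting computation. Thus the set of non-accepted words should be exactly the set of (suitably encoded) valid halting computations of~$M$.

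First I would fix an encoding of configurations and computation histories of~$M$ as finite words over a suitable alphabet~$\Sigma$, so that a valid halting computation is encoded as a finite word $w = c_0 \# c_1 \# \cdots \# c_t$ (separated by a delimiter~$\#$) where $c_0$ is an initial configuration, $c_t$ is halting, and each $c_{i+1}$ follows from $c_i$ by the transition relation of~$M$. An infinite word would then encode such a history followed by some infinite padding. The task is to build a PPBA~$\Amc_M$ with $P_\omega(\Amc_M) = \Sigma^\omega$ if and only if $M$ has no halting computation, \ie, $\Amc_M$ rejects an infinite word if and only if that word encodes a valid halting computation. Since PPBA are nondeterministic and accept by the prefix condition (infinitely many accepting hits with the running counter sum in~$C$), I would design~$\Amc_M$ to \emph{guess a defect}: a position where the input fails to be a valid encoding of a halting computation. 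Each possible type of defect (wrong initial configuration, an incorrect transition between consecutive configurations, the computation never reaching a halting state, or a malformed encoding) should be checkable by a nondeterministic branch, and the local consistency conditions between consecutive configurations, which amount to comparing symbol counts or positions, are exactly the kind of condition a single Parikh counter comparison can verify.

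The main obstacle, and the heart of the construction, will be verifying the transition-consistency defect using the Parikh condition. Because Parikh automata count occurrences but cannot compare the \emph{order} of symbols, the standard trick is to pick out two adjacent configuration blocks $c_i \# c_{i+1}$ and check that they are \emph{not} related by a valid transition; a mismatch at a particular cell can be detected by guessing the offending position and using counters to compare the relevant symbols across the block boundary (the classical ``$a^n b^n$''-style counter comparison used throughout the paper, \eg, in \Cref{ex:pba}). I would need to ensure that a defect anywhere in the (possibly unbounded) history can be caught, and that the prefix-acceptance condition then yields infinitely many accepting hits so that defective words are genuinely accepted while valid halting computations are the unique rejected words.

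Finally, I would argue correctness in both directions: if $M$ has no halting computation, then every infinite word contains some defect (or fails to encode a halting computation for the trivial reason that no such computation exists), so $\Amc_M$ accepts everything and is universal; conversely, if $M$ does halt, the encoding of its halting computation (suitably padded to infinite length) contains no defect and is therefore rejected, so $\Amc_M$ is not universal. Since universality of~$\Amc_M$ thus decides the complement of the halting problem, universality for PPBA is undecidable. I would note that the same $\Amc_M$ witnesses undecidability of inclusion and equivalence by comparing against a trivial PPBA recognizing $\Sigma^\omega$, as already indicated in the surrounding text.
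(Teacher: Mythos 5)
Your approach is sound in outline, but it takes a genuinely different route from the paper. The paper does not re-derive undecidability from machine computations at all: it reduces from universality of Parikh automata on \emph{finite} words, which is already known to be undecidable \cite{klaedtkeruess}. Given an arbitrary PA $\Amc$ over $\Sigma$, it builds a PPBA recognizing $L(\Amc)\cdot\{x_D\}^\omega \cup \Sigma^\omega \cup M$, where $M$ is the ($\omega$-regular) set of malformed words containing an $x_D$ followed by a non-$x_D$ symbol; by closure under union this is a PPBA language, and it equals $(\Sigma\cup\{x_D\})^\omega$ if and only if $L(\Amc)=\Sigma^*$. The whole argument is a short padding construction plus a two-paragraph correctness check. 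Your plan effectively inlines the proof of the cited finite-word result: encoding halting computations and detecting a nondeterministically guessed defect with a single counter comparison is exactly the classical argument for PA (equivalently, reversal-bounded counter machines), and lifting it to the prefix-acceptance condition does work --- after verifying the defect you freeze the counters with zero vectors and loop in an accepting state so the running sum stays in $C$ forever, while infinite non-halting computations and bad paddings are caught by purely $\omega$-regular branches. What your route buys is self-containedness; what it costs is that the step you explicitly defer, the transition-consistency defect check, is the entire technical content, and for Turing machines the cross-configuration position matching requires real care (it is markedly cleaner to reduce from two-counter machines, where consecutive configurations are compared by counter values alone and the negation of the update equations is visibly semi-linear). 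The paper's modular reduction sidesteps all of this by citation, at the price of depending on the finite-word result as a black box.
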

\begin{proof}
We reduce from the universality problem for PA, which is known to be undecidable~\cite{klaedtkeruess}. Let $\Amc = (Q, \Sigma, q_0, \Delta, F, C)$ be an arbitrary PA. We construct a PPBA $\Amc'$ that is universal if and only if $\Amc$ is universal.
The idea is the following. The PPBA $\Amc'$ accepts all words that are accepted by $\Amc$, followed by an infinite sequence of dummy symbols~$x_D$. Furthermore, $\Amc'$ accepts all words that are not a member of $\Sigma^* \cdot \{x_D\}^\omega$, \ie, all ``malformed" words. In the first step, we construct a PPBA $\Amc_1$, where 
\[\Amc_1 = (Q \cup \{q_f\}, \Sigma \cup \{x_D\}, q_0, \Delta_1, \{q_f\}, C)\] 

\vspace{-3mm}
with 
\vspace{-3mm}

\[\Delta_1 = \Delta \cup \{(q, x_D, \0, q_f) \mid q \in F\} \cup \{(q_f, x_D, \0, q_f)\}.\]

Observe that $P_\omega(\Amc_1) = L(\Amc) \cdot \{x_D\}^\omega$. 
Let $\Amc_2$ with $P_\omega(\Amc_2) = \Sigma^\omega$ and let $\Amc_3$ be a PPBA that accepts all infinite words that contain a dummy symbol followed by a non-dummy symbol, \ie, $P_\omega(\Amc_3) = (\Sigma \cup \{x_D\})^* \cdot \{x_D\} \cdot \Sigma \cdot \{\Sigma \cup \{x_D\}\}^\omega$ (note that these languages are even $\omega$-regular). Let $L = L_\omega(\Amc_1) \cup L_\omega(\Amc_2) \cup L_\omega(\Amc_3)$. According to \Cref{cor:closure}, we can construct a PPBA that recognizes $L$. Let $\Amc'$ be this PPBA. Then $L(\Amc) = \Sigma^*$ if and only if $P_\omega(\Amc') = (\Sigma \cup \{x_D\})^\omega$.

\smallskip
$\Rightarrow$  Let $\Amc$ be universal, \ie, $\Amc$ accepts all words $w \in \Sigma^*$, that is, according to the construction of $\Amc_1$, all infinite words of the form $w \cdot x_D^\omega$ will be accepted by $\Amc'$. Since $\Amc_2$ and $\Amc_3$ accept all infinite words in $(\Sigma \cup \{x_D\})^\omega \setminus (\Sigma^* \cdot \{x_D\}^\omega)$, the PPBA $\Amc'$ accepts all infinite words over $\Sigma \cup \{x_D\}$, \ie, $\Amc'$ is universal.

$\Leftarrow$ Let $\Amc'$ be universal. We argue by contradiction. Suppose $\Amc$ is not universal, \ie, there is a word $w \in \Sigma^*$ that is rejected by $\Amc$. Then $\alpha = w \cdot x_D^\omega$ will be rejected by $\Amc_1$. Moreover, $\Amc_2$ and $\Amc_3$ do not accept $\alpha$ either. Hence, $\alpha \not \in P_\omega(\Amc')$, a contradiction.
\end{proof}
\begin{remark}
 This proof works similarly if we interpret 
 $\Amc'$ as an SPBA or WPBA. The only difference is that we consider the normalized automaton $\Amc_N$ of $\Amc$ and check the membership of $\varepsilon$ separately, which is still computable as the membership problem for PA is decidable (in NP) \cite{klaedtkeruess, emptynp}.
\end{remark}

Knowing that universality is undecidable for PPBA, we can derive the undecidability for equivalence and inclusion easily as follows: Let $\Sigma$ be an arbitrary alphabet and $\Amc_\Sigma$ be a PPBA with $P_\omega(\Amc_\Sigma) = \Sigma^\omega$. Let $\Amc$ be an arbitrary PPBA. We observe that $\Amc$ is universal if and only if $P_\omega(\Amc) = P_\omega(\Amc_\Sigma)$ if and only if $P_\omega(\Amc_\Sigma) \subseteq P_\omega(\Amc, C)$.
Again, these observations hold for SPBA, WPBA, and all equivalent models.

\section{Conclusion and Outlook}\label{sec:conclusion}
We have introduced Parikh-Büchi automata with different acceptance conditions.
We have shown that PBA with strong and weak reset acceptance condition describe the same class of $\omega$-languages, which is a strict superclass of the class PPBA-recognizable $\omega$-languages. 
The latter class is equivalent to the class of blind $k$-counter machines introduced by Fernau and Stiebe. As a side-product we show that all our models admit $\varepsilon$-elimination and are equivalent to their multi-counterparts. Finally, we have studied common decision problems.

An interesting open question remains the search for an (intuitive) model that captures exactly~$\LPAomega$. Vice versa, characterizations of $\LPrefix$ and $\LReset$ in the spirit of Büchi's theorem are yet to be found. Furthermore, it is interesting to study deterministic PBA, as deterministic Büchi automata are already weaker than (non-deterministic) Büchi automata. Finally, Klaedtke and Ruess proposed to study a variant of PBA where the semi-linear sets are augmented with a symbol $\infty$ for infinity, and the extended Parikh image of an infinite word is computed accordingly using transfinite induction.

\newpage
\bibliographystyle{plain}
\bibliography{lit}
\end{document}